\newcites{appendix}{Appendix References}
\newcommand{\X}{\mathbf{X}}
\newcommand{\Y}{\mathbf{X}}
\newcommand{\A}{\mathbf{A}}
\newcommand{\Yhat}{\widehat{\Y}}
\newcommand{\Xtilde}{\widetilde{\X}}
\newcommand{\Ytilde}{\widetilde{\Y}}
\def\A{\mathbf{A}}
\def\Y{\mathbf{Y}}
\newcommand{\aprxrank}{r}
\newcommand{\transpose}{{\top}} 
\newcommand{\frob}{{\textnormal{F}}}
\newcommand{\supp}{\text{supp}}
\newcommand{\Tr}{\text{\textsc{Tr}}\mathopen{}}
\newcommand{\trace}{\Tr}
\newcommand{\OPT}{{\normalfont{\textsf{{\mbox{OPT}}}}}\xspace}
\newcommand{\kexp}{\mathsmaller{(k)}}
\newtheorem{theorem}{Theorem}
\newtheorem{lemma}{Lemma}
\newtheorem{fact}{Fact}
\newtheorem{corollary}{Corollary}
\newcommand{\eqdef}{{\triangleq}}
\definecolor{notecolor}{rgb}{0.122, 0.435, 0.698}
\newcommand{\mdfLABEL}[1]{%
	\node[align=left,anchor=north east, text width=3.2em,%
               outer sep=0pt,inner sep=0pt] at ($(O|-P) 
               -(0, 0.5em)$)
               {\textcolor{notecolor}{\textsc{Note}}};}
\newif\ifhidenotes
\newtheorem*{rep@theorem}{\rep@title}
\newcommand{\newreptheorem}[2]{%
\newenvironment{rep#1}[1]{%
 \def\rep@title{#2 \ref{##1}}%
 \begin{rep@theorem}}%
 {\end{rep@theorem}}}
\DeclareMathOperator*{\argmax}{arg\,max}
\DeclareMathOperator*{\argmin}{arg\,min}
\newcommand{\problemname}[1]{{\normalfont{\textsf{{\mbox{\textsc{#1}}}}}}\xspace}
\newcommand{\bcc}{\problemname{BCC}}
\newcommand{\cc}{\problemname{CC}}
\newcommand{\maxagree}{\problemname{MaxAgree}}
\newcommand{\mindisagree}{\problemname{MinDisagree}}
\newcommand{\agree}{{\normalfont{\textsf{\small{\mbox{\textsc{Agree}}}}}}\xspace}
\newcommand{\minus}{\text{$-$}\xspace}
\newcommand{\plus}{\text{$+$}\xspace}
\newcommand{\Tsvd}{\mathrm{T}_{\mathsmaller{\mathsf{SVD}}}\xspace}
\newcommand{\Tx}{\mathrm{T}_{\mathcal{X}}}
\newcommand{\Ty}{\mathrm{T}_{\mathcal{Y}}}
\newcommand{\lowrankalgoname}{\texttt{BiLinearLowRankSolver}}
\begin{document}

%
\runningauthor{
  Asteris,
  Kyrillidis,
  Papailiopoulos,
  Dimakis
}

\twocolumn[

\aistatstitle{Bipartite Correlation Clustering -- Maximizing Agreements}

\vspace{-.8em}
\aistatsauthor{%
Megasthenis Asteris \And
Anastasios Kyrillidis}%
\aistatsaddress{%
The University of Texas at Austin \And
The University of Texas at Austin}%
\vspace{-.8em}
\aistatsauthor{%
Dimitris Papailiopoulos \And
Alexandros G. Dimakis
}%
\aistatsaddress{%
University of California, Berkeley \And
The University of Texas at Austin
} 

%

] 

\begin{abstract}
In \emph{Bipartite Correlation Clustering} (\bcc) 
we are given a complete \emph{bipartite} graph $G$ with `\plus' and `\minus' edges,
and we seek a vertex clustering that maximizes the number of \emph{agreements}:
the number of all `\plus' edges within clusters plus all `\minus' edges cut across clusters.
\bcc is known to be NP-hard~\cite{amit2004bicluster}.

We present a novel approximation algorithm for $k$-\bcc, a variant of \bcc with an upper bound $k$ on the number of clusters.
Our algorithm outputs a $k$-clustering that provably achieves a number of agreements within a multiplicative ${(1-\delta)}$-factor from the optimal, for any desired accuracy $\delta$.
It relies on solving a combinatorially constrained bilinear maximization on the bi-adjacency matrix of $G$.
It runs in time exponential in~$k$ and~$\sfrac{1}{\delta}$, but linear in the size of the input.

Further, we show that, in the (unconstrained) \bcc setting,
an ${(1-\delta)}$-approximation can be achieved by
$O(\delta^{-1})$ clusters regardless of the size of the graph.
In turn, our $k$-\bcc algorithm implies an Efficient PTAS for the \bcc objective of maximizing agreements.

\end{abstract}


\section{Introduction}

Correlation Clustering (\cc)~\cite{bansal2004correlation} considers the task of partitioning a set of objects into clusters based on their pairwise relationships.
It arises naturally in several areas such as in network monitoring~\cite{ahn2015correlation}, document clustering~\cite{bansal2004correlation} and textual similarity for data integration~\cite{cohen2002learning}. 
In its simplest form, objects are represented as vertices in a {complete} graph whose edges are labeled `\plus' or `\minus' to encode \emph{similarity} or \emph{dissimilarity} among vertices, respectively.
The objective is to compute a vertex partition that maximizes the number of \emph{agreements} with the underlying graph,
\textit{i.e.}, the total number of `\plus' edges in the interior of clusters plus the number of `\minus' edges across clusters.
The number of output clusters is itself an optimization variable --{not} part of the input.
It may be meaningful, however, to restrict the number of output clusters to be at most $k$.
The constrained version is known as $k$-\cc
and similarly to the unconstrained problem, it is NP-hard~\cite{bansal2004correlation,giotis2006correlation,karpinski2009linear}.
A significant volume of work has focused on approximately solving the problem of maximizing the number of agreements (\maxagree), or the equivalent --yet more challenging in terms of approximation-- objective of minimizing the number of \emph{disagreements} (\mindisagree)~\cite{bansal2004correlation, demaine2006correlation, charikar2003clustering, ailon2008aggregating, van2009deterministic, ailon2009correlation}.


Bipartite Correlation Clustering (\bcc)
is a natural variant of \cc on bipartite graphs.
Given a complete {bipartite} graph $G=(U,V, E)$ with edges labeled `\plus' or `\minus',  the objective is once again to compute a clustering of the vertices that maximizes the number of agreements with the labeled pairs.
\bcc is a special case of the \emph{incomplete} \cc problem~\cite{charikar2003clustering}, where only a subset of vertices are connected and non-adjacent vertices do not affect the objective function.
The output clusters may contain vertices from either one or both sides of the graph.
Finally, we can define the $k$-\bcc variant that enforces an upper bound on the number of output clusters, similar to $k$-\cc for \cc.

The task of clustering the vertices of a bipartite graph is common across many areas of machine learning.
Applications include recommendation systems~\cite{symeonidis2007nearest, vlachos2014improving}, where analyzing the structure of a large sets of pairwise interactions (\textit{e.g.}, among users and products) allows useful predictions about future interactions,
gene expression data analysis \cite{amit2004bicluster, madeira2004biclustering} and graph partitioning problems in data mining \cite{fern2004solving, zha2001bipartite}.

Despite the practical interest in bipartite graphs,
there is limited work on \bcc
and it is focused on the theoretically more challenging \mindisagree objective:
\cite{amit2004bicluster}
established an $11$-approximation algorithm, while~\cite{ailon2012improved} achieved a $4$-approximation, the currently best known guarantee. 
Algorithms for incomplete \cc \cite{demaine2006correlation, charikar2003clustering, swamy2004correlation} can be applied to \bcc,
but they do not leverage the structure of the bipartite graph.
Moreover, existing approaches for incomplete \cc rely on LP or SDP solvers which scale poorly.

\paragraph{Our contributions}
We develop a novel approximation algorithm for $k$-\bcc 
with provable guarantees for the \maxagree objective.
Further, we show that under an appropriate configuration, our algorithm yields an Efficient Polynomial Time Approximation Scheme (EPTAS%
\footnote{
EPTAS
refers to an algorithm that approximates the solution of an optimization problem within a multiplicative $(1-\epsilon)$-factor, for any constant $\epsilon \in (0,1)$, and has complexity that scales arbitrarily in $1/\epsilon$, but as a constant order polynomial (independent of $\epsilon$) in the input size~$n$.
EPTAS is more efficient than a PTAS; 
for example, a running time of $O(n^{1/\epsilon})$ is considered a PTAS, but not an EPTAS.
}) %
for the unconstrained \bcc problem.
Our contributions can be summarized as follows:
\begin{enumerate}[leftmargin=*, itemsep=0.2em, topsep=0.2em]
\item
\emph{$k$-\bcc}:
Given a bipartite graph $G=(U,V,E)$, a parameter $k$, and any constant accuracy parameter ${\delta \in (0, 1)}$,
our algorithm computes a clustering of $U \cup V$ into at most $k$ clusters and achieves a number of agreements that lies within a ${(1-\delta)}$-factor from the optimal.
It runs in time exponential in~$k$ and~$\delta^{-1}$, but linear in the size of~$G$.
\item
\emph{\bcc}:
In the unconstrained \bcc setting, 
the optimal number of clusters may be anywhere from~$1$ to $|U|+|V|$.
We show that if one is willing to settle for a ${(1-\delta)}$-approximation of the \maxagree objective,
it suffices to use at most $O(\delta^{-1})$ clusters, regardless of the size of $G$.
In turn, under an appropriate configuration, our $k$-\bcc algorithm yields an EPTAS for the (unconstrained) \bcc problem.
\item
Our algorithm relies on formulating the $k$-\bcc/ \maxagree problem as a combinatorially constrained bilinear maximization
\vspace{-2pt}
\begin{align}
	\max_{\mathbf{X} \in \mathcal{X}, \mathbf{Y} \in \mathcal{Y}} \quad
	&   
	\trace\left(\mathbf{X}^\transpose \mathbf{B} \mathbf{Y} \right),
	\label{intro:bilinear}
\end{align}
where $\mathbf{B}$ is the bi-adjacency matrix of~$G$,
and $\mathcal{X}$, $\mathcal{Y}$ are the sets of cluster assignment matrices for $U$ and $V$, respectively.
In Sec.~\ref{sec:bilmax}, we briefly describe our approach for approximately solving~\eqref{intro:bilinear} and its guarantees under more general constraints.
%
\end{enumerate}
We note that our $k$-\bcc algorithm and its guarantees can be extended to \emph{incomplete}, but dense \bcc instances, where the input~$G$ is \emph{not} a complete bipartite graph, but $|E| = \Omega(|U|\cdot|V|)$
For simplicity, we restrict the description to the complete case.
Finally, we supplement our theoretical findings with experimental results on synthetic and real datasets.


\subsection{Related work}

There is extensive literature on \cc;
see~\cite{bonchi2014correlation} for a list of references.
Here, we focus on bipartite variants. 

\bcc was introduced by Amit in~\cite{amit2004bicluster}
along with an $11$-approximation algorithm for the \mindisagree objective,
based on a linear program (LP) with $O(|E|^{3})$ constraints.
In~\cite{ailon2012improved},
Ailon et al. proposed two algorithms for the same 
objective: 
\textit{i}) a deterministic $4$-approximation based on an LP formulation and de-randomization arguments by~\cite{van2009deterministic}, and 
\textit{ii}) a randomized $4$-approximation combinatorial algorithm.
The latter is computationally more efficient with complexity scaling linearly in the size of the input, compared to the
LP that has $O\bigl(\left(|V| + |U|\right)^3\bigr)$ constraints.

For the \emph{incomplete} \cc problem, which encompasses \bcc as a special case, \cite{demaine2006correlation} provided an LP-based $O\left(\log n \right)$-approximation for \mindisagree.
A similar result was achieved by~\cite{charikar2003clustering}. 
For the \maxagree objective, ~\cite{ailon2012improved, swamy2004correlation} proposed an SDP relaxation, similar to that for MAX $k$-CUT, and achieved a $0.7666$-approximation.
We are not aware of any results explicitly on the \maxagree objective for either \bcc or $k$-\bcc.
For comparison, in the $k$-\cc setting~\cite{bansal2004correlation} provided a simple $3$-approximation algorithm for ${k = 2}$,
while for ${k \ge 2}$ \cite{giotis2006correlation} provided a PTAS for \maxagree
and one for \mindisagree.
\cite{karpinski2009linear} improved on the latter utilizing approximations schemes to the Gale-Berlekamp switching game.
Table~\ref{table:algo_comp_summary} summarizes the aforementioned results.

Finally, we note that our algorithm relies on adapting ideas from~\cite{asteris2015sparseviabipartite} for approximately solving a combinatorially constrained quadratic maximization.

\begin{table*}[t!]
\centering
\small
\rowcolors{2}{white}{black!05!white}
\begin{tabular}{ccccccccccc}
  \toprule
  Ref. & & Min./Max. & & Guarantee & & Complexity & & D/R & & Setting \\ 
  \cmidrule{1-1} \cmidrule{3-3} \cmidrule{5-5} \cmidrule{7-7} \cmidrule{9-9} \cmidrule{11-11} 
    \cite{giotis2006correlation} & & \maxagree & & (E)PTAS & & $\sfrac{n}{\delta} \cdot k^{O\left(\delta^{-2} \log k \log (\sfrac{1}{\delta}) \right)}$ & & R  & & $k$-\cc \\ 
     \cite{giotis2006correlation} & & \mindisagree & & PTAS & & $n^{O(\sfrac{100^k}{\delta^2})} \cdot \log n$ & & R  & & $k$-\cc \\ 
     \cite{karpinski2009linear} & & \mindisagree & & PTAS & & $n^{O(\sfrac{9^k}{\delta^2})} \cdot \log n$ & & R  & & $k$-\cc \\          
     \cite{demaine2006correlation, charikar2003clustering} & & \mindisagree & & $O(\log n)$-\OPT & & LP & & D & & Inc. \cc \\
     \cite{swamy2004correlation} & & \maxagree & & $0.766$-\OPT & & SDP & & D & & Inc. \cc \\
     \cite{amit2004bicluster} & & \mindisagree & & 11-\OPT & & LP & & D & & \bcc \\
     \cite{ailon2012improved} & & \mindisagree & & 4-\OPT & & LP & & D & & \bcc \\
       \cite{ailon2012improved} & & \mindisagree & & 4-\OPT & & $|E|$ & & R & & \bcc \\
  \midrule
  \rowcolors{2}{white}{white}
       Ours & & \maxagree & & (E)PTAS & &
       $2^{O(\sfrac{k}{\delta^{2}} \cdot \log{\sfrac{\sqrt{k}}{\delta}})} \cdot (\delta^{-2}+k) \cdot {n} + \Tsvd(\delta^{-2})$
       & & R & & $k$-\bcc \\      
      & &  \maxagree & & (E)PTAS & & 
      $2^{O(\delta^{-3} \cdot \log{{\delta^{-3}}})} \cdot O(\delta^{-2}) \cdot {n} + \Tsvd(\delta^{-2})$
      & & R & & \bcc \\
  \bottomrule
\end{tabular}
\caption{Summary of results on \bcc and related problems.
For each scheme, we indicate the problem setting, 
the objective (\maxagree/\mindisagree),
the guarantees ($c$-{\OPT} implies a multiplicative factor approximation
),
and its computational complexity ($n$ denotes the total number of vertices, LP/SDP denotes the complexity of a linear/semidefinite program, and $\Tsvd(r)$ the time required to compute a rank-$r$ truncated SVD of a $n \times n$ matrix).
The D/R column indicates whether the scheme is deterministic or randomized.} 
\label{table:algo_comp_summary}
\end{table*}

\section{$k$-\bcc as a Bilinear Maximization}
\label{sec:bilinear}

We describe the $k$-\bcc problem
and show that it can be formulated as a combinatorially constrained bilinear maximization on the bi-adjacency matrix of the input graph.
Our $k$-\bcc algorithm relies on approximately solving that bilinear maximization.

\textbf{Maximizing Agreements}\;
An instance of \mbox{$k$-\bcc}
consists of an undirected, complete, bipartite graph ${G=(U,V, E)}$
whose edges have binary $\pm 1$ weights,
and a parameter $k$.
The objective, referred to as $\maxagree[k]$, is to compute a clustering of the vertices into at most $k$ clusters, such that the total number of positive edges in the interior of clusters plus the number of negative edges across clusters is maximized.

Let $E^{+}$ and $E^{-}$ denote the sets of positive and negative edges, respectively.
Also, let ${m=|U|}$ and ${n =|V|}$.
The bipartite graph $G$ can be represented by its weighted bi-adjacency matrix $
	\mathbf{B}
	\in \lbrace \pm 1 \rbrace^{m \times n}
$,
where
$B_{ij}$ is equal to the weight of edge $(i,j)$.

Consider a clustering $\mathcal{C}$ of the vertices $U \cup V$ into \emph{at most} $k$ clusters $C_{1}, \hdots, C_{k}$.
Let $\mathbf{X} \in \lbrace 0,1\rbrace^{m \times k}$ be the cluster assignment matrix for the vertices of $U$ associated with $\mathcal{C}$,
that is, $X_{ij}=1$ if and only if vertex $i \in U$ is assigned to cluster $C_{j}$.
Similarly, let $\mathbf{Y} \in \lbrace 0,1\rbrace^{n \times k}$ be the cluster assignment for $V$.

\begin{lemma}
\label{lemma:num-agreements}
For any instance ${G=(U, V, E)}$ of the $k$-\bcc problem with bi-adjacency matrix $\mathbf{B} \in \lbrace \pm 1 \rbrace^{|U| \times |V|}$,
and for any clustering $\mathcal{C}$ of $U \cup V$ into $k$ clusters,
the number of agreements achieved by $\mathcal{C}$ is 
\begin{align}
	\agree\bigl( \mathcal{C}\bigr)
	=
	\trace\bigl(\mathbf{X}^{\transpose}\mathbf{B}\mathbf{Y}\bigr) + \lvert E^{\minus} \rvert,
	\nonumber
\end{align}
where ${\mathbf{X}\in \lbrace 0,1\rbrace^{|U| \times k}}$ and ${\mathbf{Y}\in \lbrace 0,1\rbrace^{|V| \times k}}$ are the cluster assignment matrices corresponding to $\mathcal{C}$.
\end{lemma}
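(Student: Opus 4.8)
The plan is to count agreements by categorizing each edge of $G$ according to whether its two endpoints land in the same cluster, and then recognize the bookkeeping as a matrix trace. For an edge $(i,j)$ with $i\in U$ and $j\in V$, vertex $i$ belongs to cluster $C_\ell$ iff $X_{i\ell}=1$ and vertex $j$ belongs to $C_\ell$ iff $Y_{j\ell}=1$; since each vertex is assigned to exactly one cluster, the rows of $\mathbf{X}$ and $\mathbf{Y}$ are indicator vectors, and the quantity $\sum_{\ell=1}^k X_{i\ell}Y_{j\ell} = (\mathbf{X}\mathbf{Y}^\transpose)_{ij}$ is $1$ if $i$ and $j$ are co-clustered and $0$ otherwise. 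So $(\mathbf{X}\mathbf{Y}^\transpose)_{ij}$ is exactly the indicator that edge $(i,j)$ is ``inside'' a cluster.

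Next I would split the agreement count by edge sign. A `\plus' edge $(i,j)$ is an agreement iff it lies inside a cluster, contributing $[\,(i,j)\in E^{\plus}\,]\cdot(\mathbf{X}\mathbf{Y}^\transpose)_{ij}$; a `\minus' edge is an agreement iff it is cut, contributing $[\,(i,j)\in E^{\minus}\,]\cdot\bigl(1-(\mathbf{X}\mathbf{Y}^\transpose)_{ij}\bigr)$. Summing over all $i,j$ and using $B_{ij}=+1$ on $E^{\plus}$ and $B_{ij}=-1$ on $E^{\minus}$, the two ``inside'' terms combine into $\sum_{i,j} B_{ij}(\mathbf{X}\mathbf{Y}^\transpose)_{ij}$, while the leftover constant from the `\minus' edges is $\sum_{(i,j)\in E^{\minus}} 1 = \lvert E^{\minus}\rvert$. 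Finally, $\sum_{i,j} B_{ij}(\mathbf{X}\mathbf{Y}^\transpose)_{ij} = \trace\bigl(\mathbf{B}^\transpose \mathbf{X}\mathbf{Y}^\transpose\bigr) = \trace\bigl(\mathbf{X}^\transpose \mathbf{B}\mathbf{Y}\bigr)$ by cyclicity of the trace, which yields the claimed identity $\agree(\mathcal{C}) = \trace\bigl(\mathbf{X}^\transpose\mathbf{B}\mathbf{Y}\bigr) + \lvert E^{\minus}\rvert$.

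There is essentially no hard step here; the only point needing a little care is the claim that a vertex lies in exactly one cluster, so that the rows of $\mathbf{X}$ and $\mathbf{Y}$ sum to one and $(\mathbf{X}\mathbf{Y}^\transpose)_{ij}\in\{0,1\}$ rather than being a general nonnegative integer. This is immediate from the definition of a clustering (a partition of $U\cup V$), together with the convention that allowing ``at most $k$'' clusters just means some columns of $\mathbf{X},\mathbf{Y}$ may be identically zero. Once that is noted, the remainder is the elementary edge-by-edge accounting and one application of trace cyclicity.
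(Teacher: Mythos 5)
Your proof is correct and follows essentially the same route as the paper's: both split the agreement count by edge sign, observe that the co-clustering indicator for edge $(i,j)$ is the $(i,j)$ entry of $\mathbf{X}\mathbf{Y}^{\transpose}$ (the paper phrases this via the indicator matrices $\mathbf{B}^{+},\mathbf{B}^{-}$ with $\mathbf{B}=\mathbf{B}^{+}-\mathbf{B}^{-}$), and combine the two contributions into $\trace\bigl(\mathbf{X}^{\transpose}\mathbf{B}\mathbf{Y}\bigr)+\lvert E^{\minus}\rvert$. Your explicit justification that each row of $\mathbf{X},\mathbf{Y}$ is an indicator vector, so $(\mathbf{X}\mathbf{Y}^{\transpose})_{ij}\in\lbrace 0,1\rbrace$, is a detail the paper leaves implicit but is the same underlying argument.
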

\begin{proof}
Let $\mathbf{B}^{+}$ and $\mathbf{B}^{-} \in \lbrace 0, 1\rbrace^{|U| \times |V|}$ 
be indicator matrices for $E^{+}$ and $E^{-}$.
Then, ${\mathbf{B} = \mathbf{B}^{+} - \mathbf{B}^{-}}$.
Given a clustering ${\mathcal{C} = \lbrace C_{j} \rbrace_{j=1}^{k}}$,
or equivalently the assignment matrices  $\mathbf{X} \in \lbrace 0, 1\rbrace^{|U| \times k}$ and $\mathbf{Y} \in \lbrace 0, 1\rbrace^{|V| \times k}$,
the number of pairs of similar vertices assigned to the same cluster is equal to 
$
	  \trace\left(\mathbf{X}^{\transpose}\mathbf{B}^{+}\mathbf{Y}\right).
$
Similarly, 
the number of pairs of dissimilar vertices assigned to different clusters is equal to 
$
	  \lvert E^{\minus} \rvert
	  -
	  \trace\left(\mathbf{X}^{\transpose}\mathbf{B}^{-}\mathbf{Y}\right).
$
The total number of agreements achieved by $\mathcal{C}$
is
\begin{align}
  \agree\bigl( \mathcal{C}\bigr)
  &=
  \trace\bigl(\mathbf{X}^{\transpose}\mathbf{B}^{+}\mathbf{Y}\bigr)
  +
  \lvert E^{\minus} \rvert
  -
  \trace\bigl(\mathbf{X}^{\transpose}\mathbf{B}^{-}\mathbf{Y}\bigr) 
  \nonumber\\
  &=
  \trace\bigl(
  \mathbf{X}^{\transpose}\mathbf{B}\mathbf{Y}
	\bigr)
	+ 
	\lvert E^{\minus} \rvert,
  \nonumber
\end{align}
which is the desired result.
\end{proof}

It follows that computing a $k$-clustering that achieves the maximum number of agreements,
boils down to a constrained bilinear maximization:
\begin{align}
   \maxagree[k]
   =
   \max_{
		 \substack{
			\mathbf{X} \in \mathcal{X}, \mathbf{Y} \in \mathcal{Y}
		 }
   }
   \trace\bigl(\mathbf{X}^{\transpose}\mathbf{B}\mathbf{Y}\bigr) + \lvert E^{\minus} \rvert,
   \label{bcc:maxagree}
\end{align}
{where 
\setlength\abovedisplayskip{0pt plus 2pt minus 2pt}
\begin{equation}
\begin{aligned}
  &\mathcal{X}
  \eqdef
  \bigl\lbrace \mathbf{X}\in \lbrace 0, 1\rbrace^{m \times k}: \|\mathbf{X}\|_{\infty, 1} = 1\bigr\rbrace,\\
 &\mathcal{Y}
  \eqdef
  \bigl\lbrace \mathbf{Y}\in \lbrace 0, 1\rbrace^{n \times k}: \|\mathbf{Y}\|_{\infty, 1} = 1\bigr\rbrace.
\end{aligned}
\label{cluster-assignment-sets}
\end{equation}}%
Here, $\|\mathbf{X}\|_{\infty, 1}$ denotes the maximum of the $\ell_{1}$-norm of the rows of~$\mathbf{X}$.
Since ${\mathbf{X} \in \lbrace 0, 1 \rbrace^{|U|\times k}}$, the constraint ensures that each row of $\mathbf{X}$ has exactly one nonzero entry.
Hence, $\mathcal{X}$ and $\mathcal{Y}$ describe the sets of valid cluster assignment matrices for~$U$ and~$V$, respectively.

In the sequel, we briefly describe our approach for approximately solving the maximization in~\eqref{bcc:maxagree} under more general constraints, and
subsequently apply it to the $k$-\bcc/\maxagree problem.


\section{Bilinear Maximization Framework}
\label{sec:bilmax}

We describe a simple algorithm for computing an approximate solution to the constrained maximization
\begin{align}
   \max_{\mathbf{X} \in \mathcal{X}, \mathbf{Y} \in \mathcal{Y}} \quad
   &   
   \trace\left(\X^\transpose \A \mathbf{Y} \right),
   \label{bilinear-abstract}
\end{align}
where the input argument $\A$ is a real $m \times n$ matrix,
and $\mathcal{X}$,  $\mathcal{Y}$ are norm-bounded sets.
Our approach is exponential in the rank of the argument~$\mathbf{A}$,
which can be prohibitive in practice.
To mitigate this effect, we solve the maximization on a low-rank approximation~$\widetilde{\A}$ of~$\A$, instead.
The quality of the output depends on the spectrum of~$\mathbf{A}$ and the rank~$r$ of the surrogate matrix.
Alg.~\ref{algo:bilinear-lowrank} outlines our approach for solving~\eqref{bilinear-abstract}, operating directly on the rank-$r$ matrix~$\widetilde{\A}$.

If we knew the optimal value for variable $\mathbf{Y}$,
then the optimal value for variable $\mathbf{X}$
would be the solution to 
\begin{align}
   \mathrm{P}_{\mathcal{X}}(\mathbf{L}) = \argmax_{\mathbf{X} \in \mathcal{X}}
   \trace\left(\mathbf{X}^{\top} \mathbf{L}\right),
   \label{solve-for-X}
\end{align}
for $\mathbf{L} = \widetilde{\mathbf{A}}\mathbf{Y}$.
Similarly, 
with
$\mathbf{R} = \widetilde{\mathbf{A}}^{\transpose}\mathbf{X}$ for a given $\mathbf{X}$,
   \begin{align}
      \mathrm{P}_{\mathcal{Y}}(\mathbf{R}) = \argmax_{\mathbf{Y} \in \mathcal{Y}} \trace\bigl(\mathbf{R}^{\transpose}\mathbf{Y}\bigr)
   \label{solve-for-Y}
   \end{align}
is the optimal value of $\mathbf{Y}$ for that $\mathbf{X}$.
Our algorithm requires that such a linear maximization oracles $\mathrm{P}_{\mathcal{X}}(\cdot)$ and $\mathrm{P}_{\mathcal{Y}}(\cdot)$ exist.%
   \footnote{
   In the case of the $k$-\bcc problem~\eqref{bcc:maxagree}, where $\mathcal{X}$ and $\mathcal{Y}$ correspond to the sets of cluster assignment matrices defined in~\eqref{cluster-assignment-sets},
   such optimization oracles exist (see Alg.~\ref{algo:zero-one-orthogonal}).}
Of course, the optimal value for either of the two variables is not known.
It is known, however, that the columns of the $m \times k$ matrix $\mathbf{L}=\widetilde{\mathbf{A}}\mathbf{Y}$ lie in the $r$-dimensional range of $\widetilde{\mathbf{A}}$ for all feasible $\mathbf{Y} \in \mathcal{Y}$.
Alg.~\ref{algo:bilinear-lowrank} effectively operates by sampling candidate values for $\mathbf{L}$.
More specifically, it considers a large --exponential in ${{r} \cdot {k}}$-- collection of $r \times k$ matrices~$\mathbf{C}$
whose columns are points of an $\epsilon$-net of the unit $\ell_{2}$-ball $\mathbb{B}_{2}^{r-1}$.
Each matrix $\mathbf{C}$ is used to generate a matrix $\mathbf{L}$ whose $k$ columns lie in the range of the input matrix $\widetilde{\mathbf{A}}$
and in turn to produce a feasible solution pair $\mathbf{X}, \mathbf{Y}$ by successively solving~\eqref{solve-for-X} and~\eqref{solve-for-Y}.
Due to the properties of the $\epsilon$-net, one of the computed feasible pairs is guaranteed to achieve an objective value in~\eqref{bilinear-abstract} close to the optimal one (for argument $\widetilde{\mathbf{A}}$).

\begin{lemma}
   \label{low-rank-solver-guarantees}
   For any  real ${m \times n}$, rank-$r$ matrix~$\widetilde{\mathbf{A}}$,
   and sets
   ${\mathcal{X} \subset \mathbb{R}^{m \times k}}$ and
   ${\mathcal{Y} \subset \mathbb{R}^{n \times k}}$,
   let
   $$
   \bigl(\widetilde{\mathbf{X}}_{\star}, \widetilde{\mathbf{Y}}_{\star}\bigr)
   \eqdef
   \argmax_{\mathbf{X} \in \mathcal{X}, \mathbf{Y} \in \mathcal{Y}}\trace\bigl(\mathbf{X}^{\transpose}{\widetilde{\mathbf{A}}}\mathbf{Y}\bigr).
   $$
   If 
   there exist linear maximization oracles $\mathrm{P}_{\mathcal{X}}$ and $\mathrm{P}_{\mathcal{Y}}$ as in~\eqref{solve-for-X} and~\eqref{solve-for-Y},
   then 
   Alg.~\ref{algo:bilinear-lowrank} 
   with input $\widetilde{\mathbf{A}}$ and accuracy ${\epsilon \in (0,1)}$
   outputs $\widetilde{\mathbf{X}} \in \mathcal{X}$ 
   and $\widetilde{\mathbf{Y}} \in \mathcal{Y}$ such that
   \begin{align}
	  \trace\bigl( \widetilde{\mathbf{X}}^{\transpose} \widetilde{\mathbf{A}} \widetilde{\mathbf{Y}} \bigr)
	  \ge
	  \trace\bigl( \widetilde{\mathbf{X}}_{\star}^{\transpose} \widetilde{\mathbf{A}} \widetilde{\mathbf{Y}}_{\star} \bigr)
	  - 2  \epsilon  \sqrt{k} \cdot \|\widetilde{\mathbf{A}}\|_{2} \cdot \mu_{\mathcal{X}} \cdot \mu_{\mathcal{Y}},
	  \nonumber
   \end{align}
   where
   $\mu_{\mathcal{X}}\eqdef \max_{\mathbf{X} \in  \mathcal{X}}\| \mathbf{X} \|_{\frob}$
   and
   $\mu_{\mathcal{Y}}\eqdef \max_{\mathbf{Y} \in  \mathcal{Y}}\| \mathbf{Y} \|_{\frob}$,
   in time $O\mathopen{}\bigl(\bigl(\sfrac{2\sqrt{r}}{\epsilon}\bigr)^{r \cdot k} \cdot \bigl(\Tx+\Ty+(m+n)r\bigr)\bigr) + \Tsvd(r)$.
   Here, $\Tsvd(r)$ denotes the time required to compute the truncated  SVD of $\widetilde{\mathbf{A}}$, while $\Tx$ and $\Ty$ denote the running time of the two oracles.
\end{lemma}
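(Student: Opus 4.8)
The proof decomposes naturally into two pieces: (i) a correctness argument showing that for the ``right'' choice of the sampled matrix $\mathbf{C}$, the feasible pair $(\widetilde{\mathbf{X}},\widetilde{\mathbf{Y}})$ produced by the alternating oracle calls is near-optimal; and (ii) a bookkeeping argument for the running time. The key structural observation driving (i) is that, at the optimum $(\widetilde{\mathbf{X}}_\star,\widetilde{\mathbf{Y}}_\star)$, the matrix $\mathbf{L}_\star \eqdef \widetilde{\mathbf{A}}\widetilde{\mathbf{Y}}_\star$ has all $k$ of its columns lying in the $r$-dimensional column space of $\widetilde{\mathbf{A}}$. Writing $\widetilde{\mathbf{A}} = \matU\boldsymbol{\Sigma}\matV^\transpose$ for the rank-$r$ SVD, each column of $\mathbf{L}_\star$ equals $\matU\boldsymbol{\Sigma}\,(\matV^\transpose\widetilde{\mathbf{Y}}_\star e_j)$, so it is $\matU$ times a vector of the form $\boldsymbol{\Sigma}\mathbf{z}_j$ with $\|\mathbf{z}_j\|_2 \le \|\matV^\transpose\widetilde{\mathbf{Y}}_\star e_j\|_2 \le \mu_{\mathcal{Y}}$ (in fact $\le \|\widetilde{\mathbf{Y}}_\star e_j\|_2$, bounded by $\mu_{\mathcal{Y}}$). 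After rescaling, the direction $\mathbf{z}_j/\|\mathbf{z}_j\|_2$ lies on $\mathbb{S}^{r-1}$, hence within $\epsilon$ of some net point $\mathbf{c}_j$; stacking the $\mathbf{c}_j$'s as columns gives one of the enumerated matrices $\mathbf{C}$, and the associated $\mathbf{L} = \matU\boldsymbol{\Sigma}\mathbf{C}$ (up to the column scalings, which I will need to track — see below) is within $\epsilon\|\widetilde{\mathbf{A}}\|_2\mu_{\mathcal{Y}}$ of $\mathbf{L}_\star$ column-by-column in $\ell_2$, so $\|\mathbf{L}-\mathbf{L}_\star\|_\frob \le \epsilon\sqrt{k}\,\|\widetilde{\mathbf{A}}\|_2\mu_{\mathcal{Y}}$.

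**From the good $\mathbf{L}$ to the objective bound.** Given this $\mathbf{L}$, set $\mathbf{X} = \mathrm{P}_{\mathcal{X}}(\mathbf{L})$, then $\mathbf{Y} = \mathrm{P}_{\mathcal{Y}}(\widetilde{\mathbf{A}}^\transpose\mathbf{X})$. I will chain three inequalities. First, by optimality of $\mathbf{X}$ for the linear oracle against $\mathbf{L}$, $\trace(\mathbf{X}^\transpose\mathbf{L}) \ge \trace(\widetilde{\mathbf{X}}_\star^\transpose\mathbf{L})$. Second, replace $\mathbf{L}$ by $\mathbf{L}_\star = \widetilde{\mathbf{A}}\widetilde{\mathbf{Y}}_\star$ on both sides at a cost of $2\max_{\mathbf{X}\in\mathcal{X}}|\trace(\mathbf{X}^\transpose(\mathbf{L}-\mathbf{L}_\star))| \le 2\mu_{\mathcal{X}}\|\mathbf{L}-\mathbf{L}_\star\|_\frob$ by Cauchy--Schwarz; this yields $\trace(\mathbf{X}^\transpose\widetilde{\mathbf{A}}\widetilde{\mathbf{Y}}_\star) \ge \trace(\widetilde{\mathbf{X}}_\star^\transpose\widetilde{\mathbf{A}}\widetilde{\mathbf{Y}}_\star) - 2\mu_{\mathcal{X}}\|\mathbf{L}-\mathbf{L}_\star\|_\frob$. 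Third, by optimality of $\mathbf{Y} = \mathrm{P}_{\mathcal{Y}}(\widetilde{\mathbf{A}}^\transpose\mathbf{X})$, $\trace(\mathbf{X}^\transpose\widetilde{\mathbf{A}}\mathbf{Y}) = \trace((\widetilde{\mathbf{A}}^\transpose\mathbf{X})^\transpose\mathbf{Y}) \ge \trace((\widetilde{\mathbf{A}}^\transpose\mathbf{X})^\transpose\widetilde{\mathbf{Y}}_\star) = \trace(\mathbf{X}^\transpose\widetilde{\mathbf{A}}\widetilde{\mathbf{Y}}_\star)$. Combining and substituting the Frobenius bound $\|\mathbf{L}-\mathbf{L}_\star\|_\frob \le \epsilon\sqrt{k}\|\widetilde{\mathbf{A}}\|_2\mu_{\mathcal{Y}}$ gives $\trace(\mathbf{X}^\transpose\widetilde{\mathbf{A}}\mathbf{Y}) \ge \trace(\widetilde{\mathbf{X}}_\star^\transpose\widetilde{\mathbf{A}}\widetilde{\mathbf{Y}}_\star) - 2\epsilon\sqrt{k}\,\|\widetilde{\mathbf{A}}\|_2\mu_{\mathcal{X}}\mu_{\mathcal{Y}}$. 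Since the algorithm returns the best pair over all enumerated $\mathbf{C}$, and this particular $(\mathbf{X},\mathbf{Y})$ is among the candidates, the same bound holds for the output $(\widetilde{\mathbf{X}},\widetilde{\mathbf{Y}})$.

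**The main obstacle: the unknown column norms.** The genuinely fiddly point is that the net lives on the \emph{unit} sphere $\mathbb{S}^{r-1}$, while the columns $\mathbf{z}_j = \boldsymbol{\Sigma}^{-1}\matU^\transpose(\mathbf{L}_\star e_j)$ — or rather the relevant preimages — have unknown norms in $[0,\mu_{\mathcal{Y}}]$. I will handle this exactly as in the quadratic analogue of~\cite{asteris2015sparseviabipartite}: the scaling of each column of $\mathbf{L}$ is irrelevant to the \emph{argmax} defining $\mathrm{P}_{\mathcal{X}}$ only if $\mathcal{X}$ decouples across the contribution of each column, which in general it need not — so instead I push the scalar factors into the net itself. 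Concretely, one shows it suffices to net the rescaled directions: if $\mathbf{c}_j$ approximates the unit direction of the $j$-th relevant vector to within $\epsilon$, then $\|\widetilde{\mathbf{A}}\|_2\mu_{\mathcal{Y}}\,\mathbf{c}_j$ approximates the $j$-th column of $\mathbf{L}_\star$ to within $\epsilon\|\widetilde{\mathbf{A}}\|_2\mu_{\mathcal{Y}}$, because the true column has norm at most $\|\widetilde{\mathbf{A}}\|_2\mu_{\mathcal{Y}}$ and scaling a unit $\epsilon$-net to radius $\rho$ gives a $\rho\epsilon$-net of the radius-$\rho$ ball, \emph{and} the zero vector (radius $0$) is covered since every net point times the appropriate small scalar is within $\epsilon\rho$ of it — this is why the construction takes $\mathbf{L} = \widetilde{\mathbf{A}}(\text{columns of prescribed form})$ rather than insisting on unit columns. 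I will make this precise by defining the per-column candidate set as $\{t\cdot\mathbf{c} : \mathbf{c}\in\epsnet,\ t\in\{\text{suitable grid}\}\}$ or, more cleanly, by absorbing $t$ into a slightly enlarged net; either way the cardinality stays $(2\sqrt{r}/\epsilon)^{O(r)}$ per column and $(2\sqrt{r}/\epsilon)^{rk}$ total. For the running time: computing the truncated SVD costs $\Tsvd(r)$; there are $(2\sqrt{r}/\epsilon)^{rk}$ matrices $\mathbf{C}$; for each, forming $\mathbf{L} = \matU(\boldsymbol{\Sigma}\mathbf{C})$ costs $O((m+n)r)$ (an $r\times k$ multiply then an $m\times r$ multiply, with $k\le r$ absorbed or stated), one call each to the oracles costs $\Tx+\Ty$, and evaluating the resulting objective costs $O((m+n)r)$; multiplying through gives the claimed $O\bigl((2\sqrt r/\epsilon)^{rk}(\Tx+\Ty+(m+n)r)\bigr)+\Tsvd(r)$. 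I expect the only real care needed is (a) the column-norm rescaling argument above and (b) confirming that $\mathbf{L}_\star$'s columns indeed have $\ell_2$-norm bounded by $\|\widetilde{\mathbf{A}}\|_2\mu_{\mathcal{Y}}$, which is immediate from $\|\widetilde{\mathbf{A}}\widetilde{\mathbf{Y}}_\star e_j\|_2 \le \|\widetilde{\mathbf{A}}\|_2\|\widetilde{\mathbf{Y}}_\star e_j\|_2 \le \|\widetilde{\mathbf{A}}\|_2\|\widetilde{\mathbf{Y}}_\star\|_\frob \le \|\widetilde{\mathbf{A}}\|_2\mu_{\mathcal{Y}}$.
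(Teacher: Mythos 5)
Your proposal is correct and follows the same overall strategy as the paper (enumerate an $\epsilon$-net over the $r\times k$ coefficient matrices, generate a candidate pair via the two oracles, and charge the loss to two trace--Cauchy--Schwarz error terms of $\epsilon\sqrt{k}\,\|\widetilde{\mathbf{A}}\|_{2}\mu_{\mathcal{X}}\mu_{\mathcal{Y}}$ each), but the middle of your argument is organized differently and is in fact a bit cleaner. The paper compares the candidate $\mathbf{X}_{\sharp}$ to $\widetilde{\mathbf{X}}_{\star}$ through the net point $\mathbf{C}_{\sharp}$, and then, to pass from $\trace(\mathbf{X}_{\sharp}^{\transpose}\widetilde{\mathbf{U}}\widetilde{\mathbf{\Sigma}}\mathbf{C}_{\sharp})$ back to a bilinear value, introduces an auxiliary feasible point $\widehat{\mathbf{Y}}\eqdef\argmin_{\mathbf{Y}\in\mathcal{Y}}\|\widetilde{\mathbf{V}}^{\transpose}\mathbf{Y}-\mathbf{C}_{\sharp}\|_{\infty,2}$ and pays a second, separately bounded error $\|\mathbf{C}_{\sharp}-\widetilde{\mathbf{V}}^{\transpose}\widehat{\mathbf{Y}}\|_{\frob}\le\sqrt{k}\epsilon$ before invoking the $\mathcal{Y}$-oracle. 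You avoid $\widehat{\mathbf{Y}}$ entirely by perturbing $\mathbf{L}\to\mathbf{L}_{\star}=\widetilde{\mathbf{A}}\widetilde{\mathbf{Y}}_{\star}$ on \emph{both} sides of the first oracle inequality (paying the same error twice) and then testing the $\mathcal{Y}$-oracle directly against $\widetilde{\mathbf{Y}}_{\star}$; this yields the identical constant $2$ with one fewer auxiliary object. The one place where you over-complicate is the ``main obstacle'': the algorithm's net is an $\epsilon$-net of the unit $\ell_{2}$-\emph{ball} $\mathbb{B}_{2}^{r-1}$, not the sphere, so after the paper's WLOG normalization ($\mu_{\mathcal{X}}=\mu_{\mathcal{Y}}=1$, absorbing the scales into the singular values of $\widetilde{\mathbf{A}}$) every column of $\widetilde{\mathbf{V}}^{\transpose}\widetilde{\mathbf{Y}}_{\star}$ has norm at most $1$ and is covered directly --- zero included --- with no grid over radial scalings and no enlargement of the net. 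Your scalar-grid workaround is mathematically sound but proves the guarantee for a modified enumeration rather than for Algorithm~\ref{algo:bilinear-lowrank} as stated; replacing it with the ball-net-plus-normalization observation closes that cosmetic mismatch. The runtime accounting matches the paper's.
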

\begin{algorithm}[t!]
   \caption{ {\lowrankalgoname} }
   \label{algo:bilinear-lowrank}
   {\color{black}   	\small
   \begin{algorithmic}[1]
   \INPUT:
   $m \times n$ real matrix $\widetilde{\mathbf{A}}$ of rank $r$,
   ${\epsilon \in (0,1)}$
   \OUTPUT $\widetilde{\mathbf{X}} \in \mathcal{X}$, $\widetilde{\mathbf{Y}} \in \mathcal{Y}$
   \hfill \COMMENT{See Lemma~\ref{low-rank-solver-guarantees}.}
   \STATE $\mathcal{C} \gets \lbrace \rbrace$ \hfill \COMMENT{Candidate solutions}
   \STATE 
   ${
	  \widetilde{\mathbf{U}}, \widetilde{\mathbf{\Sigma}}, \widetilde{\mathbf{V}}
	  \leftarrow
	  \texttt{SVD}(\widetilde{\mathbf{A}})
   }$
   \hfill \COMMENT{
      ${
         \widetilde{\mathbf{\Sigma}} \in \mathbb{R}^{\aprxrank \times \aprxrank}
      }$
   }
   \FOR{\textbf{each} $\mathbf{C} \in (\epsilon\text{-net of } \mathbb{B}_{2}^{r-1})^{\otimes k}$} 
      \STATE  $\mathbf{L} \gets \widetilde{\mathbf{U}}\widetilde{\mathbf{\Sigma}}\mathbf{C}$
      \hfill \COMMENT{$\mathbf{L} \in \mathbb{R}^{m \times k}$}
      \STATE ${\X} \gets$ $\mathrm{P}_{\mathcal{X}}(\mathbf{L})$
	  \STATE ${\mathbf{R}} \gets \X^{\transpose}\widetilde{\mathbf{A}}$ 
	  \hfill \COMMENT{$\mathbf{R} \in \mathbb{R}^{k \times n}$}
	  \STATE ${\Y} \gets$ $\mathrm{P}_{\mathcal{Y}}(\mathbf{R})$
      \STATE $\mathcal{C} \gets \mathcal{C} \cup \bigl\lbrace{({\X}, {\Y})}\bigr\rbrace$
   \ENDFOR
   \STATE $( \widetilde{\mathbf{X}}, \widetilde{\mathbf{Y}}) \gets
   \arg\max_{\substack{ (\mathbf{X}, \mathbf{Y}) \in \mathcal{C} } }
   \trace\bigl(\mathbf{X}^{\transpose}\widetilde{\mathbf{U}}\widetilde{\mathbf{\Sigma}}\widetilde{\mathbf{V}}^{\transpose}\mathbf{Y}\bigr)$
   \end{algorithmic}
   }
\end{algorithm}
The crux of Lemma~\ref{low-rank-solver-guarantees} is that if there exists an efficient procedure to solve the simpler maximizations~\eqref{solve-for-X} and~\eqref{solve-for-Y},
then we can approximately solve the bilinear maximization~\eqref{bilinear-abstract} in time that depends exponentially on the intrinsic dimension $r$ of the input~$\widetilde{\mathbf{A}}$, but polynomially on the size of the input.
A formal proof is given in Appendix Sec.~\ref{sec:approx-proof}.

Recall that $\widetilde{\mathbf{A}}$ is only a low-rank approximation of 
the potentially full rank original argument~$\mathbf{A}$.
The guarantees of Lemma~\ref{low-rank-solver-guarantees} can be translated to guarantees for the original matrix introducing an additional error term to account for the extra level of approximation.
\begin{lemma}
   \label{full-rank-guarantees}
   For any $\mathbf{A} \in \mathbb{R}^{m \times n}$, let
   $$
   \bigl( \mathbf{X}_{\star}, \mathbf{Y}_{\star} \bigr) 
   \eqdef 
   \argmax_{\mathbf{X}\in \mathcal{X}, \mathbf{Y}\in \mathcal{Y}}\trace\bigl(\mathbf{X}^{\transpose}\mathbf{A}\mathbf{Y}\bigr),
   $$
   where $\mathcal{X}$ and $\mathcal{Y}$ satisfy the conditions of Lemma~\ref{low-rank-solver-guarantees}.
   Let $\widetilde{\mathbf{A}}$ be a rank-$r$ approximation of $\mathbf{A}$,
   and $\widetilde{\mathbf{X}} \in \mathcal{X}$, $\widetilde{\mathbf{Y}} \in \mathcal{Y}$ 
   be the output of Alg.~\ref{algo:bilinear-lowrank}
   with input $\widetilde{\mathbf{A}}$ and accuracy $\epsilon$. Then,
   \begin{align}
      &\trace\bigl(\mathbf{X}_{\star}^{\transpose}\mathbf{A}\mathbf{Y}_{\star}\bigr)
      -
      \trace\bigl(\widetilde{\mathbf{X}}^{\transpose}\mathbf{A}\widetilde{\mathbf{Y}}\bigr)
      \nonumber\\&\quad \le
      2  \cdot \left(
         \epsilon  \sqrt{k} \cdot \|\widetilde{\mathbf{A}}\|_{2}
         +
         \|\mathbf{A}-{\widetilde{\mathbf{A}}}\|_{2} 
         \right)
     \cdot \mu_{\mathcal{X}}
     \cdot \mu_{\mathcal{Y}}.
     \nonumber
   \end{align}
\end{lemma}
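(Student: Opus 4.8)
The plan is a standard perturbation/telescoping argument: split the quantity we want to bound into a term that is controlled by Lemma~\ref{low-rank-solver-guarantees} (the error of running the algorithm on the low-rank surrogate) plus two terms that measure the discrepancy between $\mathbf{A}$ and $\widetilde{\mathbf{A}}$, and one term that vanishes by optimality of $(\widetilde{\mathbf{X}}_{\star},\widetilde{\mathbf{Y}}_{\star})$ for the surrogate.

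Concretely, I would write
\begin{align*}
\trace\bigl(\mathbf{X}_{\star}^{\transpose}\mathbf{A}\mathbf{Y}_{\star}\bigr)
-\trace\bigl(\widetilde{\mathbf{X}}^{\transpose}\mathbf{A}\widetilde{\mathbf{Y}}\bigr)
&=\underbrace{\trace\bigl(\mathbf{X}_{\star}^{\transpose}(\mathbf{A}-\widetilde{\mathbf{A}})\mathbf{Y}_{\star}\bigr)}_{T_1}
+\underbrace{\trace\bigl(\mathbf{X}_{\star}^{\transpose}\widetilde{\mathbf{A}}\mathbf{Y}_{\star}\bigr)-\trace\bigl(\widetilde{\mathbf{X}}_{\star}^{\transpose}\widetilde{\mathbf{A}}\widetilde{\mathbf{Y}}_{\star}\bigr)}_{T_2}\\
&\quad+\underbrace{\trace\bigl(\widetilde{\mathbf{X}}_{\star}^{\transpose}\widetilde{\mathbf{A}}\widetilde{\mathbf{Y}}_{\star}\bigr)-\trace\bigl(\widetilde{\mathbf{X}}^{\transpose}\widetilde{\mathbf{A}}\widetilde{\mathbf{Y}}\bigr)}_{T_3}
+\underbrace{\trace\bigl(\widetilde{\mathbf{X}}^{\transpose}(\widetilde{\mathbf{A}}-\mathbf{A})\widetilde{\mathbf{Y}}\bigr)}_{T_4}.
\end{align*}
Then I would bound each piece in turn. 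For $T_2$: since $(\widetilde{\mathbf{X}}_{\star},\widetilde{\mathbf{Y}}_{\star})$ maximizes $\trace(\mathbf{X}^{\transpose}\widetilde{\mathbf{A}}\mathbf{Y})$ over $\mathcal{X}\times\mathcal{Y}$ and $(\mathbf{X}_{\star},\mathbf{Y}_{\star})\in\mathcal{X}\times\mathcal{Y}$, we get $T_2\le 0$. For $T_3$: this is exactly the suboptimality of the algorithm's output on the surrogate, so Lemma~\ref{low-rank-solver-guarantees} (whose hypotheses on $\mathcal{X},\mathcal{Y}$ and on the oracles are inherited here) gives $T_3\le 2\epsilon\sqrt{k}\,\|\widetilde{\mathbf{A}}\|_2\,\mu_{\mathcal{X}}\,\mu_{\mathcal{Y}}$. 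For $T_1$ and $T_4$: I would use the elementary inequality $\trace(\mathbf{X}^{\transpose}\mathbf{M}\mathbf{Y})=\langle \mathbf{M},\mathbf{X}\mathbf{Y}^{\transpose}\rangle\le \|\mathbf{M}\|_2\,\|\mathbf{X}\mathbf{Y}^{\transpose}\|_{*}\le \|\mathbf{M}\|_2\,\|\mathbf{X}\|_{\frob}\,\|\mathbf{Y}\|_{\frob}$ (equivalently, write the trace as $\sum_i \mathbf{x}_i^{\transpose}\mathbf{M}\mathbf{y}_i$ and apply Cauchy--Schwarz columnwise), with $\mathbf{M}=\mathbf{A}-\widetilde{\mathbf{A}}$; since $\mathbf{X}_{\star},\widetilde{\mathbf{X}}\in\mathcal{X}$ and $\mathbf{Y}_{\star},\widetilde{\mathbf{Y}}\in\mathcal{Y}$, each of $\|\cdot\|_{\frob}$ is at most $\mu_{\mathcal{X}}$ resp.\ $\mu_{\mathcal{Y}}$, so $T_1,T_4\le \|\mathbf{A}-\widetilde{\mathbf{A}}\|_2\,\mu_{\mathcal{X}}\,\mu_{\mathcal{Y}}$.

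Summing the four bounds yields
$
\trace(\mathbf{X}_{\star}^{\transpose}\mathbf{A}\mathbf{Y}_{\star})-\trace(\widetilde{\mathbf{X}}^{\transpose}\mathbf{A}\widetilde{\mathbf{Y}})
\le 2\bigl(\epsilon\sqrt{k}\,\|\widetilde{\mathbf{A}}\|_2+\|\mathbf{A}-\widetilde{\mathbf{A}}\|_2\bigr)\mu_{\mathcal{X}}\mu_{\mathcal{Y}},
$
as claimed. There is essentially no hard step here; the only thing to be careful about is the bookkeeping in the telescoping decomposition and the direction of the inequality for $T_2$, and noting that the decisive last step of the algorithm (line~10) selects the feasible pair maximizing $\trace(\mathbf{X}^{\transpose}\widetilde{\mathbf{A}}\mathbf{Y})$, which is what makes the $T_3$ bound from Lemma~\ref{low-rank-solver-guarantees} directly applicable. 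One could also remark that the bound is monotone in the modeling choice of $r$: a larger $r$ shrinks $\|\mathbf{A}-\widetilde{\mathbf{A}}\|_2$ (taking $\widetilde{\mathbf{A}}$ to be the truncated SVD makes it $\sigma_{r+1}(\mathbf{A})$) at the cost of a runtime exponential in $r k$.
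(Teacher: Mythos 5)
Your proof is correct and follows essentially the same route as the paper: the paper factors the argument through an auxiliary lemma (Lemma~\ref{lemma:generic-sketch-solution}, stated with a general multiplicative factor $\gamma$), but when specialized to $\gamma=1$ its chain of inequalities is exactly your telescoping into $T_1,\dots,T_4$, using the optimality of $(\widetilde{\mathbf{X}}_{\star},\widetilde{\mathbf{Y}}_{\star})$ for $\widetilde{\mathbf{A}}$, the bound $\lvert\trace(\mathbf{X}^{\transpose}(\mathbf{A}-\widetilde{\mathbf{A}})\mathbf{Y})\rvert\le\|\mathbf{X}\|_{\frob}\|\mathbf{A}-\widetilde{\mathbf{A}}\|_{2}\|\mathbf{Y}\|_{\frob}$ on both pairs, and Lemma~\ref{low-rank-solver-guarantees} for the surrogate suboptimality.
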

Lemma~\ref{full-rank-guarantees} follows from Lemma~\ref{low-rank-solver-guarantees}.
A formal proof is deferred to Appendix Sec.~\ref{sec:approx-proof}.
This concludes the brief discussion on our bilinear maximization framework.

\begin{algorithm}[t!]
   \small
   \caption{
      \small{ 
      $\mathrm{P}_{\mathcal{X}}(\cdot)$,
      $\mathcal{X}
       \eqdef
       \bigl\lbrace \mathbf{X} \in \lbrace 0, 1\rbrace^{m \times k}: \|\mathbf{X}\|_{\infty, 1} = 1\bigr\rbrace
      $} 
   }
   \label{algo:zero-one-orthogonal}
   {\color{black}   
   \begin{algorithmic}[1]
   \INPUT:
   $m \times k$ real matrix $\mathbf{L}$
   \OUTPUT $\Xtilde = \argmax_{\mathbf{X} \in \mathcal{X}}\trace\bigl(\X^{\transpose}\mathbf{L}\bigr)$
   \STATE $\Xtilde \gets \mathbf{0}_{m \times k}$
   \FOR{$i=1,\hdots, m$ } 
      \STATE $j_{i} \gets \argmax_{j \in [k]} L_{ij}$ 
      \STATE $\widetilde{X}_{i j_{i}} \gets 1$
   \ENDFOR
   \end{algorithmic}
   }
\end{algorithm}

\section{Our $k$-BCC Algorithm}
The $k$-\bcc/\maxagree problem on a bipartite graph ${G=(U,V,E)}$ can be written as a constrained bilinear maximization~\eqref{bcc:maxagree} on the bi-adjacency matrix $\mathbf{B}$
over the sets of valid cluster assignment matrices~\eqref{cluster-assignment-sets} for $V$ and $U$.
Adopting the framework of the previous section to approximately solve the maximization in~\eqref{bcc:maxagree} we obtain our $k$-\bcc algorithm.

The missing ingredient is a pair of efficient procedures $\mathrm{P}_{\mathcal{X}}(\cdot)$ and $\mathrm{P}_{\mathcal{Y}}(\cdot)$,
as described in Lemma~\ref{low-rank-solver-guarantees} 
for solving~\eqref{solve-for-X} and~\eqref{solve-for-Y}, respectively,
in the case where $\mathcal{X}$ and $\mathcal{Y}$ are the sets of valid cluster assignment matrices~\eqref{cluster-assignment-sets}.
Such a procedure exists and is outlined in Alg.~\ref{algo:zero-one-orthogonal}.

\begin{lemma}
   \label{lemma:local-solver-for-bcc}
   For any $\mathbf{L} \in \mathbb{R}^{m \times k}$, Algorithm~\ref{algo:zero-one-orthogonal}
   outputs
   $$
      \argmax_{\mathbf{X}\in \mathcal{X}}\trace\bigl(\X^{\transpose}\mathbf{L}\bigr),
   $$
   where $\mathcal{X}$ is defined in~\eqref{cluster-assignment-sets},
   in time ${O({k}\cdot{m})}$.
\end{lemma}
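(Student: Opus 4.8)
The plan is to exploit the separability of the linear objective over the rows of $\mathbf{X}$, once the combinatorial structure of $\mathcal{X}$ is made explicit. First I would unpack the feasible set: since every $\mathbf{X} \in \mathcal{X}$ has entries in $\{0,1\}$ and ${\|\mathbf{X}\|_{\infty,1} = \max_{i}\|\mathbf{X}_{i,:}\|_1 = 1}$, each row of $\mathbf{X}$ has exactly one nonzero entry, equal to $1$. Thus feasible matrices are in bijection with assignments ${\sigma : \{1,\dots,m\} \to \{1,\dots,k\}}$ via ${X_{ij} = 1}$ iff ${j = \sigma(i)}$. Under this correspondence,
\begin{align}
  \trace\bigl(\mathbf{X}^{\transpose}\mathbf{L}\bigr)
  = \sum_{i=1}^{m}\sum_{j=1}^{k} X_{ij} L_{ij}
  = \sum_{i=1}^{m} L_{i,\sigma(i)}.
  \nonumber
\end{align}

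The key observation is that this sum decouples across rows: the $i$-th term depends only on $\sigma(i)$, and the choices $\sigma(1),\dots,\sigma(m)$ are mutually unconstrained. Hence the maximum over $\mathbf{X} \in \mathcal{X}$ equals $\sum_{i=1}^{m}\max_{j\in[k]} L_{ij}$, and it is attained by any $\sigma$ with $\sigma(i) \in \argmax_{j\in[k]} L_{ij}$ for every $i$. Next I would check that Algorithm~\ref{algo:zero-one-orthogonal} returns precisely such a matrix: it initializes $\Xtilde \gets \mathbf{0}_{m\times k}$ and, for each row $i$, sets $\widetilde{X}_{i,j_i} \gets 1$ with $j_i \in \argmax_{j\in[k]} L_{ij}$, leaving all other entries $0$. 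The output is therefore feasible (one nonzero entry per row) and achieves the maximal objective value, so it solves $\argmax_{\mathbf{X}\in\mathcal{X}}\trace(\mathbf{X}^{\transpose}\mathbf{L})$. Ties in the row-wise argmax may be broken arbitrarily; every resulting matrix is optimal, which also accounts for the fact that the $\argmax$ in the lemma statement need not be unique.

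For the running time, allocating the $m\times k$ zero matrix costs $O(mk)$, and the loop runs $m$ iterations, each computing an $\argmax$ over $k$ scalars and writing a single entry, i.e.\ $O(k)$ work per iteration, for $O(km)$ in total. There is no real obstacle here: the proof is a one-line separability argument, and the only points requiring a moment of care are reading the "exactly one $1$ per row" structure out of the constraint ${\|\mathbf{X}\|_{\infty,1}=1}$ together with ${\mathbf{X}\in\{0,1\}^{m\times k}}$, and noting that arbitrary tie-breaking still yields an optimal solution.
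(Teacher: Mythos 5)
Your proof is correct and follows essentially the same route as the paper's: rewrite $\trace(\mathbf{X}^{\transpose}\mathbf{L})$ as $\sum_{i} L_{i,\sigma(i)}$, observe that the row choices are unconstrained so the maximum is $\sum_i \max_{j} L_{ij}$, and note that Algorithm~\ref{algo:zero-one-orthogonal} attains this bound in $O(km)$ time. The only (welcome) addition is your explicit remark on tie-breaking, which the paper leaves implicit.
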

A proof is provided in Appendix, Sec.~\ref{sec:proof-correctness-of-local-solver}.

Note that in our case,
Alg.~\ref{algo:zero-one-orthogonal} is used as both $\mathrm{P}_{\mathcal{X}}(\cdot)$ and $\mathrm{P}_{\mathcal{Y}}(\cdot)$.
Putting the pieces together, we obtain the core of our \mbox{$k$-\bcc} algorithm, outlined in Alg.~\ref{algo:bcc-core}.
Given a bipartite graph~$G$, with bi-adjacency matrix $\mathbf{B}$,
we first compute a rank-$r$ approximation~$\widetilde{\mathbf{B}}$ of $\mathbf{B}$ via the truncated SVD.
Using Alg.~\ref{algo:bilinear-lowrank},
equipped with Alg.~\ref{algo:zero-one-orthogonal} as a subroutine,
we approximately solve the bilinear maximization~\eqref{bcc:maxagree} with
argument~$\widetilde{\mathbf{B}}$.
The output is a pair of valid cluster assignment matrices for~$U$ and~$V$.

\begin{algorithm*}[t!]
   \caption{k-BCC via Low-rank Bilinear Maximization}
   \label{algo:bcc-core}
   {\color{black}    \small
   \begin{algorithmic}[1]
   \INPUT:
   {\textbullet} Bi-adjacency matrix $\mathbf{B} \in \lbrace \pm 1\rbrace^{m,n}$ of bipartite graph $G=(U,V,E)$,\\
   \quad\,\,\,\,\, {\textbullet} Target number of clusters $k$,
   \quad {\textbullet} Approximation rank $r$,
   \quad {\textbullet} Accuracy parameter $\epsilon \in (0, 1)$.
   \OUTPUT Clustering $\widetilde{\mathcal{C}}^{\kexp}$ of $U \cup V$ into $k$ clusters --\\
   \quad\,\,\,\,\, Equivalently, cluster membership matrices $\Xtilde^{\kexp} \in \mathcal{X}$ and $\Ytilde^{\kexp} \in \mathcal{Y}$ for $U$ and $V$, respectively.
   \STATE $\widetilde{\mathbf{B}} \gets \textsf{SVD}(\mathbf{B}, r)$
   \hfill\COMMENT{Truncated SVD.}
   \STATE Approx. solve the bilinear maximization~\eqref{bcc:maxagree} on $\widetilde{\mathbf{B}}$, over the sets $\mathcal{X}$, $\mathcal{Y}$ of valid cluster assignment matrices (Eq.~\ref{cluster-assignment-sets}):\\
   $\Xtilde^{\kexp}, \Ytilde^{\kexp} \gets {\lowrankalgoname}\bigl( \widetilde{\mathbf{B}}, \epsilon, r \bigr)$
   \hfill \COMMENT{Alg.~\ref{algo:bilinear-lowrank},\; 
   using Alg.~\ref{algo:zero-one-orthogonal} as $\mathrm{P}_{\mathcal{X}}(\cdot)$ and $\mathrm{P}_{\mathcal{Y}}(\cdot)$.}
   \end{algorithmic}
   }
\end{algorithm*}
Alg.~\ref{algo:bcc-core} exposes the configuration parameters $r$ and $\epsilon$
that control the performance of our bilinear solver,
and in turn the quality of the output $k$-clustering.
To simplify the description,
we also create a $k$-\bcc ``wrapper'' procedure outlined in Alg.~\ref{algo:kbcc-ptas}:
given a bound $k$ on the number of clusters and a single accuracy parameter $\delta \in (0,1)$, 
Alg.~\ref{algo:kbcc-ptas} configures and invokes Alg.~\ref{algo:bcc-core}.


\begin{theorem}
\label{kbcc:ptas-guarantees}
\emph{($k$-BCC.)}
For any instance $\bigl( G=(U, V, E), k\bigr)$ of the $k$-BCC problem with bi-adjacency matrix 
$\mathbf{B}$
and for any desired accuracy parameter $\delta \in (0, 1)$,
Algorithm~\ref{algo:kbcc-ptas} computes a clustering $\widetilde{\mathcal{C}}^{\kexp}$ of $U \cup V$ 
into at most $k$ clusters,
such that
\begin{align}
   \agree\bigl(\widetilde{\mathcal{C}}^{\kexp}\bigr)
\nonumber & \ge
   \bigl( 1- \delta \bigr) 
   \cdot
   \agree\bigl(\mathcal{C}^{\kexp}_{\star}\bigr),
\end{align}
where $\mathcal{C}^{\kexp}_{\star}$ is the optimal $k$-clustering,
in time
$
2^{O(\sfrac{k}{\delta^{2}} \cdot \log{\sfrac{\sqrt{k}}{\delta}})} \cdot (\delta^{-2}+k) \cdot {(|U|+|V|)} + \Tsvd(\delta^{-2}).
$
\end{theorem}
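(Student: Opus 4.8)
The plan is to chain together the three ingredients already in hand---Lemma~\ref{lemma:num-agreements} (which recasts \maxagree as the bilinear maximization~\eqref{bcc:maxagree}), Lemma~\ref{full-rank-guarantees} (which controls the additive error of Alg.~\ref{algo:bilinear-lowrank} on a low-rank surrogate), and Lemma~\ref{lemma:local-solver-for-bcc} (which supplies the oracles $\mathrm{P}_{\mathcal{X}},\mathrm{P}_{\mathcal{Y}}$)---and then to show that a suitable choice of the internal parameters $r$ and $\epsilon$ as functions of $\delta$ turns the \emph{additive} guarantee of Lemma~\ref{full-rank-guarantees} into the claimed \emph{multiplicative} $(1-\delta)$ guarantee. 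First I would instantiate Lemma~\ref{full-rank-guarantees} with $\mathbf{A}=\mathbf{B}$ and $\widetilde{\mathbf{A}}=\widetilde{\mathbf{B}}$ the rank-$r$ truncated SVD, noting that for cluster-assignment matrices one has $\mu_{\mathcal{X}}=\|\mathbf{X}\|_{\frob}=\sqrt{|U|}$ and $\mu_{\mathcal{Y}}=\sqrt{|V|}$ (each row has exactly one nonzero, equal to $1$), and $\|\widetilde{\mathbf{B}}\|_2=\sigma_1(\mathbf{B})\le\|\mathbf{B}\|_2$. Writing $\agree(\widetilde{\mathcal{C}}^{\kexp})=\trace(\widetilde{\mathbf{X}}^{\transpose}\mathbf{B}\widetilde{\mathbf{Y}})+|E^{\minus}|$ and $\agree(\mathcal{C}^{\kexp}_{\star})=\trace(\mathbf{X}_{\star}^{\transpose}\mathbf{B}\mathbf{Y}_{\star})+|E^{\minus}|$ (using Lemma~\ref{lemma:num-agreements} and that $(\mathbf{X}_\star,\mathbf{Y}_\star)$ is the maximizer of~\eqref{bcc:maxagree}), the gap $\agree(\mathcal{C}^{\kexp}_{\star})-\agree(\widetilde{\mathcal{C}}^{\kexp})$ is exactly the trace gap bounded in Lemma~\ref{full-rank-guarantees}, namely
\begin{align}
  \agree\bigl(\mathcal{C}^{\kexp}_{\star}\bigr)-\agree\bigl(\widetilde{\mathcal{C}}^{\kexp}\bigr)
  \;\le\;
  2\bigl(\epsilon\sqrt{k}\,\sigma_1(\mathbf{B})+\sigma_{r+1}(\mathbf{B})\bigr)\sqrt{|U|\,|V|}.
  \nonumber
\end{align}

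The heart of the argument is then a lower bound on the optimum, $\agree(\mathcal{C}^{\kexp}_{\star})\ge c\,(|U|+|V|)$ or, better, in terms of $|E|=|U|\cdot|V|$ and the spectrum of $\mathbf{B}$, so that the additive error above can be absorbed into $\delta\cdot\agree(\mathcal{C}^{\kexp}_{\star})$. Two standard facts do this. One: the trivial clustering (all vertices in one cluster, or the better of "one cluster" versus "$U$ and $V$ in two clusters") already achieves at least half of $|E|=|U||V|$ agreements, so $\OPT\eqdef\agree(\mathcal{C}^{\kexp}_{\star})\ge\tfrac12|U||V|$; combined with $|U||V|\ge(|U|+|V|)^2/4\cdot$(balance factor)---or more simply just keeping $|U||V|$---this lets us write the additive error as $O(\epsilon\sqrt{k}\,\sigma_1(\mathbf{B})+\sigma_{r+1}(\mathbf{B}))\cdot\sqrt{|U||V|}\le O(\epsilon\sqrt{k}\,\sigma_1(\mathbf{B})/\sqrt{|U||V|}+\sigma_{r+1}(\mathbf{B})/\sqrt{|U||V|})\cdot\OPT$. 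Two: the spectral bounds $\sigma_1(\mathbf{B})\le\sqrt{\|\mathbf{B}\|_F^2}=\sqrt{|U||V|}$ and, crucially, $\sigma_{r+1}(\mathbf{B})\le\|\mathbf{B}\|_F/\sqrt{r+1}=\sqrt{|U||V|/(r+1)}$ (since the squared Frobenius norm is the sum of all $\min(|U|,|V|)$ squared singular values, the $(r+1)$-st is at most the average of the top $r+1$). Substituting, the additive error is at most $O\bigl(\epsilon\sqrt{k}+1/\sqrt{r}\bigr)\cdot\OPT$. Choosing $r=\Theta(\delta^{-2})$ and $\epsilon=\Theta(\delta/\sqrt{k})$ makes this at most $\delta\cdot\OPT$, which is exactly the desired $\agree(\widetilde{\mathcal{C}}^{\kexp})\ge(1-\delta)\agree(\mathcal{C}^{\kexp}_{\star})$; these are precisely the settings Alg.~\ref{algo:kbcc-ptas} is meant to plug into Alg.~\ref{algo:bcc-core}.

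For the running time I would simply substitute $r=\Theta(\delta^{-2})$ and $\epsilon=\Theta(\delta/\sqrt{k})$ into the complexity bound of Lemma~\ref{low-rank-solver-guarantees}. The dominant factor $\bigl(2\sqrt{r}/\epsilon\bigr)^{rk}$ becomes $\bigl(O(\sqrt{k}/\delta)\bigr)^{O(k/\delta^2)}=2^{O((k/\delta^2)\log(\sqrt{k}/\delta))}$; the oracle costs $\Tx=\Ty=O(km)$ and $O(km)$, $O(kn)$ from Lemma~\ref{lemma:local-solver-for-bcc} together with the $(m+n)r$ term give the $(\delta^{-2}+k)\cdot(|U|+|V|)$ polynomial factor; and the one truncated-SVD call contributes $\Tsvd(\delta^{-2})$. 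This matches the stated bound.

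\textbf{Main obstacle.} The only genuinely nontrivial step is converting the additive guarantee into a multiplicative one: this requires (i) the lower bound $\OPT\ge\tfrac12|U||V|$ on the optimum and (ii) the observation that the error terms $\epsilon\sqrt k\,\sigma_1(\mathbf{B})$ and $\sigma_{r+1}(\mathbf{B})$, each at most $\epsilon\sqrt k\,\sqrt{|U||V|}$ and $\sqrt{|U||V|/r}$ respectively, scale \emph{sublinearly} in $|E|=|U||V|$ relative to $\OPT$. Everything else---plugging in parameters, reading off the runtime---is routine bookkeeping. I would want to double-check the edge case where $\sigma_1(\mathbf{B})$ is much smaller than $\sqrt{|U||V|}$ (which only helps) and confirm that Alg.~\ref{algo:kbcc-ptas}'s parameter choices indeed realize the constants above, but no new idea is needed.
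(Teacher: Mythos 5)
Your proposal is correct and follows essentially the same route as the paper: the same chain of Lemma~\ref{lemma:num-agreements}, Lemma~\ref{low-rank-solver-guarantees}/\ref{full-rank-guarantees} with $\mu_{\mathcal{X}}=\sqrt{|U|}$, $\mu_{\mathcal{Y}}=\sqrt{|V|}$, the Frobenius-norm bounds $\sigma_{1}(\mathbf{B})\le\sqrt{mn}$ and $\sigma_{r+1}(\mathbf{B})\le\sqrt{mn/(r+1)}$, the lower bound $\OPT\ge mn/2$ (Fact~\ref{bcc:agree:lb}) to pass from additive to multiplicative error, and the parameter choices $r=\Theta(\delta^{-2})$, $\epsilon=\Theta(\delta/\sqrt{k})$. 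No gaps; only the explicit constants of Alg.~\ref{algo:kbcc-ptas} remain to be verified, as you note.
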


In the remainder of this section, we prove Theorem~\ref{kbcc:ptas-guarantees}.
We begin with the core Alg.~\ref{algo:bcc-core},
which invokes the low-rank bilinear solver Alg.~\ref{algo:bilinear-lowrank} with accuracy parameters $\epsilon$ and $r$ on the rank-$r$ matrix $\widetilde{\mathbf{B}}$ and computes a pair of valid cluster assignment matrices 
$\Xtilde^{\kexp}$, $\Ytilde^{\kexp}$.
By Lemma~\ref{low-rank-solver-guarantees},
and taking into account that $\sigma_{1}(\widetilde{\mathbf{B}}) = \sigma_{1}(\mathbf{B}) \le  \|\mathbf{B}\|_{\frob} \le \sqrt{mn}$, 
and the fact that
$\|\mathbf{X}\|_{F}=\sqrt{m}$
and
$\|\mathbf{Y}\|_{F}=\sqrt{n}$
for all valid cluster assignment matrix pairs,
the output pair satisfies
\begin{align}	 
   \resizebox{0.96\hsize}{!}{$
   \trace\bigl( \widetilde{\mathbf{X}}_{\star}^{\kexp}{}^{\transpose} \widetilde{\mathbf{B}} \widetilde{\Y}_{\star}^{\kexp}{} \bigr)
   -
   \trace\bigl( \widetilde{\mathbf{X}}^{\kexp}{}^{\transpose} \widetilde{\mathbf{B}} \widetilde{\Y}^{\kexp}{} \bigr)
   \le
      {2 \cdot \epsilon \cdot \sqrt{k} \cdot {mn}},
	  \nonumber
   $}
\end{align}
where 
$
	\bigl( \widetilde{\mathbf{X}}_{\star}^{\kexp}, \widetilde{\Y}_{\star}^{\kexp}{} \bigr)
	\eqdef
	\max_{\mathbf{X} \in \mathcal{X}, \mathbf{Y} \in \mathcal{Y}}
	\trace\bigl( {\mathbf{X}}^{\transpose} \widetilde{\mathbf{B}} {\Y} \bigr).
$

In turn,
by Lemma~\ref{full-rank-guarantees}
taking into account that
$
   \|\mathbf{B} - \widetilde{\mathbf{B}}\|_{2}
   \le \sqrt{mn}/\sqrt{r}
$
(see Cor.~\ref{sigma-bound}),
on the original bi-adjacency matrix~$\mathbf{B}$
the output pair satisfies
\begin{align}
   &
   \trace\left(\X_{\star}^{\kexp}{}^{\transpose}\mathbf{B}\Y_{\star}^{\kexp}{}\right)
   -
   \trace\bigl(\widetilde{\mathbf{X}}^{\kexp}{}^{\transpose}\mathbf{B}\widetilde{\Y}^{\kexp}{}\bigr)
   \nonumber\\&\qquad \qquad \le
   2  \epsilon  \sqrt{k} \cdot {mn}
   +
   2  ({r+1})^{-{1}/{2}} \cdot {mn},
   \label{bcc:proof:original-parametrized}
\end{align}
where
$
	\bigl( {\mathbf{X}}_{\star}^{\kexp}, {\Y}_{\star}^{\kexp}{} \bigr)
	\eqdef
	\max_{\mathbf{X} \in \mathcal{X}, \mathbf{Y} \in \mathcal{Y}}
	\trace\bigl( {\mathbf{X}}^{\transpose} {\mathbf{B}} {\Y} \bigr).
$
The unknown optimal pair ${\mathbf{X}}_{\star}^{\kexp}, {\Y}_{\star}^{\kexp}{}$ represents the optimal $k$-clustering $\mathcal{C}_{\star}^{\kexp}$,
\textit{i.e.}, the clustering that achieves the maximum number of agreements using at most $k$ clusters.
Similarly, let $\widetilde{\mathcal{C}}^{\kexp}$ be the clustering induced by the computed pair
$\widetilde{\mathbf{X}}^{\kexp}{}$, $\widetilde{\Y}^{\kexp}{}$.
From~\eqref{bcc:proof:original-parametrized} and Lemma~\ref{lemma:num-agreements}, it immediately follows that
\begin{align}
   \agree\bigl(\mathcal{C}_{\star}^{\kexp}\bigr)
   &-
   \agree\bigl(\widetilde{\mathcal{C}}^{\kexp}\bigr)
   \nonumber\\ & \quad
   \le
    2 \cdot ( \epsilon  \sqrt{k} + \sqrt{{r+1}} ) \cdot {mn}.
   \label{core-bcc:numagree-guarantee}
\end{align}
Eq.~\ref{core-bcc:numagree-guarantee} establishes the guarantee of Alg.~\ref{algo:bcc-core} as a function of its accuracy parameters $r$ and $\epsilon$.
Substituting those parameters by the values assigned by Alg.~\ref{algo:kbcc-ptas},
we obtain
\begin{align}
   \agree\bigl(\mathcal{C}_{\star}^{\kexp}\bigr)
   -
   \agree\bigl(\widetilde{\mathcal{C}}^{\kexp}\bigr)
   &\le
     2^{-1} \cdot \delta \cdot {mn}.
     \label{kbcc:bound-with-delta}
\end{align}

\begin{fact}
\label{bcc:agree:lb}
For any \bcc instance {(or $k$-\bcc with ${k\ge 2}$)},
the optimal clustering achieves at least ${m}{n}/2$ agreements.
\end{fact}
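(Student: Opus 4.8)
The plan is to exhibit two concrete feasible clusterings, each using at most two clusters, whose agreement counts sum to the total number of edges $mn$; the better of the two then certifies the claimed lower bound on $\agree(\mathcal{C}_\star)$.

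First I would consider the \emph{single-cluster} clustering $\mathcal{C}_1 = \{U \cup V\}$. Every edge of $G$ lies in the interior of the unique cluster, so by definition $\agree(\mathcal{C}_1) = \lvert E^{\plus}\rvert$ (every `\plus' edge is an agreement, every `\minus' edge is a disagreement). This uses one cluster, hence is feasible both for unconstrained \bcc and for $k$-\bcc with $k \ge 1$.

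Next I would consider the \emph{bipartition} clustering $\mathcal{C}_2 = \{U, V\}$, i.e. all of $U$ in one cluster and all of $V$ in another. Since $G$ is bipartite, every edge runs between $U$ and $V$ and is therefore cut; thus $\agree(\mathcal{C}_2) = \lvert E^{\minus}\rvert$. This uses two clusters, so it is feasible for unconstrained \bcc and for $k$-\bcc whenever $k \ge 2$.

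Finally, since $G$ is complete bipartite we have $\lvert E^{\plus}\rvert + \lvert E^{\minus}\rvert = \lvert E\rvert = mn$, so at least one of the two counts is $\ge mn/2$. Hence $\agree(\mathcal{C}_\star) \ge \max\{\agree(\mathcal{C}_1), \agree(\mathcal{C}_2)\} = \max\{\lvert E^{\plus}\rvert, \lvert E^{\minus}\rvert\} \ge mn/2$, which is the statement. There is no real obstacle here; the only point requiring a moment's care is checking that $\mathcal{C}_2$ respects the cluster budget (it does, since $k \ge 2$), and that $\mathcal{C}_1$ suffices on its own in the $k=1$ case of unconstrained \bcc only if $\lvert E^{\plus}\rvert \ge mn/2$, which is exactly why both clusterings are needed.
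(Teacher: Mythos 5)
Your proof is correct and follows exactly the paper's argument: the paper likewise compares the single all-vertex cluster (achieving $\lvert E^{\plus}\rvert$ agreements) against the two-cluster $\{U,V\}$ partition (achieving $\lvert E^{\minus}\rvert$ agreements) and notes that one of the two must reach $mn/2$ since the graph is complete bipartite. Your version just spells out the bookkeeping slightly more explicitly; there is no substantive difference.
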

\vspace{-10pt}
\begin{proof}
If more than half of the edges are labeled `\plus', a single cluster containing all vertices achieves at least $nm/2$ agreements. 
Otherwise, two clusters corresponding to $U$ and $V$ achieve the same result.
\end{proof}

In other words,
Fact~\ref{bcc:agree:lb} states that
\begin{align}
   \agree\bigl(\mathcal{C}^{\kexp}_{\star}\bigr)
   \ge
   mn/2, \quad \forall k \ge 2.
   \label{fact-made-simple}
\end{align}
Combining~\eqref{fact-made-simple} with~\eqref{kbcc:bound-with-delta},
we obtain
\begin{align}
   \agree\bigl(\widetilde{\mathcal{C}}^{\kexp}\bigr)
   &\ge
   (1-\delta) \cdot \agree\bigl(\mathcal{C}_{\star}^{\kexp}\bigr),
   \nonumber
\end{align}
which is the desired result.
Finally,
the computational complexity is obtained substituting the appropriate values in that of Alg.~\ref{algo:bilinear-lowrank} as in Lemma~\ref{low-rank-solver-guarantees}, and the time complexity of subroutine Alg.~\ref{algo:zero-one-orthogonal} given in Lemma~\ref{lemma:local-solver-for-bcc}.
This completes the proof of Theorem~\ref{kbcc:ptas-guarantees}.

\section{An Efficient PTAS for BCC}
We provide an efficient polynomial time approximation scheme (EPTAS) for {\bcc}/{\maxagree},
\textit{i.e.}, the unconstrained version with no upper bound on the number of output clusters.
We rely on the following key observation:
\emph{any constant factor approximation of the \maxagree objective, can be achieved by constant number of clusters}.%
\footnote{%
A similar result for {\cc}/{\maxagree} exists in~\cite{bansal2004correlation}.
} 
Hence, the desired approximation algorithm for \bcc can be obtained by invoking the $k$-\bcc algorithm under the appropriate configuration.

Recall that in the unconstrained \bcc setting,
the number of output clusters is an optimization variable;
the optimal clustering may comprise any number of clusters, from a single cluster containing all vertices of the graph,
up to $|U|+|V|$ singleton clusters.
In principle, one could solve $\maxagree[k]$ for all possible values of the parameter~$k$ to identify the best clustering,
but this approach is computationally intractable.

On the contrary, if one is willing to settle for an approximately optimal solution,
then a constant number of clusters may suffice.
More formally,
\begin{lemma}
   \label{lemma:bcc-constant-num-clusters}
   For any \bcc instance, and ${0 < \epsilon \le 1}$,
   there exists a clustering $\mathcal{C}$ with at most $k = 2\cdot\epsilon^{-1}+2 $ clusters such that
   $\agree(\mathcal{C}) \ge  \agree(\mathcal{C}_{\star})- \epsilon \cdot {n}{m}$,
   where
   $\mathcal{C}_{\star}$ denotes the optimal clustering.
\end{lemma}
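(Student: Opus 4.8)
The plan is to start from the optimal clustering $\mathcal{C}_\star = \{C_1, C_2, \dots\}$, which may have arbitrarily many clusters, and show that merging all but the ``heaviest'' few clusters into a single lumped cluster costs only $O(\epsilon \cdot nm)$ agreements. First I would sort the clusters of $\mathcal{C}_\star$ by size (number of vertices, counting both sides), so that $|C_1| \ge |C_2| \ge \cdots$. Keep the $t = 2\epsilon^{-1}+1$ largest clusters intact, and merge everything else, $C_{t+1}, C_{t+2}, \dots$, into one new cluster $C_{\mathrm{rest}}$; this yields a clustering $\mathcal{C}$ with at most $t+1 = 2\epsilon^{-1}+2$ clusters. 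The only agreements that can be lost by this operation are the `\minus'-edges that used to be cut between two distinct small clusters $C_i, C_j$ with $i,j > t$ but are now inside $C_{\mathrm{rest}}$ (no `\plus'-edge agreement is ever lost, since we only coarsen the partition, and all previously-cut `\minus' edges between a surviving cluster and anything else remain cut). So it suffices to bound the number of bipartite edges lying strictly between the merged clusters.

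The key step is the counting bound. Write $s_i = |C_i|$ for the size of the $i$-th cluster. Because a bipartite edge between $C_i$ and $C_j$ has one endpoint in $U$ and one in $V$, the number of such cross edges is at most $|C_i \cap U|\cdot|C_j \cap V| + |C_i \cap V|\cdot|C_j \cap U| \le s_i s_j$ (crudely), so the total number of edges between the merged clusters is at most $\sum_{t < i < j} s_i s_j \le \tfrac12\bigl(\sum_{i>t} s_i\bigr)^2$. Now I use the standard ``tail of a sorted sequence'' trick: since $s_1 \ge \cdots \ge s_t \ge s_{t+1}$, each of the first $t$ clusters is at least as large as every tail cluster, and $\sum_{i=1}^{t} s_i \le m+n$, so $s_{t+1} \le (m+n)/t$. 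More usefully, the tail mass $T = \sum_{i>t} s_i$ satisfies: the surviving clusters already account for $\sum_{i\le t} s_i = (m+n) - T$, and since each surviving $s_i \ge$ (average of the tail is $\le s_{t+1}$), a clean way is $\bigl(\sum_{i>t}s_i\bigr)^2$ is controlled by noting $\sum_{i>t} s_i^2 \le s_{t+1}\sum_{i>t}s_i$ and $s_{t+1} \le (m+n)/(t+1)$. Actually the cleanest route: the number of \emph{cut} pairs among tail clusters, $\sum_{t<i<j} s_i s_j = \tfrac12\bigl(T^2 - \sum_{i>t}s_i^2\bigr)$, and I want to show this is $\le \epsilon \cdot mn$. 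I would bound it by $\tfrac12 T^2 \le \tfrac12 \cdot s_{t+1}\cdot$... — more robustly, observe $\sum_{t<i<j}s_is_j \le \bigl(\max_{i>t}s_i\bigr)\cdot\sum_{i>t}s_i \le \tfrac{m+n}{t+1}\cdot(m+n) \le \tfrac{(m+n)^2}{t+1}$; then using $(m+n)^2 \le 2(m^2+n^2) + 2mn$ is still not quite $\le \epsilon mn$ for skewed $m,n$, so I would instead split the edge count between the $U$- and $V$-sides separately: edges between tail clusters $\le \bigl(\sum_{i>t}|C_i\cap U|\bigr)\cdot\bigl(\max_{i>t}|C_i\cap V|\bigr) + \bigl(\sum_{i>t}|C_i\cap V|\bigr)\cdot\bigl(\max_{i>t}|C_i\cap U|\bigr) \le m\cdot\tfrac{n}{(t+1)/2} + n\cdot\tfrac{m}{(t+1)/2}$, where I order by the quantity $\min(|C_i\cap U|,|C_i\cap V|)$ or handle the two sides with two separate sorted orderings — this gives $\le \tfrac{4mn}{t+1} = \tfrac{4mn}{2\epsilon^{-1}+2} = \epsilon mn$ with the stated $t$.

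I expect the main obstacle to be exactly this last point: getting the bound in terms of $mn$ rather than $(m+n)^2$, since a naive size-sorting argument produces the latter and that is too weak when the graph is lopsided ($m \ll n$). The fix is to do the accounting sidewise — bound cross-edges by (sum of $U$-parts)$\times$(max $V$-part) plus the symmetric term, and choose the kept clusters so that both ``max $V$-part'' and ``max $U$-part'' of the tail are at most $O(1/\epsilon^{-1})$ fractions of $n$ and $m$ respectively; this may require keeping the top clusters under \emph{two} orderings (top $\epsilon^{-1}$ by $U$-intersection and top $\epsilon^{-1}$ by $V$-intersection), still totalling at most $2\epsilon^{-1}$ kept clusters plus the merged one plus possibly one more, which matches $k = 2\epsilon^{-1}+2$. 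Once the cross-edge count is pinned at $\le \epsilon\cdot nm$, the conclusion $\agree(\mathcal{C}) \ge \agree(\mathcal{C}_\star) - \epsilon\cdot nm$ is immediate since those are the only agreements sacrificed.
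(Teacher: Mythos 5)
There is a genuine gap, and it sits exactly where you suspected trouble. Your plan merges all of the small (tail) clusters into a \emph{single} cluster $C_{\mathrm{rest}}$, so the agreements sacrificed are the `\minus' edges running between two distinct tail clusters, which become interior to $C_{\mathrm{rest}}$. This quantity cannot be bounded by $\epsilon\cdot mn$ for any choice of which $O(\epsilon^{-1})$ clusters to keep. Concretely, take $m=n=N$ with $\mathcal{C}_{\star}$ the perfect-matching clustering $\lbrace u_i, v_i\rbrace$, $i=1,\dots,N$, where each edge $(u_i,v_i)$ is `\plus' and all other edges are `\minus'; then $\agree(\mathcal{C}_{\star})=N^{2}$, every cluster is small under any reasonable criterion, and merging the $N-O(\epsilon^{-1})$ tail clusters into one cluster internalizes roughly $N^{2}=mn$ negative edges, destroying almost all agreements. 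Your proposed ``sidewise'' counting inequality --- that the number of edges between distinct tail clusters is at most $(\sum_{i>t}|C_i\cap U|)\cdot(\max_{i>t}|C_i\cap V|)+(\sum_{i>t}|C_i\cap V|)\cdot(\max_{i>t}|C_i\cap U|)$ --- is false on this same example (it would give $O(N)$ against the true count $N^{2}-N$): the exact count is $\sum_{i\neq j}u_iv_j=(\sum_i u_i)(\sum_j v_j)-\sum_i u_iv_i$ with $u_i=|C_i\cap U|$, $v_i=|C_i\cap V|$, which is genuinely of order $mn$ here. So no counting refinement can rescue the single-lump merge.

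The missing idea --- and the one place the paper's proof actually uses bipartiteness --- is to place the residual vertices into \emph{two} clusters, $\mathcal{B}\cap U$ and $\mathcal{B}\cap V$, rather than one. Since there are no edges inside one side of a bipartite graph, these two clusters internalize no `\minus' edges, so no `\minus' agreements are lost; the only loss is the `\plus' edges interior to each dismantled small cluster $C$, at most $|C\cap U|\cdot|C\cap V|$ per cluster. The paper selects the clusters to dismantle by a two-sided threshold (at most $\epsilon m/2$ vertices of $U$, or at most $\epsilon n/2$ vertices of $V$), so these per-cluster losses sum to at most $\tfrac{\epsilon m}{2}\cdot n+\tfrac{\epsilon n}{2}\cdot m=\epsilon mn$, while every surviving cluster has size at least $\epsilon(m+n)/2$, forcing at most $2\epsilon^{-1}+2$ clusters in total. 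Your instinct to keep the top clusters under two separate orderings (by $U$-intersection and by $V$-intersection) is essentially this selection rule, and it is also how one gets $mn$ rather than $(m+n)^2$ in the bound --- but it only pays off once the tail is split by side; with a single lumped residual cluster the argument cannot be repaired.
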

We defer the proof to the end of the section.

In conjunction with Fact~\ref{bcc:agree:lb},
Lemma~\ref{lemma:bcc-constant-num-clusters}  suggests that to obtain a constant factor approximation for the unconstrained \bcc/\maxagree problem, for any constant arbitrarily close to $1$, 
it suffices to solve a $k$-\bcc instance for a sufficiently large --but constant-- number of clusters~$k$.
In particular, to obtain an ${(1-\delta)}$-factor approximation
for any constant ${\delta \in (0,1)}$,
it suffices to solve the $k$-\bcc problem with an upper bound ${k=O(\delta^{-1})}$ on the number of clusters.

\begin{algorithm}[tb!]
   \caption{$k$-BCC/\maxagree}
   \label{algo:kbcc-ptas}
   {\color{black}    \small
   \begin{algorithmic}[1]
   \INPUT:
   Bi-adjacency matrix $\mathbf{B} \in \lbrace \pm 1\rbrace^{m,n}$,\\
   \quad\,\,\,\,\, Target number of  $k$,\\
   \quad\,\,\,\,\, Accuracy $\delta \in (0,1)$.
   \OUTPUT Clustering $\widetilde{\mathcal{C}}^{\kexp}$ of $U \cup V$ such that\\
   \quad\,\,\,\,\, $
      \agree\bigl(\widetilde{\mathcal{C}}^{\kexp}\bigr)
      \ge
      \bigl( 1- {\delta} \bigr) 
      \cdot
      \agree\bigl(\mathcal{C}^{\kexp}_{\star}\bigr).
      $
   \STATE Set up parameters:\\
   $\epsilon \gets 2^{-3} \cdot \delta \cdot k^{-1/2}$, \,
   $r \gets 2^{6} \cdot \delta^{-2} - 1$.
   \STATE Return output of Alg.~\ref{algo:bcc-core}
   for input $(\mathbf{B}, k, r, \epsilon)$.
   \end{algorithmic}
   }
\end{algorithm}

Alg.~\ref{algo:bcc-ptas} outlines the approximation scheme for \bcc.
For a given accuracy parameter~$\delta$, it invokes 
Alg.~\ref{algo:bcc-core} under an appropriate configuration of the parameters $k$, $r$ and $\epsilon$,
yielding the following guarantees:
\begin{theorem}
   \label{bcc:algo-guarantees}
   \emph{(PTAS for BCC.)}
   For any instance $G=(U, V, E)$ of the \bcc problem with bi-adjacency matrix~$\mathbf{B}$
   and for any desired accuracy parameter $\delta \in (0, 1)$,
   Algorithm~\ref{algo:bcc-ptas} computes a clustering $\widetilde{\mathcal{C}}$ of $U \cup V$ 
   into (at most) ${2^{3} \cdot \delta^{-1}}$ clusters,
   such that
   \begin{align}
      \agree\bigl(\widetilde{\mathcal{C}}\bigr)
   \nonumber & \ge
      \bigl( 1- {\delta} \bigr) 
      \cdot
      \agree\bigl(\mathcal{C}_{\star}\bigr),
   \end{align}
   where $\mathcal{C}_{\star}$ is an optimal clustering (with no constraint on the number of clusters),
   in time
   $$
   2^{O(\delta^{-3} \cdot \log{\delta^{-3}})} \cdot \delta^{-2} \cdot {(m+n)} + \Tsvd(\delta^{-2}).
   $$
\end{theorem}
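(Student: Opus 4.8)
The plan is to reduce the unconstrained \bcc problem to the $k$-\bcc problem solved in Theorem~\ref{kbcc:ptas-guarantees}, using Lemma~\ref{lemma:bcc-constant-num-clusters} to control the gap introduced by capping the number of clusters. First I would fix the accuracy parameter $\delta \in (0,1)$ and recall that Alg.~\ref{algo:bcc-ptas} invokes Alg.~\ref{algo:bcc-core} with a specific choice of $k$, $r$, and $\epsilon$. The natural choice, consistent with the claimed $2^3\cdot\delta^{-1}$ bound on the output clusters, is $k = 2^3\cdot\delta^{-1} = 2\cdot(\sfrac{\delta}{2})^{-1}$, i.e.\ applying Lemma~\ref{lemma:bcc-constant-num-clusters} with accuracy parameter $\epsilon' = \delta/2$ so that $k = 2\epsilon'^{-1}+2$ up to the additive constant (which I would absorb into the stated $2^3\delta^{-1}$ bound, or handle $\delta$ small enough that it is dominated). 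With that $k$, Lemma~\ref{lemma:bcc-constant-num-clusters} guarantees the existence of a $k$-clustering $\mathcal{C}$ with $\agree(\mathcal{C}) \ge \agree(\mathcal{C}_\star) - (\delta/2)\cdot nm$, where $\mathcal{C}_\star$ is the unconstrained optimum; in particular the optimal $k$-clustering $\mathcal{C}^{\kexp}_\star$ satisfies $\agree(\mathcal{C}^{\kexp}_\star) \ge \agree(\mathcal{C}_\star) - (\delta/2)\cdot nm$.

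Next I would run the $k$-\bcc machinery. Rather than invoking Theorem~\ref{kbcc:ptas-guarantees} as a black box (which would give a $(1-\delta)$ approximation against $\mathcal{C}^{\kexp}_\star$, not against $\mathcal{C}_\star$, and would double-count the slack), I would go back to the additive guarantee~\eqref{core-bcc:numagree-guarantee} for Alg.~\ref{algo:bcc-core}:
\begin{align}
   \agree\bigl(\mathcal{C}^{\kexp}_{\star}\bigr) - \agree\bigl(\widetilde{\mathcal{C}}\bigr)
   \le 2\cdot\bigl(\epsilon\sqrt{k} + \sqrt{r+1}\bigr)\cdot mn.
   \nonumber
\end{align}
With the parameter settings $\epsilon = 2^{-3}\cdot\delta\cdot k^{-1/2}$ and $r = 2^6\cdot\delta^{-2} - 1$ from Alg.~\ref{algo:kbcc-ptas} (reused inside Alg.~\ref{algo:bcc-ptas} with this $k$), the two terms become $\epsilon\sqrt{k} = 2^{-3}\delta$ and $\sqrt{r+1} = 2^{3}\delta^{-1}$~--- wait, that last term is not small; I would instead choose $r = 2^6\cdot\delta^{-2}$ scaled so that $2\sqrt{r+1}^{-1}\cdot mn$ — more precisely I would take $r$ large enough, of order $\delta^{-2}$, that $2(\epsilon\sqrt{k}+(r+1)^{-1/2})\cdot mn \le (\delta/4)\cdot mn$, matching the $(r+1)^{-1/2}$ error term that actually appears in~\eqref{bcc:proof:original-parametrized}. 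Concretely $\epsilon = 2^{-3}\delta k^{-1/2}$ contributes $2^{-2}\delta \cdot \tfrac14$ and $r+1 = 2^8\delta^{-2}$ contributes $2\cdot 2^{-4}\delta = \delta/8$, so the total Alg.~\ref{algo:bcc-core} error is at most $(\delta/4)\cdot mn$ after adjusting constants. Combining the two inequalities gives
\begin{align}
   \agree\bigl(\mathcal{C}_{\star}\bigr) - \agree\bigl(\widetilde{\mathcal{C}}\bigr)
   \le \tfrac{\delta}{2}\cdot mn + \tfrac{\delta}{4}\cdot mn \le \tfrac{3\delta}{4}\cdot mn,
   \nonumber
\end{align}
and then Fact~\ref{bcc:agree:lb}, $\agree(\mathcal{C}_\star) \ge mn/2$, upgrades this additive bound to the multiplicative $\agree(\widetilde{\mathcal{C}}) \ge (1-\tfrac{3\delta}{2})\agree(\mathcal{C}_\star)$; reparametrizing $\delta \leftarrow 2\delta/3$ (or tightening the constants in the parameter choices) yields the stated $(1-\delta)$ factor.

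Finally I would read off the running time: substituting $k = \Theta(\delta^{-1})$, $r = \Theta(\delta^{-2})$, and $\epsilon = \Theta(\delta\cdot\delta^{1/2}) = \Theta(\delta^{3/2})$ into the complexity of Alg.~\ref{algo:bilinear-lowrank} from Lemma~\ref{low-rank-solver-guarantees}, the dominant factor $(\sfrac{2\sqrt r}{\epsilon})^{rk}$ becomes $(\Theta(\delta^{-1}/\delta^{3/2}))^{\Theta(\delta^{-3})} = 2^{\Theta(\delta^{-3}\log\delta^{-1})} = 2^{O(\delta^{-3}\log\delta^{-3})}$, the oracle/linear-algebra overhead contributes the $\delta^{-2}\cdot(m+n)$ factor via Lemma~\ref{lemma:local-solver-for-bcc}, and the truncated SVD costs $\Tsvd(\delta^{-2})$, matching the claimed bound. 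The main obstacle is bookkeeping the constants: I must make sure the single parameter $\delta$ is split cleanly between the "few clusters suffice" slack of Lemma~\ref{lemma:bcc-constant-num-clusters} and the low-rank/$\epsilon$-net approximation slack of Alg.~\ref{algo:bcc-core}, and that the additive-to-multiplicative conversion via Fact~\ref{bcc:agree:lb} does not silently blow up the factor~--- this is exactly why the theorem states $2^3\delta^{-1}$ clusters rather than the bare $2\delta^{-1}+2$ of the lemma. The proof of Lemma~\ref{lemma:bcc-constant-num-clusters} itself is deferred and is really the only non-routine ingredient; everything else is assembling results already in hand.
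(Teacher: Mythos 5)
Your proposal is correct and follows essentially the same route as the paper: combine the additive guarantee~\eqref{core-bcc:numagree-guarantee} of Alg.~\ref{algo:bcc-core} against the best $k$-clustering with Lemma~\ref{lemma:bcc-constant-num-clusters} to pass to the unconstrained optimum, then invoke Fact~\ref{bcc:agree:lb} to convert the additive bound into the multiplicative $(1-\delta)$ factor. The only difference is bookkeeping: the paper's Alg.~\ref{algo:bcc-ptas} chooses $\epsilon$ and $r$ so that each of the two error sources contributes at most $\tfrac{\delta}{4}\cdot mn$, making the total $\tfrac{\delta}{2}\cdot mn$ and rendering your final reparametrization $\delta \leftarrow 2\delta/3$ unnecessary.
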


\begin{algorithm}[t!]
   \caption{A PTAS for \bcc/\maxagree}
   \label{algo:bcc-ptas}
   {\color{black}    \small
   \begin{algorithmic}[1]
   \INPUT:
   Bi-adjacency matrix $\mathbf{B} \in \lbrace \pm 1\rbrace^{m,n}$,\\
   \quad\,\,\,\,\, Accuracy $\delta \in (0,1)$.
   \OUTPUT Clustering $\widetilde{\mathcal{C}}$ of $U \cup V$ (into at most\\
   \quad\,\,\,\,\, $2^{3} \cdot \delta^{-1}$ clusters) such that\\
   \quad\,\,\,\,\, 
   $
   \agree\bigl(\widetilde{\mathcal{C}}\bigr)
   \ge
   \bigl( 1- {\delta} \bigr) 
   \cdot
   \agree\bigl(\mathcal{C}_{\star}\bigr).
   $
   \STATE Set up parameters:\\
   $k \gets 2^{3} \cdot \delta^{-1}$, \,
   $\epsilon \gets 2^{-6} \cdot \delta^{2}$, \,
   $r \gets 2^{8} \cdot \delta^{-2} - 1$.
   \STATE Return output of Alg.~\ref{algo:bcc-core} for input $(\mathbf{B}, k, r, \epsilon)$.
   \end{algorithmic}
   }
\end{algorithm}
The proof of Theorem~\ref{bcc:algo-guarantees} follows from the guarantees of Alg.~\ref{algo:bcc-core} in~\eqref{core-bcc:numagree-guarantee} substituting the values of the parameters $k$, $r$ and $\epsilon$  with the values specified by Alg~\ref{algo:bcc-ptas}.
The core $k$-\bcc Alg.~\ref{algo:bcc-core} returns a clustering~$\widetilde{\mathcal{C}}$ with at most $k_0=2^{3} \cdot \delta^{-1}$ clusters that satisfies
\begin{align}
   \agree\bigl(\widetilde{\mathcal{C}}\bigr)
   &\ge
   \agree\bigl(\mathcal{C}_{\star}^{\mathsmaller{(k_0)}}\bigr)
   - \tfrac{\delta}{4} \cdot {mn},
\label{bcc:trace-guarantee-final-fixed-k}
\end{align}
where $\mathcal{C}_{\star}^{\mathsmaller{(k_0)}}$ is the best among the clusterings using at most $k_0$ clusters.
Also, 
for $k=k_0$,
Lemma~\ref{lemma:bcc-constant-num-clusters} implies that
$\agree(\mathcal{C}_{\star}^{\mathsmaller{(k_0)}}) \ge  \agree(\mathcal{C}_{\star})- \sfrac{\delta}{4} \cdot {n}{m}$, where~$\mathcal{C}_{\star}$ is the optimal clustering without any constraint on the number of output clusters.
Hence, 
\begin{align}
   \agree\bigl(\widetilde{\mathcal{C}}\bigr)
& \ge
   \agree\bigl(\mathcal{C}_{\star}\bigr)
   -
   \sfrac{\delta}{2} \cdot {m}{n}.
\label{bcc:clustering-compared-to-optimal}
\end{align}
Continuing from~\eqref{bcc:clustering-compared-to-optimal},
and taking into account Fact~\ref{bcc:agree:lb},
we conclude that the output $\widetilde{\mathcal{C}}$ of Alg.~\ref{algo:bcc-ptas} satisfies
\begin{align}
   \agree\bigl(\widetilde{\mathcal{C}}\bigr)
 \ge
   ( 1- \delta) 
   \cdot
   \agree\bigl(\mathcal{C}_{\star}\bigr),
\end{align}
which is the desired result.
Finally, the computational complexity of Alg.~\ref{algo:bcc-ptas} follows from that of Alg.~\ref{algo:bcc-core} substituting the parameter values in Lemma~\ref{low-rank-solver-guarantees}, taking into account the running time $\Tx = O(\delta^{-2} \cdot {m})$
of Alg.~\ref{algo:zero-one-orthogonal} used as subroutine $\mathrm{P}_{\mathcal{X}}(\cdot)$ 
and similarly $\Ty = O(\delta^{-2} \cdot {n})$ for 
$\mathrm{P}_{\mathcal{Y}}(\cdot)$.
This concludes the proof of Theorem~\ref{bcc:algo-guarantees}.

For any desired constant accuracy ${\delta \in (0,1)}$,
Alg.~\ref{algo:bcc-ptas} outputs a clustering that achieves a number of agreements within a $(1-\delta)$-factor from the optimal,
in time that grows exponentially in $\delta^{-1}$, but linearly in the size~${mn}$ of the input.
In other words, Alg.~\ref{algo:bcc-ptas} is an EPTAS for \bcc/\maxagree.

We complete the section with a proof for Lemma~\ref{lemma:bcc-constant-num-clusters},
which stated that a constant number of clusters suffices to obtain an approximately optimal solution.

\subsection{Proof of Lemma~\ref{lemma:bcc-constant-num-clusters}}
\label{sec:proof-constant-num-of-clusters}
It suffices to consider $\epsilon > 2/{({m+n}-2)}$
as the lemma holds trivially otherwise. 
Without loss of generality, we focus on clusterings whose
all but at most two clusters contain vertices from both $U$ and $V$;
one of the two remaining clusters can contain vertices only from $U$ and the other only from $V$.
To verify that, consider an arbitrary clustering $\mathcal{C}$
and let $\mathcal{C}^{\prime}$ be the clustering obtained by merging 
all clusters of $\mathcal{C}$ containing only vertices of $U$ into a single cluster, and those containing only vertices of $V$ into another.
The number of agreements is not affected, \textit{i.e.},
 $\agree(\mathcal{C}^{\prime}) = \agree(\mathcal{C})$.

We show that a clustering $\mathcal{C}$ with the properties described in the lemma exists
by modifying~$\mathcal{C}_{\star}$.
Let
   $\mathcal{S}_{U} \subseteq \mathcal{C}_{\star}$ be the set of clusters that contain at most $\epsilon{m}/2$ vertices of $U$, and 
   $\mathcal{S}_{V} \subseteq \mathcal{C}_{\star}$ those containing at most $\epsilon{n}/2$ vertices of $V$.
($\mathcal{S}_{U}$ and $\mathcal{S}_{V}$ may not be disjoint).
Finally, let $\mathcal{B}$ be the set of vertices contained in clusters of $\mathcal{S}_{U} \cup \mathcal{S}_{V}$.
We construct $\mathcal{C}$ as follows.
Clusters of~$\mathcal{C}_{\star}$ not in $\mathcal{S}_{U} \cup \mathcal{S}_{V}$
are left intact. 
Each such cluster has size at least $\epsilon {(n+m)}/2$.
Further, all vertices in~$\mathcal{B}$ are rearranged into two clusters:
one for the vertices in $\mathcal{B} \cap U$
and one those in $\mathcal{B} \cap V$.

The above rearrangement can reduce the number of agreements by at most $\epsilon{n}{m}$.
To verify that, consider a cluster $C$ in $\mathcal{S}_{U}$;
the cluster contains at most $\epsilon{m}/2$ vertices of $U$.
Let $t \eqdef \lvert V \cap C \rvert$.
Splitting~$C$ into two smaller clusters $C \cap U$ and $C \cap V$
can reduce the number of agreements by $\frac{1}{2}\epsilon {m} t$, \textit{i.e.}, the total number of edges in~$C$. 
Note that agreements on $-$ edges are not affected by splitting a cluster.
Performing the same operation on all clusters in~$\mathcal{S}_{U}$ incurs a total reduction
\begin{align}
 \tfrac{1}{2} \epsilon {m} \sum_{C \in \mathcal{S}_{U}} \lvert V \cap C \rvert
& \le
 \tfrac{1}{2} \epsilon{m} |V|
 =
 \tfrac{1}{2} \epsilon{m}{n}.
 \nonumber
\end{align}
Repeating on $\mathcal{S}_{V}$ incurs an additional cost of at most $\epsilon{m}{n}/2$ agreements,
while merging all clusters containing only vertices from $U$ (similarly for $V$) into a single cluster does not reduce the agreements. 
We conclude that
 $
   \agree(\mathcal{C})
   \ge
   \agree(\mathcal{C}_{\star}) - \epsilon{nm}.
 $

Finally, by construction all but at most two clusters in~$\mathcal{C}$ contain at least $\frac{1}{2}\epsilon {(n+m)}$.
In turn,
 $$
 n + m
 = \sum_{C \in \mathcal{C}} |C|
 \ge 
 \left( |\mathcal{C}|-2 \right) \cdot \tfrac{1}{2} \epsilon {(n+m)},
 $$
 from which the desired result follows.
 \qed

\section{Experiments}

We evaluate our algorithm on synthetic and real data
and compare with PivotBiCluster~\cite{ailon2012improved}
and the SDP-based approach of~\cite{swamy2004correlation}%
\footnote{
  All algorithms are prototypically implemented in Matlab. 
  ~\cite{swamy2004correlation} was implemented using CVX. 
}
Although \cite{amit2004bicluster,charikar2003clustering, demaine2006correlation} provide algorithms applicable to \bcc,
those rely on LP or SDP formulations with a high number of constraints
which imposes limitations in practice.

We run our core $k$-\bcc algorithm (Alg.~\ref{algo:bcc-core}) to obtain a $k$-clustering.
Disregarding the theoretical guarantees, we apply a threshold on the execution time which can only hurt the performance; equivalently, the iterative procedure of Alg.~\ref{algo:bilinear-lowrank} is executed for an arbitrary subset of the points in the $\epsilon$-net (Alg.~\ref{algo:bilinear-lowrank}, step~$3$).

\subsection{Synthetic Data}
We generate synthetic \bcc instances as follows.
We arbitrarily set $m=100$, $n=50$, and number of clusters~$k^{\prime}=5$,
and construct a complete bipartite graph ${G=(U,V,E)}$ with $|U|=m$, $|V|=n$ and assign binary $\pm 1$ labels to the edges according to a random $k^{\prime}$-clustering of the vertices.
Then, we modify the sign of each label independently with probability~$p$.
We consider multiple values of $p$ in the range $[0, 0.5]$.
For each value, we generate $10$ random instances as described above and compute a vertex clustering using all algorithms.

\setlength{\belowcaptionskip}{-10pt}
\begin{figure}[t!]
\centering
\includegraphics[width=\columnwidth]{./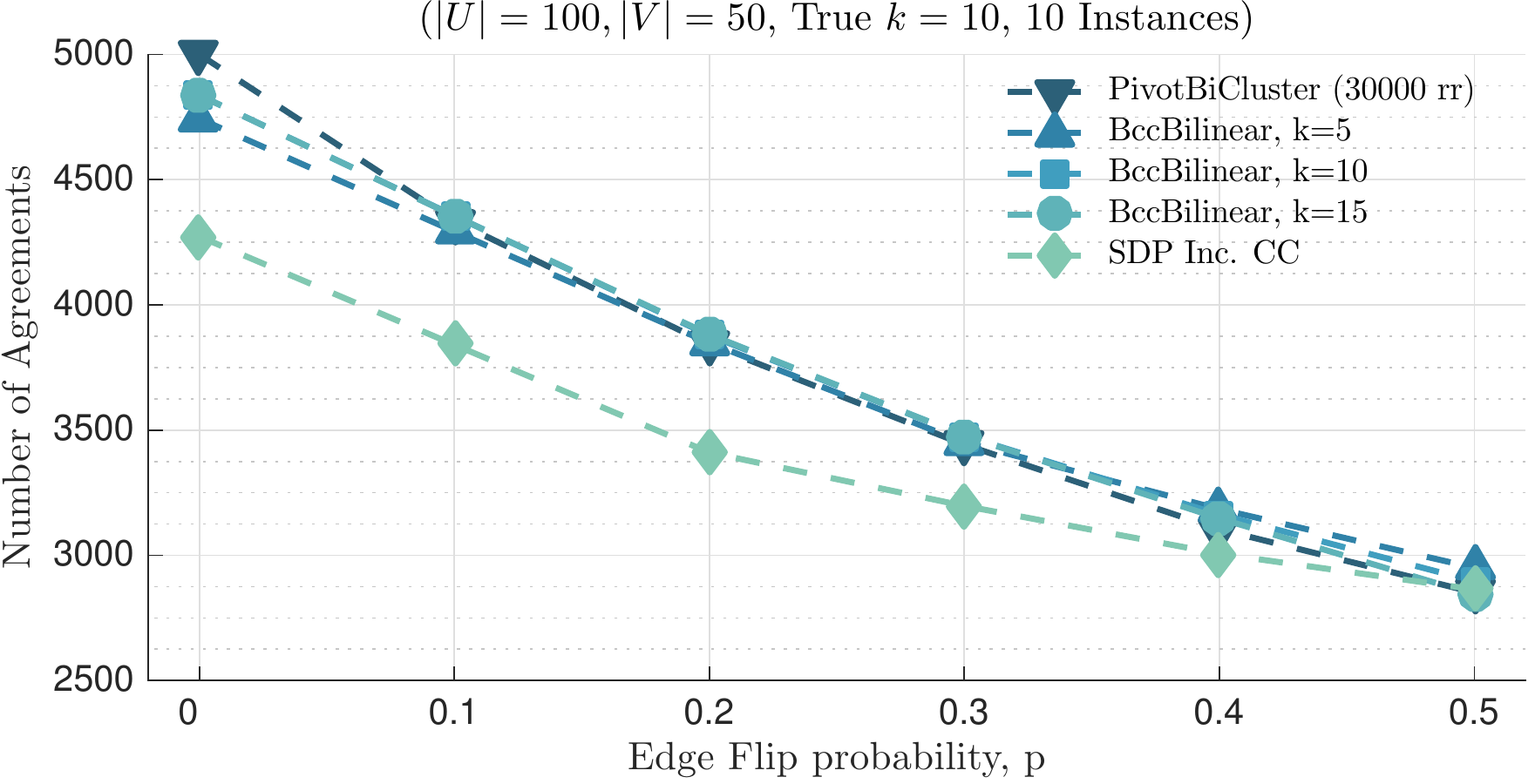}
\includegraphics[width=\columnwidth]{./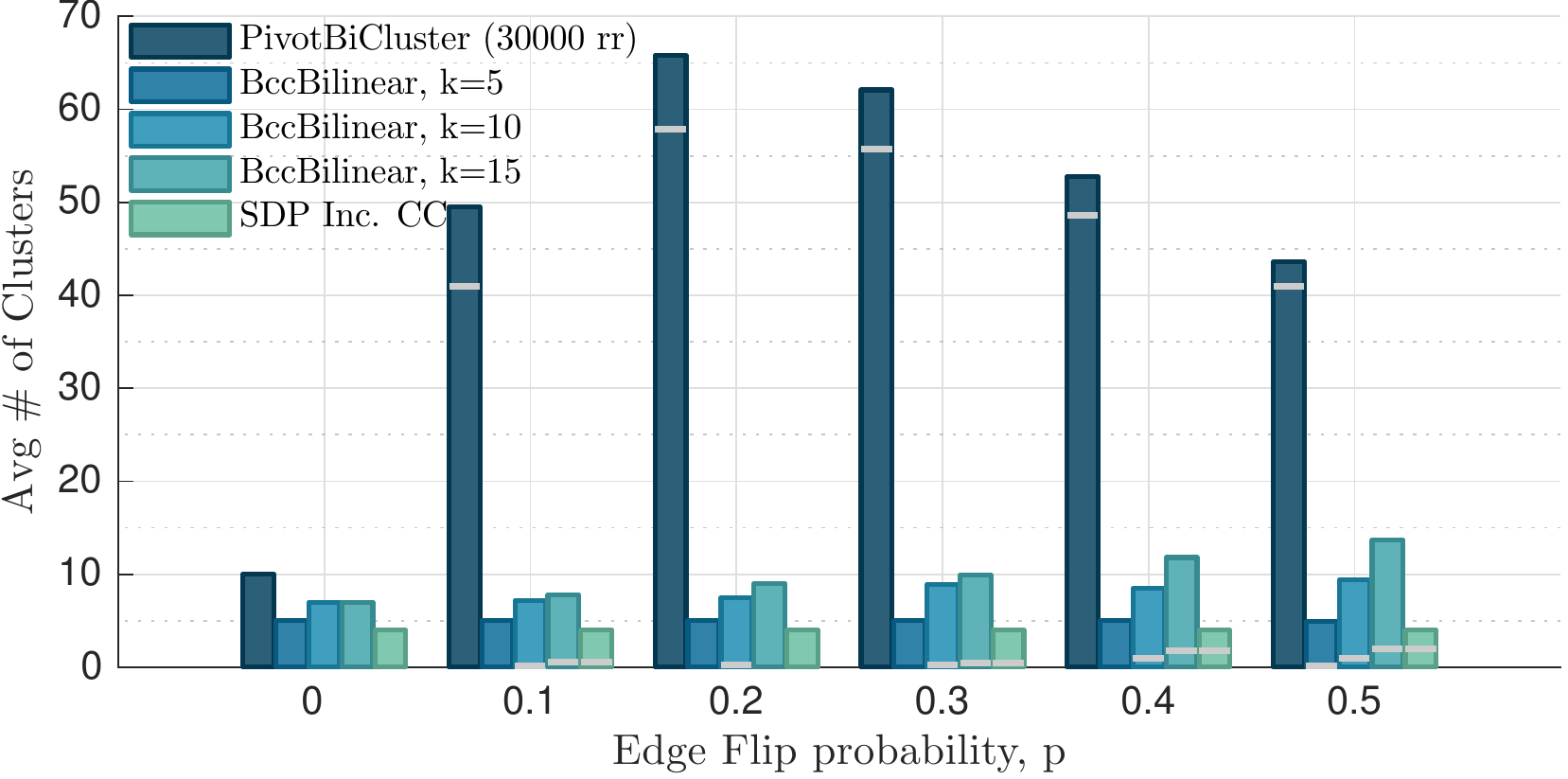}

\begin{center}
{
\footnotesize
  \rowcolors{2}{black!05!white}{white}
  \begin{tabular}{lr@{}rrrrrr}
  & $p=$ & $0.0$ & $0.1$ & $0.2$ & $0.3$ & $0.4$ & $0.5$ \\
  \toprule
  \multicolumn{2}{l}{PivotBiCluster~\cite{ailon2012improved}}
          & 124  &  83  &  66  &  54  &  45  &  39\\
  \multicolumn{2}{l}{BccBil. (k=5)} 
          & 69  &  70  &  70  &  71  &  71  &  71\\
  \multicolumn{2}{l}{BccBil. (k=10)}
          & 73  &  74  &  75  &  75  &  75  &  75\\
  \multicolumn{2}{l}{BccBil. (k=15)}
          & 78  &  79  &  79  &  80  &  80  &  80\\
  \multicolumn{2}{l}{SDP Inc CC~\cite{swamy2004correlation}}
          & 198 & 329  & 257 &  258 &  266  & 314\\
  \end{tabular}
  Average runtimes/instance (in seconds).
}
\end{center}
\caption{
	Synthetic data.
	We generate an $m\times n$ bipartite graph with edge weights $\pm{1}$ according to a random vertex $k$-clustering
	and subsequently flip the sign of each edge with probability~$p$.
	We plot the average number of agreements achieved by each method  over $10$ random instances for each $p$ value.
	The bar plot depicts the average number of output clusters for each scheme/$p$-value pair. 
  Within each bar, a horizontal line marks the number of singleton clusters.
}
\label{fig:synthetic}
\end{figure}

PivotBiCluster returns a clustering of arbitrary size; no parameter $k$ is specified and the algorithm determines the number of clusters in an attempt to maximize the number of agreements.
The majority of the output clusters are singletons which can be merged into two clusters (see Subsec.~\ref{sec:proof-constant-num-of-clusters}).
The algorithm is substantially faster that the other approaches on examples of this scale. Hence, we run PivotBiCluster with $3 \cdot 10^4$ random restarts on each instance and depict best results to allow for comparable running times.
Contrary to PivotBiCluster, for our algorithm, which is reffered to as BccBilinear, we need to specify the target number $k$ of clusters.
We run it for $k=5, 10$ and $15$ and arbitrarily set the parameter $r=5$ and terminate our algorithm after $10^{4}$ random samples/rounds.
Finally, the SDP-based approach returns $4$ clusters by construction, and is configured to output best results among $100$ random pivoting steps.

Fig.~\ref{fig:synthetic} depicts the number of agreements achieved by each algorithm for each value of~$p$, the number of output clusters, and the execution times.
All numbers are averages over multiple random instances.

We note that in some cases
and contrary to intuition, our algorithm performs better for lower values of $k$, the target number of clusters.
We attribute this phenomenon to the fact that for higher values of $k$, we should typically use higher approximation rank and consider larger number of samples.

\subsection{MovieLens Dataset}

The MovieLens datasets~\cite{movielensurl} are sets of movie ratings:
each of $m$ users assigns scores in $\lbrace {1, \dots, 5} \rbrace$ to a small subset of the $n$ movies.
Table~\ref{table:movielens:info} lists the dimensions of the datasets.
From each dataset, we generate an instance of the (incomplete) \bcc problem as follows:
we construct the bipartite graph on the user-movie pairs using the ratings as edge weights, compute the average weight and finally set weights higher than the average to $+1$ and the others to $-1$.
\begin{table}[t!]
\centering
\footnotesize
\rowcolors{2}{white}{black!05!white}
\begin{tabular}{lccc}
  \toprule
  Dataset & $m$ (Users) & $n$ (Movies) & Ratings \\ 
  \cmidrule{1-1} \cmidrule{2-2} \cmidrule{3-3} \cmidrule{4-4}
   MovieLens$100\mathrm{K}$ & $1000$ & $1700$ & $10^5$ \\
   MovieLens$1\mathrm{M}$ & $6000$ & $4000$ & $10^6$ \\
   MovieLens$10\mathrm{M}$ & $72000$ & $10000$ & $10^7$ \\
  \bottomrule
\end{tabular}
\caption{Summary of MovieLens datasets~\cite{movielensurl}.} 
\label{table:movielens:info}
\end{table}

We run our algorithm to obtain a clustering of the vertices.
For a reference, we compare to PivotBiCluster with $50$ random restarts.
Note, however, that PivotBiCluster is not designed for the incomplete \bcc problem; to apply the algorithm, we effectively treat missing edges as edges of negative weight.
Finally, the SDP approach of~\cite{swamy2004correlation}, albeit suitable for the incomplete \cc problem, does not scale to this size of input.
Table~\ref{table:movielens:results} lists the number of agreements achieved by each method on each one of the three datasets and the corresponding execution times.

\begin{table}[b!]
\centering
\footnotesize
\begin{tabular}{lccc}
  \toprule
  MovieLens & $100\mathrm{K}$ & $1\mathrm{M}$ & $10\mathrm{M}$ \\ 
  \cmidrule{1-1} \cmidrule{2-2} \cmidrule{3-3} \cmidrule{4-4}
   PivotBiCluster & 
    $46134$&
    $429277$&
    $5008577$\\
    &
    \color{darkgray}$(27.95)$ &
    \color{darkgray}$(651.13)$ &
    \color{darkgray}$(1.5\cdot 10^5)$\\
   BccBilinear &
    $68141$ &
    $694366$ & 
    $6857509$ \\
    &
    \color{darkgray}$(6.65)$ &
    \color{darkgray}$(19.50)$&
    \color{darkgray}$(1.2\cdot 10^3)$\\
  \bottomrule
\end{tabular}
\caption{Number of agreements achieved by the two algorithms on incomplete $(k)$-\bcc instances obtained from the MovieLens datasets~\cite{movielensurl}.
For PivotBiCluster we present best results over $50$ random restarts.
Our algorithm was arbitrarily configured with ${r=4}$, ${k=10}$ and a limit of ${10^4}$ samples (iterations).
We also note in parenteses the runtimes in seconds.
} 
\label{table:movielens:results}
\end{table}

\section{Conclusions}

We presented the first algorithm with provable approximation guarantees for $k$-\bcc/\maxagree.
Our approach relied on formulating $k$-\bcc as a constrained bilinear maximization over the sets of cluster assignment matrices and developing a simple framework to approximately solve that combinatorial optimization.

In the unconstrained \bcc setting, with no bound on the number of output clusters, 
we showed that any constant multiplicative factor approximation for the \maxagree objective can be achieved using a constant number of clusters.
In turn, under the appropriate configuration, our $k$-\bcc algorithm
yields an Efficient PTAS for \bcc/\maxagree.

\textbf{Acknowledgments}\;
This research has been supported by NSF Grants CCF $1344179$, $1344364$, $1407278$,
$1422549$ and ARO YIP W911NF-14-1-0258.
DP is supported by NSF awards CCF-1217058 and CCF-1116404 and MURI AFOSR
grant 556016. 

{%
\small
\fontsize{10}{11.8}\selectfont
\balance
\bibliographystyle{plain}
\bibliography{bcc}
}

\newpage
\clearpage
\newpage
\appendix
\section{Bilinear Maximization Guarantees}
\label{sec:approx-proof}
\begin{replemma}{low-rank-solver-guarantees}
      For any  real ${m \times n}$, rank-$r$ matrix~$\widetilde{\mathbf{A}}$ 
   and arbitrary norm-bounded sets
   ${\mathcal{X} \subset \mathbb{R}^{m \times k}}$ and
   ${\mathcal{Y} \subset \mathbb{R}^{n \times k}}$,
   let
   $$
   \bigl(\widetilde{\X}_{\star}, \widetilde{\Y}_{\star}\bigr)
   \eqdef
   \argmax_{\mathbf{X} \in \mathcal{X}, \mathbf{Y} \in \mathcal{Y}}\trace\bigl(\X^{\transpose}{\widetilde{\mathbf{A}}}\mathbf{Y}\bigr).
   $$
   If there exist operators $\mathrm{P}_{\mathcal{X}}: \mathbb{R}^{m \times k} \rightarrow \mathcal{X}$ such that
   \begin{align}
   \mathrm{P}_{\mathcal{X}}(\mathbf{L}) = \argmax_{\mathbf{X} \in \mathcal{X}} \trace\bigl(\X^{\transpose}\mathbf{L}\bigr)
   \nonumber
   \end{align}
   and similarly, $\mathrm{P}_{\mathcal{Y}}: \mathbb{R}^{n \times k} \rightarrow \mathcal{Y}$ such that
   \begin{align}
   \mathrm{P}_{\mathcal{Y}}(\mathbf{R}) = \argmax_{\mathbf{Y} \in \mathcal{Y}} \trace\bigl(\mathbf{R}^{\transpose}\mathbf{Y}\bigr)
   \nonumber
   \end{align}
   with running times $\Tx$ and $\Ty$, respectively,
   then 
   Algorithm~\ref{algo:bilinear-lowrank} outputs $\widetilde{\X} \in \mathcal{X}$ 
   and $\widetilde{\Y} \in \mathcal{Y}$ such that
   \begin{align}
     \trace\bigl( \widetilde{\X}^{\transpose} \widetilde{\mathbf{A}} \widetilde{\Y} \bigr)
     \ge
     \trace\bigl( \widetilde{\X}_{\star}^{\transpose} \widetilde{\mathbf{A}} \widetilde{\Y}_{\star} \bigr)
     - 2  \epsilon  \sqrt{k} \cdot \|\widetilde{\mathbf{A}}\|_{2} \cdot \mu_{\mathcal{X}} \cdot \mu_{\mathcal{Y}},
     \nonumber
   \end{align}
   where
   $\mu_{\mathcal{X}}\eqdef \max_{\mathbf{X} \in  \mathcal{X}}\| \mathbf{X} \|_{\frob}$
   and
   $\mu_{\mathcal{Y}}\eqdef \max_{\mathbf{Y} \in  \mathcal{Y}}\| \mathbf{Y} \|_{\frob}$,
   in time $O\mathopen{}\bigl(\bigl({2\sqrt{r}}/{\epsilon}\bigr)^{r \cdot k} \cdot \bigl(\Tx+\Ty+(m+n)r\bigr)\bigr) + \Tsvd(r)$.
\end{replemma}
\begin{proof}
   In the sequel, $\widetilde{\mathbf{U}}$, $\widetilde{\mathbf{\Sigma}}$ and $\widetilde{\mathbf{V}}$ are used to denote the $r$-truncated singular value decomposition of $\widetilde{\mathbf{A}}$.

   Without loss of generality, we assume that $\mu_{\mathcal{X}}=\mu_{\mathcal{Y}} =1$
   since 
   the variables in $\mathcal{X}$ and $\mathcal{Y}$ can be normalized by $\mu_{\mathcal{X}}$ and $\mu_{\mathcal{Y}}$, respectively, while simultaneously scaling the singular values of $\widetilde{\mathbf{A}}$ by a factor of $\mu_{\mathcal{X}} \cdot \mu_{\mathcal{Y}}$.
   Then, $\|\mathbf{Y}\|_{\infty, 2} \le 1$, $\forall\, \mathbf{Y}\in \mathcal{Y}$,
   where $\|\mathbf{Y}\|_{\infty, 2}$ denotes the maximum of the $\ell_{2}$-norm of the columns of~$\mathbf{Y}$.

   Let $\widetilde{\X}_{\star}, \widetilde{\mathbf{Y}}_{\star}$ be the optimal pair on $\widetilde{\mathbf{A}}$, \textit{i.e.},
   \begin{align}
      \bigl( \widetilde{\X}_{\star}, \widetilde{\mathbf{Y}}_{\star} \bigr) \eqdef \argmax_{\mathbf{X}\in \mathcal{X}, \mathbf{Y}\in \mathcal{Y}}\trace\bigl(\X^{\transpose}{\widetilde{\mathbf{A}}}\mathbf{Y}\bigr)
      \nonumber
   \end{align}
   and define the $r \times k$ matrix $\widetilde{\mathbf{C}}_{\star} \eqdef \widetilde{\mathbf{V}}^{\transpose}\widetilde{\mathbf{Y}}_{\star}$.
   Note that
   \begin{align}
      \|\widetilde{\mathbf{C}}_{\star}\|_{\infty, 2}
      &=
      \|\widetilde{\mathbf{V}}^{\transpose}\widetilde{\mathbf{Y}}_{\star}\|_{\infty, 2}
      \nonumber\\&=
      \max_{1\le i \le k}
      \|\widetilde{\mathbf{V}}^{\transpose}[\widetilde{\mathbf{Y}}_{\star}]_{:,i}\|_{2}
      \le 1,
      \label{norm-of-cstar}
   \end{align}
   with the last inequality following from the facts that
   $\|\mathbf{Y}\|_{\infty, 2} \le 1$ $\forall\; \mathbf{Y}\in \mathcal{Y}$
   and the columns of~$\widetilde{\mathbf{V}}$ are orthonormal.
   Alg.~\ref{algo:bilinear-lowrank} iterates over the points in  $(\mathbb{B}_{2}^{r-1})^{\otimes k}$.
   The latter is used to describe the set of $r \times k$ matrices whose columns have $\ell_{2}$ norm at most equal to $1$.
   At each point, the algorithm computes a candidate solution.
  By~\eqref{norm-of-cstar}, 
the $\epsilon$-net contains an $r \times k$ matrix $\mathbf{C}_{\sharp}$ such that
   \begin{align}
      \|\mathbf{C}_{\sharp} - \widetilde{\mathbf{C}}_{\star}\|_{\infty, 2} \le \epsilon.
      \nonumber
   \end{align}
   Let $\mathbf{X}_{\sharp}, \mathbf{Y}_{\sharp}$ be the candidate pair computed at $\mathbf{C}_{\sharp}$ by the two step maximization, \textit{i.e.},
   \begin{align}
      \mathbf{X}_{\sharp} \eqdef
      \argmax_{\mathbf{X}\in \mathcal{X}}
      \trace\bigl(
         \X^{\transpose}
         \widetilde{\mathbf{U}}\widetilde{\mathbf{\Sigma}}
         \mathbf{C}_{\sharp}
      \bigr) \nonumber
\intertext{and}
      \mathbf{Y}_{\sharp} \eqdef
      \argmax_{\mathbf{Y}\in \mathcal{Y}}
      \trace\bigl(
         \mathbf{X}_{\sharp}^{\transpose}
         \widetilde{\mathbf{A}} 
         \mathbf{Y}
      \bigr).
      \label{def-of-nearXandY}
   \end{align}
   We show that the objective values achieved by the candidate pair $\mathbf{X}_{\sharp}, \mathbf{Y}_{\sharp}$ satisfies the inequality of the lemma implying the desired result.

   By the definition of $\widetilde{\mathbf{C}}_{\star}$ and the linearity of the trace,
   \begin{align}
      &\trace\bigl(
         \widetilde{\X}_{\star}^{\transpose}
         \widetilde{\mathbf{A}}
         \widetilde{\mathbf{Y}}_{\star}
      \bigr)
      \nonumber\\&=
      \trace\bigl(
         \widetilde{\X}_{\star}^{\transpose}
         \widetilde{\mathbf{U}}\widetilde{\mathbf{\Sigma}}
         \widetilde{\mathbf{C}}_{\star}
      \bigr)
      \nonumber\\&=
      \trace\bigl(
         \widetilde{\X}_{\star}^{\transpose}
         \widetilde{\mathbf{U}}\widetilde{\mathbf{\Sigma}}
         \mathbf{C}_{\sharp}
      \bigr)
      +
      \trace\bigl(
         \widetilde{\X}_{\star}^{\transpose}
         \widetilde{\mathbf{U}}\widetilde{\mathbf{\Sigma}}
         \bigl(\widetilde{\mathbf{C}}_{\star}-\mathbf{C}_{\sharp}\bigr)
      \bigr)
      \nonumber\\&\le
      \trace\bigl(
         \mathbf{X}_{\sharp}^{\transpose}
         \widetilde{\mathbf{U}}\widetilde{\mathbf{\Sigma}}
         \mathbf{C}_{\sharp}
      \bigr)
      +
      \trace\bigl(
         \widetilde{\X}_{\star}^{\transpose}
         \widetilde{\mathbf{U}}\widetilde{\mathbf{\Sigma}}
         \bigl(\widetilde{\mathbf{C}}_{\star}-\mathbf{C}_{\sharp}\bigr)
      \bigr).
      \label{base-inequality}
   \end{align}
   The inequality follows from the fact that (by definition~\eqref{def-of-nearXandY}) $\mathbf{X}_{\sharp}$ maximizes the first term over all $\mathbf{X}\in \mathcal{X}$.
   We compute an upper bound on the right hand side of~\eqref{base-inequality}.
   Define
   $$
      \Yhat \eqdef \argmin_{\mathbf{Y}\in \mathcal{Y}}\|\widetilde{V}^{\transpose}\mathbf{Y}- \mathbf{C}_{\sharp}\|_{\infty, 2}.
   $$
   (We note that $\Yhat$ is used for the analysis and is never explicitly calculated.) 
   Further, define the $r \times k$ matrix $\widehat{\mathbf{C}} \eqdef \widetilde{\mathbf{V}}^{\transpose}\widehat{\mathbf{Y}}$.
   By the linearity of the trace operator
   \begin{align}
  &    \trace\bigl(
         \mathbf{X}_{\sharp}^{\transpose}
         \widetilde{\mathbf{U}}\widetilde{\mathbf{\Sigma}}
         \mathbf{C}_{\sharp}
      \bigr)\nonumber\\
   \nonumber&=
      \trace\bigl(
         \mathbf{X}_{\sharp}^{\transpose}
         \widetilde{\mathbf{U}}\widetilde{\mathbf{\Sigma}}
         \widehat{\mathbf{C}}
      \bigr)
      +
      \trace\bigl(
         \mathbf{X}_{\sharp}^{\transpose}
         \widetilde{\mathbf{U}}\widetilde{\mathbf{\Sigma}}
         \bigl(\mathbf{C}_{\sharp}-\widehat{\mathbf{C}}\bigr)
      \bigr)
   \nonumber\\&=
      \trace\bigl(
         \mathbf{X}_{\sharp}^{\transpose}
         \widetilde{\mathbf{U}}\widetilde{\mathbf{\Sigma}} \widetilde{\mathbf{V}}^{\transpose}
         \Yhat
      \bigr)
      +
      \trace\bigl(
         \mathbf{X}_{\sharp}^{\transpose}
         \widetilde{\mathbf{U}}\widetilde{\mathbf{\Sigma}}
         \bigl(\mathbf{C}_{\sharp}-\widehat{\mathbf{C}}\bigr)
      \bigr)
   \nonumber\\&\le
      \trace\bigl(
         \mathbf{X}_{\sharp}^{\transpose}
         \widetilde{\mathbf{U}}\widetilde{\mathbf{\Sigma}} \widetilde{\mathbf{V}}^{\transpose}
         \mathbf{Y}_{\sharp}
      \bigr)
      +
      \trace\bigl(
         \mathbf{X}_{\sharp}^{\transpose}
         \widetilde{\mathbf{U}}\widetilde{\mathbf{\Sigma}}
         \bigl(\mathbf{C}_{\sharp}-\widehat{\mathbf{C}}\bigr)
      \bigr)
      \nonumber\\&=
      \trace\bigl(
         \mathbf{X}_{\sharp}^{\transpose}
         \widetilde{\mathbf{A}}
         \mathbf{Y}_{\sharp}
      \bigr)
      +
      \trace\bigl(
         \mathbf{X}_{\sharp}^{\transpose}
         \widetilde{\mathbf{U}}\widetilde{\mathbf{\Sigma}}
         \bigl(\mathbf{C}_{\sharp}-\widehat{\mathbf{C}}\bigr)
      \bigr).
      \label{asdfsadfsafd}
   \end{align}
   The inequality follows from the fact that (by definition~\eqref{def-of-nearXandY}) $\mathbf{Y}_{\sharp}$ maximizes the first term over all~$\mathbf{Y}\in \mathcal{Y}$.
   Combining~\eqref{asdfsadfsafd} and~\eqref{base-inequality}, and rearranging the terms, we obtain
   \begin{align}
      &\trace\bigl(
         \widetilde{\X}_{\star}^{\transpose}
         \widetilde{\mathbf{A}}
         \widetilde{\mathbf{Y}}_{\star}
      \bigr)
      -\trace\bigl(
         \mathbf{X}_{\sharp}^{\transpose}
         \widetilde{\mathbf{A}}
         \mathbf{Y}_{\sharp}
      \bigr) \nonumber\\
      &\le
      \trace\bigl(
         \widetilde{\X}_{\star}^{\transpose}
         \widetilde{\mathbf{U}}\widetilde{\mathbf{\Sigma}}
         \bigl(\widetilde{\mathbf{C}}_{\star}-\mathbf{C}_{\sharp}\bigr)
      \bigr)
      +
      \trace\bigl(
         \mathbf{X}_{\sharp}^{\transpose}
         \widetilde{\mathbf{U}}\widetilde{\mathbf{\Sigma}}
         \bigl(\mathbf{C}_{\sharp}-\widehat{\mathbf{C}}\bigr)
      \bigr).
      \label{new-base-inequality}
   \end{align}
   By Lemma~\ref{lemma:abs-trace-XAY-ub},
   \begin{align}
      &\bigl\lvert
      \trace\bigl(
         \widetilde{\X}_{\star}^{\transpose}
         \widetilde{\mathbf{U}}\widetilde{\mathbf{\Sigma}}
         \bigl(\widetilde{\mathbf{C}}_{\star}-\mathbf{C}_{\sharp}\bigr)
      \bigr)
      \bigr\rvert
   \nonumber\\&\qquad\le
      \|\widetilde{\X}_{\star}^{\transpose}\widetilde{\mathbf{U}}\|_{\frob}
      \cdot \|\widetilde{\mathbf{\Sigma}}\|_{2}
      \cdot \|\widetilde{\mathbf{C}}_{\star}-\mathbf{C}_{\sharp}\|_{\frob}
   \nonumber\\&\qquad\le
      \|\widetilde{\X}_{\star}\|_{\frob}
      \cdot \sigma_{1}(\widetilde{\mathbf{A}})
      \cdot \sqrt{k} \cdot \epsilon
   \nonumber\\&\qquad\le
      \max_{\mathbf{X}\in \mathcal{X}}\|\mathbf{X}\|_{\frob}
      \cdot \sigma_{1}(\widetilde{\mathbf{A}})
      \cdot \sqrt{k} \cdot \epsilon
   \nonumber\\&\qquad\le
      \sigma_{1}(\widetilde{\mathbf{A}}) \cdot \sqrt{k} \cdot \epsilon.
      \label{first-abs-trace}
   \end{align}
   Similarly,
   \begin{align}
   \bigl\lvert
      \trace\bigl(
         \mathbf{X}_{\sharp}^{\transpose}
         \widetilde{\mathbf{U}}\widetilde{\mathbf{\Sigma}}
         \bigl(\mathbf{C}_{\sharp}-\widehat{\mathbf{C}}\bigr)
      \bigr)
      \bigr\rvert
   &\le
      \|\mathbf{X}_{\sharp}\widetilde{\mathbf{U}}\|_{\frob} \cdot \|\widetilde{\mathbf{\Sigma}}\|_{2} \cdot \|\mathbf{C}_{\sharp}-\widehat{\mathbf{C}}\|_{\frob}
   \nonumber\\&\le
      \max_{\mathbf{X}\in \mathcal{X}}\|\mathbf{X}\|_{\frob}
      \cdot \sigma_{1}(\widetilde{\mathbf{A}})
      \cdot \sqrt{k} \cdot \epsilon
   \nonumber\\&\le
      \sigma_{1}(\widetilde{\mathbf{A}})
      \cdot \sqrt{k} \cdot \epsilon.
      \label{second-abs-trace}
   \end{align}
   The second inequality follows from the fact that by the definition of $\widehat{\mathbf{C}}$,
   \begin{align}
      \|\widehat{\mathbf{C}} - \mathbf{C}_{\sharp}\|_{\infty, 2}
      &=
      \|\widetilde{V}^{\transpose}\Yhat - \mathbf{C}_{\sharp}\|_{\infty, 2}
      \le
      \|\widetilde{V}^{\transpose}\widetilde{\mathbf{Y}}_{\star} - \mathbf{C}_{\sharp}\|_{\infty, 2}
      \nonumber\\&=
      \|\widetilde{\mathbf{C}}_{\star} - \mathbf{C}_{\sharp}\|_{\infty, 2}
      \le
      \epsilon,
      \nonumber
   \end{align}
   which implies that
   \begin{align}
      \|\widehat{\mathbf{C}} - \mathbf{C}_{\sharp}\|_{\frob}
      \le
      \sqrt{k} \cdot \epsilon.
      \nonumber
   \end{align}
   Continuing from~\eqref{new-base-inequality}
   under~\eqref{first-abs-trace} and~\eqref{second-abs-trace},
   \begin{align}
      \trace\bigl(
         \mathbf{X}_{\sharp}^{\transpose}
         \widetilde{\mathbf{A}}
         \mathbf{Y}_{\sharp}
      \bigr)
      \ge
      \trace\bigl(
         \widetilde{\X}_{\star}^{\transpose}
         \widetilde{\mathbf{A}}
         \widetilde{\mathbf{Y}}_{\star}
      \bigr)
      -
      2\cdot \epsilon
      \cdot \sqrt{k}
      \cdot \sigma_{1}(\widetilde{\mathbf{A}}).
      \nonumber
   \end{align}
   Recalling that the singular values of $\widetilde{\mathbf{A}}$
   have been scaled by a factor of $\mu_{\mathcal{X}} \cdot \mu_{\mathcal{Y}}$ yields the desired result.
   
   The runtime of Alg.~\ref{algo:bilinear-lowrank} follows from the cost per iteration and the cardinality of the $\epsilon$-net.
   Matrix multiplications can exploit the truncated singular value decomposition of $\widetilde{\mathbf{A}}$ which is performed only once.
\end{proof}

\begin{lemma}
   \label{lemma:generic-sketch-solution}
   For any $\mathbf{A}, \widetilde{\mathbf{A}} \in \mathbb{R}^{m \times n}$, 
   and norm-boudned sets $\mathcal{X} \subseteq \mathbb{R}^{m \times k}$ 
   and $\mathcal{Y} \subseteq \mathbb{R}^{n \times k}$,
   let 
   $$
   \bigl( \mathbf{X}_{\star}, \Y_{\star} \bigr) 
   \eqdef 
   \argmax_{\mathbf{X}\in \mathcal{X}, \mathbf{Y}\in \mathcal{Y}}\trace\bigl(\X^{\transpose}\mathbf{A}\mathbf{Y}\bigr),
   $$
   and 
   $$
   \bigl( \widetilde{\X}_{\star}, \widetilde{\mathbf{Y}}_{\star} \bigr)
   \eqdef
   \argmax_{\mathbf{X}\in \mathcal{X}, \mathbf{Y}\in \mathcal{Y}}\trace\bigl(\X^{\transpose}{\widetilde{\mathbf{A}}}\mathbf{Y}\bigr).
   $$
   For any $(\widetilde{\X}, \widetilde{\mathbf{Y}}) \in \mathcal{X} \times \mathcal{Y}$ such that
   $$
   \trace\bigl(\widetilde{\X}^{\transpose}{\widetilde{\mathbf{A}}}\widetilde{\mathbf{Y}}\bigr)
   \ge \gamma \cdot 
   \trace\bigl(\widetilde{\X}_{\star}^{\transpose}{\widetilde{\mathbf{A}}}\widetilde{\mathbf{Y}}_{\star}\bigr) - C
   $$
   for some $0 < \gamma \le 1$, we have
   \begin{align}
	  \trace\bigl(\widetilde{\X}^{\transpose}\mathbf{A}\widetilde{\mathbf{Y}}\bigr)
	  &\ge
	  \gamma \cdot \trace\bigl(\mathbf{X}_{\star}^{\transpose}\mathbf{A}\Y_{\star}\bigr)
	  - C
	  \nonumber\\&\quad
	  - 2 \cdot \|\mathbf{A}-{\widetilde{\mathbf{A}}}\|_{2} 
     \cdot \mu_{\mathcal{X}}
	  \cdot
     \mu_{\mathcal{Y}}.
	  \nonumber
   \end{align}
   where
   $\mu_{\mathcal{X}}\eqdef \max_{\mathbf{X}\in  \mathcal{X}}\| \mathbf{X}\|_{\frob}$
   and
   $\mu_{\mathcal{Y}}\eqdef \max_{\mathbf{Y}\in  \mathcal{Y}}\| \mathbf{Y}\|_{\frob}$.
\end{lemma}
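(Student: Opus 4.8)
The plan is to view the discrepancy $\mathbf{A}-\widetilde{\mathbf{A}}$ as a bounded additive perturbation and to chain the near‑optimality hypothesis through the change of matrix. Write, by linearity of the trace,
\[
\trace\bigl(\widetilde{\X}^{\transpose}\mathbf{A}\widetilde{\mathbf{Y}}\bigr)
=
\trace\bigl(\widetilde{\X}^{\transpose}\widetilde{\mathbf{A}}\widetilde{\mathbf{Y}}\bigr)
+
\trace\bigl(\widetilde{\X}^{\transpose}(\mathbf{A}-\widetilde{\mathbf{A}})\widetilde{\mathbf{Y}}\bigr),
\]
and, symmetrically, $\trace(\mathbf{X}_{\star}^{\transpose}\widetilde{\mathbf{A}}\Y_{\star})=\trace(\mathbf{X}_{\star}^{\transpose}\mathbf{A}\Y_{\star})-\trace(\mathbf{X}_{\star}^{\transpose}(\mathbf{A}-\widetilde{\mathbf{A}})\Y_{\star})$. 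These two identities are the only structural ingredients.

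First I would lower‑bound the first summand using the hypothesis: $\trace(\widetilde{\X}^{\transpose}\widetilde{\mathbf{A}}\widetilde{\mathbf{Y}})\ge\gamma\,\trace(\widetilde{\X}_{\star}^{\transpose}\widetilde{\mathbf{A}}\widetilde{\mathbf{Y}}_{\star})-C$. Since $(\widetilde{\X}_{\star},\widetilde{\mathbf{Y}}_{\star})$ maximizes $\trace(\X^{\transpose}\widetilde{\mathbf{A}}\mathbf{Y})$ over $\mathcal{X}\times\mathcal{Y}$, we have $\trace(\widetilde{\X}_{\star}^{\transpose}\widetilde{\mathbf{A}}\widetilde{\mathbf{Y}}_{\star})\ge\trace(\mathbf{X}_{\star}^{\transpose}\widetilde{\mathbf{A}}\Y_{\star})$; substituting the second identity above turns this into $\gamma\,\trace(\mathbf{X}_{\star}^{\transpose}\mathbf{A}\Y_{\star})-\gamma\,\trace(\mathbf{X}_{\star}^{\transpose}(\mathbf{A}-\widetilde{\mathbf{A}})\Y_{\star})-C$. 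It remains to control the two perturbation traces. For this I would invoke the spectral‑norm trace inequality already used in the proof of Lemma~\ref{low-rank-solver-guarantees} (Lemma~\ref{lemma:abs-trace-XAY-ub}), namely $|\trace(\X^{\transpose}\mathbf{M}\mathbf{Y})|\le\|\X\|_{\frob}\,\|\mathbf{M}\|_{2}\,\|\mathbf{Y}\|_{\frob}$, together with the norm bounds $\|\X\|_{\frob}\le\mu_{\mathcal{X}}$ and $\|\mathbf{Y}\|_{\frob}\le\mu_{\mathcal{Y}}$ valid for every feasible point. This yields $|\trace(\widetilde{\X}^{\transpose}(\mathbf{A}-\widetilde{\mathbf{A}})\widetilde{\mathbf{Y}})|\le\|\mathbf{A}-\widetilde{\mathbf{A}}\|_{2}\mu_{\mathcal{X}}\mu_{\mathcal{Y}}$ and the same bound for the starred term. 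Combining everything and using $0<\gamma\le1$ to replace the coefficient $\gamma$ in front of the starred perturbation term by $1$, the two perturbation contributions add to the factor $2$ and produce exactly the stated inequality.

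I do not expect any genuine obstacle here; the argument is pure bookkeeping on top of the two identities and the elementary trace bound. The only places that require care are (i) tracking the sign of each perturbation trace — in particular the one attached to $(\mathbf{X}_{\star},\Y_{\star})$ enters with a minus sign and must be bounded from below, i.e.\ upper‑bounded in absolute value — and (ii) the final step, where $\gamma\le1$ is used to fold the $\gamma$‑weighted error term into the constant so that the bound comes out cleanly with coefficient $2\|\mathbf{A}-\widetilde{\mathbf{A}}\|_{2}\mu_{\mathcal{X}}\mu_{\mathcal{Y}}$ rather than $(1+\gamma)\|\mathbf{A}-\widetilde{\mathbf{A}}\|_{2}\mu_{\mathcal{X}}\mu_{\mathcal{Y}}$. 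Lemma~\ref{full-rank-guarantees} is then the special case $\gamma=1$, $C=2\epsilon\sqrt{k}\,\|\widetilde{\mathbf{A}}\|_{2}\mu_{\mathcal{X}}\mu_{\mathcal{Y}}$, with $(\widetilde{\X},\widetilde{\mathbf{Y}})$ the output of Alg.~\ref{algo:bilinear-lowrank} as guaranteed by Lemma~\ref{low-rank-solver-guarantees}.
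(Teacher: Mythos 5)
Your proposal is correct and follows essentially the same route as the paper's proof: the same trace decomposition via linearity, the same use of the optimality of $(\widetilde{\X}_{\star},\widetilde{\mathbf{Y}}_{\star})$ for $\widetilde{\mathbf{A}}$, the same bound $\lvert\trace(\X^{\transpose}(\mathbf{A}-\widetilde{\mathbf{A}})\mathbf{Y})\rvert\le\|\mathbf{A}-\widetilde{\mathbf{A}}\|_{2}\,\mu_{\mathcal{X}}\,\mu_{\mathcal{Y}}$ from Lemma~\ref{lemma:abs-trace-XAY-ub}, and the same final step replacing $(1+\gamma)$ by $2$. The two points you flag as requiring care (the sign of the starred perturbation term and folding $\gamma\le 1$ into the constant) are handled exactly as in the paper.
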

\begin{proof}
By the optimality of 
$\widetilde{\X}_{\star}, \widetilde{\mathbf{Y}}_{\star}$ for ${\widetilde{\mathbf{A}}}$, we have
\begin{align}
   \trace\bigl(\widetilde{\X}_{\star}^{\transpose}{\widetilde{\mathbf{A}}}\widetilde{\mathbf{Y}}_{\star}\bigr)
   \ge
   \trace\bigl(\mathbf{X}_{\star}^{\transpose}{\widetilde{\mathbf{A}}}\Y_{\star}\bigr).
   \nonumber
\end{align}
In turn,
for any $(\widetilde{\X}, \widetilde{\mathbf{Y}}) \in \mathcal{X} \times \mathcal{Y}$ such that
 $$
   \trace\bigl(\widetilde{\X}^{\transpose}{\widetilde{\mathbf{A}}}\widetilde{\mathbf{Y}}\bigr)
   \ge \gamma \cdot 
   \trace\bigl(\widetilde{\X}_{\star}^{\transpose}{\widetilde{\mathbf{A}}}\widetilde{\mathbf{Y}}_{\star}\bigr) - C
 $$

 for some $0 < \gamma < 1$ (if such pairs exist), we have
\begin{align}
   \trace\bigl(\widetilde{\X}^{\transpose}{\widetilde{\mathbf{A}}}\widetilde{\mathbf{Y}}\bigr)
   \ge
   \gamma \cdot
   \trace\bigl(\mathbf{X}_{\star}^{\transpose}{\widetilde{\mathbf{A}}}\Y_{\star}\bigr) - C.
   \label{eq:bp01}
\end{align}
By the linearity of the trace operator,
\begin{align}
   &\trace\bigl(\widetilde{\X}^{\transpose}{\widetilde{\mathbf{A}}}\widetilde{\mathbf{Y}}\bigr)
   \nonumber\\
   &\qquad =
   \trace\bigl(\widetilde{\X}^{\transpose}\mathbf{A}\widetilde{\mathbf{Y}}\bigr)
   -
   \trace\bigl(\widetilde{\X}^{\transpose}(\mathbf{A}-{\widetilde{\mathbf{A}}})\widetilde{\mathbf{Y}}\bigr)
   \nonumber\\ 
   &\qquad \le
   \trace\bigl(\widetilde{\X}^{\transpose}\mathbf{A}\widetilde{\mathbf{Y}}\bigr)
   +
   \bigl\lvert\trace\bigl(\widetilde{\X}^{\transpose}(\mathbf{A}-{\widetilde{\mathbf{A}}})\widetilde{\mathbf{Y}}\bigr)\bigr\rvert.
   \label{eq:bp02}
\end{align}
By Lemma~\ref{lemma:abs-trace-XAY-ub},
\begin{align}
   &\bigl\lvert\trace\bigl(\widetilde{\X}^{\transpose}(\mathbf{A}-{\widetilde{\mathbf{A}}})\widetilde{\mathbf{Y}}\bigr)\bigr\rvert
   \nonumber\\
   &\le
   \|\widetilde{\X}\|_{\frob} \cdot \|\widetilde{\mathbf{Y}}\|_{\frob} \cdot \|\mathbf{A}-{\widetilde{\mathbf{A}}}\|_{2}
   \nonumber\\
   &\le 
   \|\mathbf{A}-{\widetilde{\mathbf{A}}}\|_{2}
   \cdot \max_{\mathbf{X}\in \mathcal{X}} \|\mathbf{X}\|_{\frob}
   \cdot \max_{\mathbf{Y}\in \mathcal{Y}} \|\mathbf{Y}\|_{\frob}
   \;\eqdef\; R.
   \label{residual-trace-bound}
\end{align}
Continuing from~\eqref{eq:bp02}, 
\begin{align}
   \trace\bigl(\widetilde{\X}^{\transpose}{\widetilde{\mathbf{A}}}\widetilde{\mathbf{Y}}\bigr)
   \le
   \trace\bigl(\widetilde{\X}^{\transpose}\mathbf{A}\widetilde{\mathbf{Y}}\bigr)
   +
   R.
   \label{eq:bp04}
\end{align}
Similarly,
\begin{align}
   &\trace\bigl(\mathbf{X}_{\star}^{\transpose}{\widetilde{\mathbf{A}}}\Y_{\star}\bigr)
   \nonumber\\
   &=
   \trace\bigl(\mathbf{X}_{\star}^{\transpose}\mathbf{A}\Y_{\star}\bigr)
   -
   \trace\bigl(\mathbf{X}_{\star}^{\transpose}(\mathbf{A}-{\widetilde{\mathbf{A}}})\Y_{\star}\bigr)
   \nonumber\\& \ge
   \trace\bigl(\mathbf{X}_{\star}^{\transpose}\mathbf{A}\Y_{\star}\bigr)
   -
   \bigl\lvert
   \trace\bigl(\mathbf{X}_{\star}^{\transpose}(\mathbf{A}-{\widetilde{\mathbf{A}}})\Y_{\star}\bigr)
   \bigr\rvert
   \nonumber\\& \ge
   \trace\bigl(\mathbf{X}_{\star}^{\transpose}\mathbf{A}\Y_{\star}\bigr)
   -   
   R.
   \label{eq:bp03}
\end{align}
Combining the above, we have
\begin{align}
   \trace\bigl(\widetilde{\X}^{\transpose}\mathbf{A}\widetilde{\mathbf{Y}}\bigr)
   &\ge
   \trace\bigl(\widetilde{\X}^{\transpose}{\widetilde{\mathbf{A}}}\widetilde{\mathbf{Y}}\bigr)
   -R
   \nonumber\\&\ge
   \gamma \cdot
   \trace\bigl(\mathbf{X}_{\star}^{\transpose}{\widetilde{\mathbf{A}}}\Y_{\star}\bigr)
   -
   R -C
   \nonumber\\&\ge
   \gamma \cdot
   \bigl(
   \trace\bigl(\mathbf{X}_{\star}^{\transpose}\mathbf{A}\Y_{\star}\bigr)
   -
   R
   \bigr) - R -C
   \nonumber\\&=
   \gamma \cdot \trace\bigl(\mathbf{X}_{\star}^{\transpose}\mathbf{A}\Y_{\star}\bigr)
   - (1 + \gamma) \cdot  R -C
   \nonumber\\&\ge
   \gamma \cdot \trace\bigl(\mathbf{X}_{\star}^{\transpose}\mathbf{A}\Y_{\star}\bigr)
   - 2 \cdot  R -C,
   \nonumber
\end{align}
where 
the first inequality follows from~\eqref{eq:bp04} 
the second from~\eqref{eq:bp01},
the third from~\eqref{eq:bp03}, and the last from the fact that $R \ge 0$.
This concludes the proof.
\end{proof}
\begin{replemma}{full-rank-guarantees}
   For any $\mathbf{A} \in \mathbb{R}^{m \times n}$, let
   $$
   \bigl( \mathbf{X}_{\star}, \Y_{\star} \bigr) 
   \eqdef 
   \argmax_{\mathbf{X}\in \mathcal{X}, \mathbf{Y}\in \mathcal{Y}}\trace\bigl(\X^{\transpose}\mathbf{A}\mathbf{Y}\bigr),
   $$
   where $\mathcal{X} \subseteq \mathbb{R}^{m \times k}$ and $\mathcal{Y} \subseteq \mathbb{R}^{n \times k}$
   are sets satisfying the conditions of Lemma~\ref{low-rank-solver-guarantees}.
   Let $\widetilde{\mathbf{A}}$ be a rank-$r$ approximation of $\mathbf{A}$,
   and $\widetilde{\X} \in \mathcal{X}$, $\widetilde{\mathbf{Y}} \in \mathcal{Y}$ 
   be the output of Alg.~\ref{algo:bilinear-lowrank}
   with input $\widetilde{\mathbf{A}}$ and accuracy $\epsilon$. Then,
   \begin{align}
      &\trace\bigl(\mathbf{X}_{\star}^{\transpose}\mathbf{A}\Y_{\star}\bigr)
      -
      \trace\bigl(\widetilde{\X}^{\transpose}\mathbf{A}\widetilde{\mathbf{Y}}\bigr)
      \nonumber\\&\quad \le
      2  \cdot \left(
         \epsilon  \sqrt{k} \cdot \|\widetilde{\mathbf{A}}\|_{2}
         +
         \|\mathbf{A}-{\widetilde{\mathbf{A}}}\|_{2} 
         \right)
     \cdot \mu_{\mathcal{X}}
     \cdot \mu_{\mathcal{Y}},
     \nonumber
   \end{align}
   where
   $\mu_{\mathcal{X}}\eqdef \max_{\mathbf{X}\in  \mathcal{X}}\| \mathbf{X}\|_{\frob}$
   and
   $\mu_{\mathcal{Y}}\eqdef \max_{\mathbf{Y}\in  \mathcal{Y}}\| \mathbf{Y}\|_{\frob}$.
\end{replemma}
\begin{proof}
The proof follows the approximation guarantees of Alg.~\ref{algo:bilinear-lowrank} in Lemma~\ref{low-rank-solver-guarantees}
and Lemma~\ref{lemma:generic-sketch-solution}.
\end{proof}

\section{Correctness of Algorithm~\ref{algo:zero-one-orthogonal}}
\label{sec:proof-correctness-of-local-solver}
In the sequel, we use $\|\mathbf{X}\|_{\infty, 1}$ to denote the maximum of the $\ell_{1}$ norm of the rows of $\mathbf{X}$.
When $\mathbf{X} \in \lbrace 0, 1 \rbrace^{d\times k}$, the constraint
$\|\mathbf{X}\|_{\infty, 1}=1$ effectively implies that each row of $\X$ has exactly one nonzero entry.
\begin{replemma}{lemma:local-solver-for-bcc}
   Let 
   $
   \mathcal{X}
   \eqdef
   \bigl\lbrace \mathbf{X}\in \lbrace 0, 1\rbrace^{d \times k}: \|\mathbf{X}\|_{\infty, 1} = 1\bigr\rbrace.
   $
   For any $d \times k$ real matrix $\mathbf{L}$, Algorithm~\ref{algo:zero-one-orthogonal}
 outputs
 $$
   \widetilde{\mathbf{X}}
   =
   \argmax_{\mathbf{X}\in \mathcal{X}}\trace\bigl(\X^{\transpose}\mathbf{L}\bigr),
 $$
 in time $O({k}\cdot{d})$
\end{replemma}

\begin{proof}
   By construction, each row of $\X$ has exactly one nonzero entry.
   Let $j_{i} \in [k]$ denote the index of the nonzero entry in the $i$th row of $\X$.
   For any ${\mathbf{X}\in \mathcal{X}}$,
   \begin{align}
     \trace\bigl(\X^{\transpose}\mathbf{L}\bigr)
     &=
     \sum_{j=1}^{k} \mathbf{x}_{j}^{\transpose}\mathbf{l}_{j}
     =
     \sum_{j=1}^{k} \sum_{i \in \supp(\mathbf{x}_{j})} 1\cdot L_{ij} \nonumber \\
     &=
     \sum_{i=1}^{d} L_{ij_{i}}
     \le
     \sum_{i=1}^{d} \max_{j\in [k]} L_{ij}.
     \label{binary-objective-upperbound}
   \end{align}
   Algorithm~\ref{algo:zero-one-orthogonal} achieves equality in~\eqref{binary-objective-upperbound} due to the choice of $j_{i}$ in line 3.
   Finally, the running time follows immediately from the $O(k)$ time required to determine the maximum entry of each of the $d$ rows of~$\mathbf{L}$.
\end{proof}

\section{Auxiliary Lemmas}
\begin{lemma}
   \label{holder-consequence}
   Let $a_1,\hdots , a_n$ 
   and $b_1, \hdots, b_n$ be $2n$ real numbers and let $p$ and
   $q$ be two numbers such that ${1/p} + {1/q} = 1$ and $p>1$. We have
   \begin{align}
	  \left\lvert 
		 \sum_{i=1}^{n} a_{i}b_{i}
	  \right\rvert
	  \le
	  \left( \sum_{i=1}^{n} \lvert  a_{i}\rvert^{p} \right)^{1/p}
	  \cdot
	  \left( \sum_{i=1}^{n} \lvert  b_{i}\rvert^{q} \right)^{1/q}.
	  \nonumber
   \end{align}
\end{lemma}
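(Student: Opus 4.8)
The plan is to derive this from the classical Hölder inequality via Young's inequality. First I would dispose of the trivial cases: if $\sum_{i=1}^{n} |a_i|^p = 0$ then every $a_i = 0$ and both sides vanish; similarly if $\sum_{i=1}^{n} |b_i|^q = 0$. So assume $A \eqdef \bigl(\sum_{i=1}^{n} |a_i|^p\bigr)^{1/p} > 0$ and $B \eqdef \bigl(\sum_{i=1}^{n} |b_i|^q\bigr)^{1/q} > 0$.

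The key lemma is Young's inequality: for any real numbers $x, y \ge 0$ and conjugate exponents $p, q > 1$ with $1/p + 1/q = 1$,
\begin{align}
xy \le \frac{x^p}{p} + \frac{y^q}{q}.
\nonumber
\end{align}
I would prove this by the concavity of the logarithm: if $x, y > 0$, then since $\log$ is concave and $1/p + 1/q = 1$,
\begin{align}
\log\Bigl(\tfrac{1}{p} x^p + \tfrac{1}{q} y^q\Bigr)
\ge
\tfrac{1}{p}\log(x^p) + \tfrac{1}{q}\log(y^q)
=
\log(xy),
\nonumber
\end{align}
and exponentiating gives the claim; the case $x = 0$ or $y = 0$ is immediate.

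Next I would apply Young's inequality termwise with $x_i = |a_i|/A$ and $y_i = |b_i|/B$, obtaining
\begin{align}
\frac{|a_i|\,|b_i|}{A B}
\le
\frac{1}{p}\cdot\frac{|a_i|^p}{A^p}
+
\frac{1}{q}\cdot\frac{|b_i|^q}{B^q}
\nonumber
\end{align}
for each $i$. Summing over $i = 1, \dots, n$ and using $\sum_i |a_i|^p = A^p$ and $\sum_i |b_i|^q = B^q$, the right-hand side telescopes to $1/p + 1/q = 1$, so $\sum_{i=1}^{n} |a_i|\,|b_i| \le A B$. Finally, the triangle inequality $\bigl\lvert \sum_i a_i b_i \bigr\rvert \le \sum_i |a_i b_i| = \sum_i |a_i|\,|b_i|$ finishes the proof.

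There is no real obstacle here; the only point requiring a little care is the normalization step and handling the degenerate cases where one of the sums is zero, together with ensuring the concavity argument is stated only for positive arguments (with the zero cases checked separately). Everything else is a routine computation.
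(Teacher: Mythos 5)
Your proof is correct and complete: the reduction to the normalized case, Young's inequality via concavity of the logarithm, and the termwise summation are all handled properly, including the degenerate cases where one of the sums vanishes. The paper itself states this lemma without proof (it is the classical H\"older inequality for finite sums, used only as an auxiliary fact), so there is no argument in the paper to compare against; your write-up is the standard textbook derivation and fully fills the gap.
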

\begin{lemma}
   \label{lemma:inner-prod-ub}
   For any $\mathbf{A}, \mathbf{B} \in \mathbb{R}^{n \times k}$,
   \begin{align}
	  \bigl\lvert 
		 \langle \mathbf{A}, \mathbf{B} \rangle
	  \bigr\rvert 
	  \eqdef 
	  \bigl\lvert
		 \trace\bigl( \mathbf{A}^{\transpose} \mathbf{B}\bigr)
	  \bigr\rvert
	  \le
	  \|\mathbf{A}\|_{\frob}
	  \|\mathbf{B}\|_{\frob}.
	  \nonumber
   \end{align}
\end{lemma}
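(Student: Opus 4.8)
The plan is to reduce the statement to the $p=q=2$ special case of H\"older's inequality (Lemma~\ref{holder-consequence}) by viewing the $n\times k$ matrices as vectors of length $nk$. First I would expand the trace entrywise: writing $\mathbf{A} = (A_{ij})$ and $\mathbf{B} = (B_{ij})$,
\begin{align}
  \trace\bigl(\mathbf{A}^{\transpose}\mathbf{B}\bigr)
  = \sum_{j=1}^{k} \bigl(\mathbf{A}^{\transpose}\mathbf{B}\bigr)_{jj}
  = \sum_{j=1}^{k}\sum_{i=1}^{n} A_{ij} B_{ij}.
  \nonumber
\end{align}
This is a sum over the $nk$ index pairs $(i,j)$, so it can be regarded as $\sum_{\ell=1}^{nk} a_\ell b_\ell$ after an arbitrary enumeration of the pairs, with $a_\ell, b_\ell$ the corresponding entries of $\mathbf{A}$ and $\mathbf{B}$.

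Next I would apply Lemma~\ref{holder-consequence} with $n$ there replaced by $nk$ and $p=q=2$, giving
\begin{align}
  \bigl\lvert \trace\bigl(\mathbf{A}^{\transpose}\mathbf{B}\bigr) \bigr\rvert
  \le \Bigl(\sum_{i,j} A_{ij}^{2}\Bigr)^{1/2}\Bigl(\sum_{i,j} B_{ij}^{2}\Bigr)^{1/2}.
  \nonumber
\end{align}
Finally I would identify each factor on the right with the Frobenius norm, using the definition $\|\mathbf{A}\|_{\frob}^{2} = \sum_{i,j} A_{ij}^{2}$ (equivalently $\|\mathbf{A}\|_{\frob}^{2} = \trace(\mathbf{A}^{\transpose}\mathbf{A})$), and likewise for $\mathbf{B}$, which yields $\lvert\langle \mathbf{A},\mathbf{B}\rangle\rvert \le \|\mathbf{A}\|_{\frob}\|\mathbf{B}\|_{\frob}$ as claimed.

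There is essentially no obstacle here: the only thing to be careful about is the bookkeeping of collapsing the double index $(i,j)$ into a single index so that Lemma~\ref{holder-consequence} applies verbatim, and recalling that the Frobenius norm is exactly the $\ell_2$ norm of the vectorized matrix. Everything else is a one-line substitution.
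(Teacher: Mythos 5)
Your proposal is correct and follows exactly the paper's argument: the paper's proof also treats $\mathbf{A}$ and $\mathbf{B}$ as vectors and invokes Lemma~\ref{holder-consequence} with $p=q=2$, merely stating in one line what you have written out in full. The extra bookkeeping you supply (expanding the trace entrywise and identifying the resulting sums with Frobenius norms) is exactly what the paper leaves implicit.
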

\begin{proof}
 Treating $\mathbf{A}$ and $\mathbf{B}$ as vectors, 
 the lemma follows immediately from Lemma~\ref{holder-consequence} for $p=q=2$.
\end{proof}
\begin{lemma}
   \label{lemma:frob-of-matrix-prod}
   For any two real matrices $\mathbf{A}$ and $\mathbf{B}$ of appropriate dimensions,
   \begin{align}
	  \|\mathbf{A}\mathbf{B}\|_{\frob}
	  \le
	  \min\mathopen{}
	  \bigl\lbrace
		 \|\mathbf{A}\|_{2} \|\mathbf{B}\|_{\frob}, \;
		 \|\mathbf{A}\|_{\frob} \|\mathbf{B}\|_{2}
	  \bigr\rbrace.
	  \nonumber
   \end{align}
\end{lemma}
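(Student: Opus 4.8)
The plan is to prove the first bound in the minimum and then obtain the second by transposition.
For the first bound, write $\mathbf{B}$ in terms of its columns, $\mathbf{B} = [\mathbf{b}_1,\dots,\mathbf{b}_k]$, so that $\mathbf{A}\mathbf{B} = [\mathbf{A}\mathbf{b}_1,\dots,\mathbf{A}\mathbf{b}_k]$ and hence
$\|\mathbf{A}\mathbf{B}\|_{\frob}^2 = \sum_{j} \|\mathbf{A}\mathbf{b}_j\|_2^2$.
Then apply the defining property of the spectral norm, $\|\mathbf{A}\mathbf{b}_j\|_2 \le \|\mathbf{A}\|_2\,\|\mathbf{b}_j\|_2$, to each summand, and sum up:
$\|\mathbf{A}\mathbf{B}\|_{\frob}^2 \le \|\mathbf{A}\|_2^2 \sum_j \|\mathbf{b}_j\|_2^2 = \|\mathbf{A}\|_2^2\,\|\mathbf{B}\|_{\frob}^2$.
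Taking square roots gives $\|\mathbf{A}\mathbf{B}\|_{\frob} \le \|\mathbf{A}\|_2\,\|\mathbf{B}\|_{\frob}$.

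For the second bound, I would use the fact that the Frobenius norm is invariant under transposition and that the spectral norm satisfies $\|\mathbf{M}^{\transpose}\|_2 = \|\mathbf{M}\|_2$. Concretely,
$\|\mathbf{A}\mathbf{B}\|_{\frob} = \|(\mathbf{A}\mathbf{B})^{\transpose}\|_{\frob} = \|\mathbf{B}^{\transpose}\mathbf{A}^{\transpose}\|_{\frob} \le \|\mathbf{B}^{\transpose}\|_2\,\|\mathbf{A}^{\transpose}\|_{\frob} = \|\mathbf{B}\|_2\,\|\mathbf{A}\|_{\frob}$,
where the inequality is just the first bound applied to the pair $(\mathbf{B}^{\transpose},\mathbf{A}^{\transpose})$. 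Combining the two bounds yields the claim, since $\|\mathbf{A}\mathbf{B}\|_{\frob}$ is at most each of the two quantities and therefore at most their minimum.

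There is no real obstacle here; the only thing to be careful about is invoking the two elementary facts used — that $\|\mathbf{A}\mathbf{b}\|_2 \le \|\mathbf{A}\|_2\|\mathbf{b}\|_2$ for vectors (the definition of the operator norm) and that transposition preserves both norms — rather than a deeper result. If one prefers to avoid even the column decomposition, an alternative is to diagonalize $\mathbf{A}^{\transpose}\mathbf{A}$ and bound $\trace(\mathbf{B}^{\transpose}\mathbf{A}^{\transpose}\mathbf{A}\mathbf{B})$ by $\|\mathbf{A}\|_2^2\,\trace(\mathbf{B}^{\transpose}\mathbf{B})$ using $\mathbf{A}^{\transpose}\mathbf{A} \preceq \|\mathbf{A}\|_2^2\,\mathbf{I}$, but the column-wise argument is shortest.
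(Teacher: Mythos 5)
Your proof is correct and follows essentially the same route as the paper: the column-wise decomposition $\|\mathbf{A}\mathbf{B}\|_{\frob}^2 = \sum_j \|\mathbf{A}\mathbf{b}_j\|_2^2$ bounded via the operator norm, followed by transposition to obtain the second bound. No issues.
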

\begin{proof}
   Let $\mathbf{b}_{i}$ denote the $i$th column of $\mathbf{B}$. 
   Then,
   \begin{align}
	  \|\mathbf{A}\mathbf{B}\|_{\frob}^{2}
	 & =
	  \sum_{i} \| \mathbf{A} \mathbf{b}_{i} \|_{2}^{2}
	  \le
	  \sum_{i} \| \mathbf{A}\|_{2}^{2}  \|\mathbf{b}_{i} \|_{2}^{2}
	  \nonumber\\&=
	  \| \mathbf{A}\|_{2}^{2} \sum_{i}   \|\mathbf{b}_{i} \|_{2}^{2}
	  =
	  \| \mathbf{A}\|_{2}^{2} \|\mathbf{B} \|_{\frob}^{2}.
	  \nonumber 
   \end{align}
   Similarly, using the previous inequality,
   \begin{align}
	  \|\mathbf{A}\mathbf{B}\|_{\frob}^{2}
	  =
	  \|\mathbf{B}^{\transpose}\mathbf{A}^{\transpose}\|_{\frob}^{2}
	  \le
	  \|\mathbf{B}^{\transpose}\|_{2}^{2}\|\mathbf{A}^{\transpose}\|_{\frob}^{2}
	  =
	  \|\mathbf{B}\|_{2}^{2}\|\mathbf{A}\|_{\frob}^{2}.
	  \nonumber
   \end{align}
   The desired result follows combining the two upper bounds.
\end{proof}
\begin{lemma}
   \label{lemma:abs-trace-XAY-ub}
   For any real 
   $m \times k$ matrix $\X$, $m \times n$ matrix $\mathbf{A}$, and  $n \times k$ matrix $\Y$,
   \begin{align}
	  \bigl\lvert
	  \trace\bigl( \X^{\transpose} \mathbf{A} \mathbf{Y}\bigr)
	  \bigr\rvert
	  \le
	  \|\mathbf{X}\|_{\frob} \cdot \|\mathbf{A} \|_{2} \cdot \|\mathbf{Y}\|_{\frob}.
	  \nonumber
   \end{align}
\end{lemma}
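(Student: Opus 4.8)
The plan is to reduce the trilinear trace to a Frobenius inner product of two matrices and then invoke the two auxiliary lemmas already established. First I would observe that, by the definition of the Frobenius inner product,
\begin{align}
\trace\bigl(\X^{\transpose}\mathbf{A}\mathbf{Y}\bigr)
= \trace\bigl(\X^{\transpose}(\mathbf{A}\mathbf{Y})\bigr)
= \langle \X, \mathbf{A}\mathbf{Y}\rangle,
\nonumber
\end{align}
so the quantity to bound is just an inner product of the $m\times k$ matrix $\X$ with the $m\times k$ matrix $\mathbf{A}\mathbf{Y}$.

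Next I would apply Lemma~\ref{lemma:inner-prod-ub} to this inner product, giving
\begin{align}
\bigl\lvert\trace\bigl(\X^{\transpose}\mathbf{A}\mathbf{Y}\bigr)\bigr\rvert
= \bigl\lvert\langle \X, \mathbf{A}\mathbf{Y}\rangle\bigr\rvert
\le \|\X\|_{\frob}\cdot\|\mathbf{A}\mathbf{Y}\|_{\frob}.
\nonumber
\end{align}
Then I would control $\|\mathbf{A}\mathbf{Y}\|_{\frob}$ using Lemma~\ref{lemma:frob-of-matrix-prod} with the first of the two bounds in the minimum, namely $\|\mathbf{A}\mathbf{Y}\|_{\frob}\le\|\mathbf{A}\|_{2}\,\|\mathbf{Y}\|_{\frob}$. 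Substituting this into the previous display yields
\begin{align}
\bigl\lvert\trace\bigl(\X^{\transpose}\mathbf{A}\mathbf{Y}\bigr)\bigr\rvert
\le \|\X\|_{\frob}\cdot\|\mathbf{A}\|_{2}\cdot\|\mathbf{Y}\|_{\frob},
\nonumber
\end{align}
which is exactly the claimed inequality. (Symmetrically, one could instead write the trace as $\langle \mathbf{A}^{\transpose}\X,\mathbf{Y}\rangle$ and bound $\|\mathbf{A}^{\transpose}\X\|_{\frob}\le\|\mathbf{A}\|_{2}\|\X\|_{\frob}$, using $\|\mathbf{A}^{\transpose}\|_{2}=\|\mathbf{A}\|_{2}$; either route works.)

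There is no real obstacle here: the only thing to be careful about is the bookkeeping of matrix dimensions so that the inner-product and submultiplicativity lemmas apply verbatim, and making sure cyclicity of the trace is used correctly to identify $\trace(\X^{\transpose}\mathbf{A}\mathbf{Y})$ with a Frobenius inner product. The result is an immediate two-line consequence of Lemmas~\ref{lemma:inner-prod-ub} and~\ref{lemma:frob-of-matrix-prod}.
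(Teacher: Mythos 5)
Your proof is correct and follows exactly the paper's own argument: identify $\trace(\X^{\transpose}\mathbf{A}\mathbf{Y})$ with the inner product $\langle \X, \mathbf{A}\mathbf{Y}\rangle$, bound it by $\|\X\|_{\frob}\cdot\|\mathbf{A}\mathbf{Y}\|_{\frob}$ via Lemma~\ref{lemma:inner-prod-ub}, and then apply Lemma~\ref{lemma:frob-of-matrix-prod} to $\|\mathbf{A}\mathbf{Y}\|_{\frob}$. No differences worth noting.
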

\begin{proof}
   We have
   \begin{align}
	  \bigl\lvert
	  \trace\bigl( \X^{\transpose} \mathbf{A} \mathbf{Y}\bigr)
	  \bigr\rvert
	  \le
	  \|\mathbf{X}\|_{\frob} \cdot \|\mathbf{A}\mathbf{Y}\|_{\frob}
	  \le
	  \|\mathbf{X}\|_{\frob} \cdot \|\mathbf{A} \|_{2} \cdot \|\mathbf{Y}\|_{\frob},
	  \nonumber
   \end{align}
   with the first inequality following from Lemma~\ref{lemma:inner-prod-ub}
   on $\lvert \langle \X,\, \mathbf{A}\mathbf{Y}\rangle\rvert$ and the second from Lemma~\ref{lemma:frob-of-matrix-prod}.
\end{proof}
\begin{lemma}
   \label{lemma:abs-trace-XAY-ub-orthogonal}
   For any real
   $m \times n$ matrix $\mathbf{A}$,
   and pair of 
   $m \times k$ matrix $\X$ and  $n \times k$ matrix $\Y$
   such that $\X^{\transpose}\X=\mathbf{I}_{k}$ and $\Y^{\transpose}\Y=\mathbf{I}_{k}$
   with $k \le \min\lbrace {m},\; {n}\rbrace$, the following holds:
   \begin{align}
	  \bigl\lvert
	  \trace\bigl( \X^{\transpose} \mathbf{A} \mathbf{Y}\bigr)
	  \bigr\rvert
	  \le
	  \sqrt{k}\cdot 
	  \bigl( \sum_{i=1}^{k} \sigma_{i}^{2}\bigl(\mathbf{A}\bigr) \bigr)^{1/2}.
	  \nonumber
   \end{align}
\end{lemma}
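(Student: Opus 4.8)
The plan is to combine the Cauchy--Schwarz-type bound of Lemma~\ref{lemma:inner-prod-ub} with a Ky Fan trace inequality controlling $\|\mathbf{A}\mathbf{Y}\|_{\frob}$. First I would write $\trace\bigl(\X^{\transpose}\mathbf{A}\mathbf{Y}\bigr) = \langle \X,\, \mathbf{A}\mathbf{Y}\rangle$ and apply Lemma~\ref{lemma:inner-prod-ub} to obtain
\[
  \bigl\lvert \trace\bigl(\X^{\transpose}\mathbf{A}\mathbf{Y}\bigr)\bigr\rvert
  \le
  \|\X\|_{\frob}\cdot\|\mathbf{A}\mathbf{Y}\|_{\frob}.
\]
Since $\X^{\transpose}\X=\mathbf{I}_{k}$, we have $\|\X\|_{\frob}^{2}=\trace\bigl(\X^{\transpose}\X\bigr)=\trace(\mathbf{I}_{k})=k$, so the first factor is exactly $\sqrt{k}$.

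The second step is to show $\|\mathbf{A}\mathbf{Y}\|_{\frob}^{2}\le\sum_{i=1}^{k}\sigma_{i}^{2}(\mathbf{A})$. I would write $\|\mathbf{A}\mathbf{Y}\|_{\frob}^{2}=\trace\bigl(\mathbf{Y}^{\transpose}\mathbf{A}^{\transpose}\mathbf{A}\mathbf{Y}\bigr)$ and invoke the fact that, for a symmetric positive semidefinite matrix $\mathbf{M}=\mathbf{A}^{\transpose}\mathbf{A}$ with eigenvalues $\lambda_{1}\ge\cdots\ge\lambda_{n}\ge 0$ (so $\lambda_{i}=\sigma_{i}^{2}(\mathbf{A})$), the maximum of $\trace\bigl(\mathbf{Y}^{\transpose}\mathbf{M}\mathbf{Y}\bigr)$ over all $n\times k$ matrices $\mathbf{Y}$ with $\mathbf{Y}^{\transpose}\mathbf{Y}=\mathbf{I}_{k}$ equals $\sum_{i=1}^{k}\lambda_{i}$ (Ky Fan). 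If a self-contained argument is preferred, one can diagonalize $\mathbf{M}=\mathbf{W}\mathbf{\Lambda}\mathbf{W}^{\transpose}$, set $\mathbf{Z}\eqdef\mathbf{W}^{\transpose}\mathbf{Y}$ (still column-orthonormal, hence $\mathbf{Z}^{\transpose}\mathbf{Z}=\mathbf{I}_{k}$), and note
\[
  \trace\bigl(\mathbf{Y}^{\transpose}\mathbf{M}\mathbf{Y}\bigr)
  =\trace\bigl(\mathbf{Z}^{\transpose}\mathbf{\Lambda}\mathbf{Z}\bigr)
  =\sum_{j=1}^{n}\lambda_{j}\,\|\mathbf{z}^{(j)}\|_{2}^{2},
\]
where $\mathbf{z}^{(j)}$ denotes the $j$th row of $\mathbf{Z}$. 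The row-norm weights satisfy $0\le\|\mathbf{z}^{(j)}\|_{2}^{2}\le 1$ (since the columns of $\mathbf{Z}$ are orthonormal, $\mathbf{Z}\mathbf{Z}^{\transpose}$ is a projector, so its diagonal entries lie in $[0,1]$) and $\sum_{j}\|\mathbf{z}^{(j)}\|_{2}^{2}=\trace\bigl(\mathbf{Z}^{\transpose}\mathbf{Z}\bigr)=k$; a weighted average of the $\lambda_{j}$ with such weights is maximized by concentrating the mass on the $k$ largest eigenvalues, giving the bound $\sum_{i=1}^{k}\lambda_{i}=\sum_{i=1}^{k}\sigma_{i}^{2}(\mathbf{A})$.

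Multiplying the two bounds yields $\bigl\lvert\trace\bigl(\X^{\transpose}\mathbf{A}\mathbf{Y}\bigr)\bigr\rvert\le\sqrt{k}\bigl(\sum_{i=1}^{k}\sigma_{i}^{2}(\mathbf{A})\bigr)^{1/2}$, as claimed; the hypothesis $k\le\min\{m,n\}$ is exactly what guarantees that $\mathbf{A}$ has $k$ singular values and that the column-orthonormal frames in question exist. The only non-elementary ingredient is the Ky Fan maximization principle, and as sketched above it reduces to the elementary observation about weighted sums over the spectrum; I expect this to be the main (though minor) point to get right, since everything else is a direct application of the auxiliary lemmas already established.
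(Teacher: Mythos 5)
Your proposal is correct and follows essentially the same route as the paper: both first apply the Cauchy--Schwarz bound $\lvert\trace(\X^{\transpose}\mathbf{A}\mathbf{Y})\rvert\le\|\X\|_{\frob}\|\mathbf{A}\mathbf{Y}\|_{\frob}=\sqrt{k}\,\|\mathbf{A}\mathbf{Y}\|_{\frob}$, and then bound $\|\mathbf{A}\mathbf{Y}\|_{\frob}^{2}$ by $\sum_{i=1}^{k}\sigma_{i}^{2}(\mathbf{A})$ via the same weighted-spectral-sum argument (your weights $\|\mathbf{z}^{(j)}\|_{2}^{2}$ are exactly the paper's $z_{j}=\sum_{i}(\mathbf{v}_{j}^{\transpose}\mathbf{y}_{i})^{2}$, with the same constraints $0\le z_{j}\le 1$ and $\sum_{j}z_{j}=k$). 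The only cosmetic difference is that you diagonalize $\mathbf{A}^{\transpose}\mathbf{A}$ and cite Ky Fan, whereas the paper works directly with the SVD of $\mathbf{A}$ and proves the extremal statement as a standalone auxiliary lemma.
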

\begin{proof}
   By Lemma~\ref{lemma:inner-prod-ub},
   \begin{align}
	  \lvert \langle \X,\, \mathbf{A}\mathbf{Y}\rangle\rvert
	  &=
	  \bigl\lvert
	  \trace\bigl( \X^{\transpose} \mathbf{A} \mathbf{Y}\bigr)
	  \bigr\rvert
	  \nonumber\\&
	  \le
	  \|\mathbf{X}\|_{\frob} \cdot \|\mathbf{A}\mathbf{Y}\|_{\frob}
	  =
	  \sqrt{k} \cdot \|\mathbf{A}\mathbf{Y}\|_{\frob}.
	  \nonumber
   \end{align}
   where the last inequality follows from the fact that $\|\mathbf{X}\|_{\frob}^{2} = \trace\bigl(\X^{\transpose}\mathbf{X}\bigr) = \trace\bigl(\mathbf{I}_{k}\bigr) = k$.
   Further, for any $\Y$ such that $\Y^{T}\mathbf{Y}= \mathbf{I}_{k}$,
   \begin{align}
   		\|\mathbf{A} \mathbf{Y}\|_{\frob}^{2}
		\le
   		\max_{
			\substack{
				\widehat{\mathbf{Y}} \in \mathbb{R}^{n \times k}\\
				\widehat{\mathbf{Y}}^{\transpose}\widehat{\mathbf{Y}} = \mathbf{I}_{k} 
			}
		}
   		\|\mathbf{A} \widehat{\mathbf{Y}} \|_{\frob}^{2}
		=
		\sum_{i=1}^{k} \sigma_{i}^{2}(\mathbf{A}).
   \end{align}
   Combining the two inequalities, the result follows. 
\end{proof}
\begin{lemma}
   For any real
   $m \times n$ matrix $\mathbf{A}$,
   and any $k \le \min\lbrace {m},\; {n}\rbrace$, 
      \begin{align}
   		\max_{
			\substack{
				{\mathbf{Y}} \in \mathbb{R}^{n \times k}\\
				{\mathbf{Y}}^{\transpose}{\mathbf{Y}} = \mathbf{I}_{k} 
			}
		}
   		\|\mathbf{A} {\mathbf{Y}} \|_{\frob}
		=
		\left(\sum_{i=1}^{k} \sigma_{i}^{2}(\mathbf{A}) \right)^{1/2}.
		\nonumber
   \end{align}
   The above equality is realized when the $k$ columns of~$\mathbf{Y}$ coincide with the $k$ leading right singular vectors of~$\mathbf{A}$.
\end{lemma}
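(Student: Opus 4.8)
The plan is to reduce the maximization to a classical extremal property of eigenvalue sums (Ky Fan) by passing through the singular value decomposition. Write $\mathbf{A} = \mathbf{U}\boldsymbol{\Sigma}\mathbf{V}^{\transpose}$ with $\mathbf{U}\in\mathbb{R}^{m\times m}$ and $\mathbf{V}\in\mathbb{R}^{n\times n}$ orthogonal and $\boldsymbol{\Sigma}\in\mathbb{R}^{m\times n}$ diagonal with entries $\sigma_1(\mathbf{A})\ge\sigma_2(\mathbf{A})\ge\cdots\ge 0$. For a feasible $\mathbf{Y}$, i.e. $\mathbf{Y}^{\transpose}\mathbf{Y}=\mathbf{I}_k$, put $\mathbf{Z}\eqdef\mathbf{V}^{\transpose}\mathbf{Y}$; since $\mathbf{V}$ is orthogonal we get $\mathbf{Z}^{\transpose}\mathbf{Z}=\mathbf{Y}^{\transpose}\mathbf{V}\mathbf{V}^{\transpose}\mathbf{Y}=\mathbf{I}_k$, and by unitary invariance of the Frobenius norm $\|\mathbf{A}\mathbf{Y}\|_{\frob}=\|\boldsymbol{\Sigma}\mathbf{Z}\|_{\frob}$. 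The correspondence $\mathbf{Y}\mapsto\mathbf{Z}$ is a bijection between $\{\mathbf{Y}:\mathbf{Y}^{\transpose}\mathbf{Y}=\mathbf{I}_k\}$ and $\{\mathbf{Z}:\mathbf{Z}^{\transpose}\mathbf{Z}=\mathbf{I}_k\}$, with inverse $\mathbf{Z}\mapsto\mathbf{V}\mathbf{Z}$, so the two maximizations have the same optimal value and it suffices to maximize $\|\boldsymbol{\Sigma}\mathbf{Z}\|_{\frob}^2$ over matrices $\mathbf{Z}$ with orthonormal columns.

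Next I would expand, adopting the convention $\sigma_i(\mathbf{A})=0$ for $i>\min\{m,n\}$: a direct computation gives $\|\boldsymbol{\Sigma}\mathbf{Z}\|_{\frob}^2=\sum_{i=1}^n\sigma_i^2(\mathbf{A})\,\rho_i$, where $\rho_i\eqdef\sum_{j=1}^k Z_{ij}^2$ is the squared $\ell_2$-norm of the $i$-th row of $\mathbf{Z}$, equivalently the $i$-th diagonal entry of $\mathbf{Z}\mathbf{Z}^{\transpose}$. Since $\mathbf{Z}$ has orthonormal columns, $\mathbf{Z}\mathbf{Z}^{\transpose}$ is a rank-$k$ orthogonal projector: $\sum_{i=1}^n\rho_i=\trace(\mathbf{Z}\mathbf{Z}^{\transpose})=\trace(\mathbf{Z}^{\transpose}\mathbf{Z})=k$, and $\|\mathbf{Z}\mathbf{Z}^{\transpose}\|_2=\|\mathbf{Z}\|_2^2=1$ forces $0\le\rho_i\le1$ for every $i$. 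Hence the optimal value is bounded above by that of the linear program $\max\bigl\{\sum_{i=1}^n\sigma_i^2(\mathbf{A})\rho_i:\ 0\le\rho_i\le1,\ \sum_i\rho_i=k\bigr\}$.

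Finally, because the weights $\sigma_i^2(\mathbf{A})$ are in nonincreasing order, a one-line exchange argument shows this LP is maximized at $\rho_1=\dots=\rho_k=1$, $\rho_{k+1}=\dots=\rho_n=0$, with value $\sum_{i=1}^k\sigma_i^2(\mathbf{A})$; this gives the ``$\le$'' direction. For ``$\ge$'', take $\mathbf{Z}$ to be the first $k$ columns of $\mathbf{I}_n$: then $\mathbf{Z}^{\transpose}\mathbf{Z}=\mathbf{I}_k$, $\rho_i=1$ for $i\le k$ and $\rho_i=0$ otherwise, and $\|\boldsymbol{\Sigma}\mathbf{Z}\|_{\frob}^2=\sum_{i=1}^k\sigma_i^2(\mathbf{A})$, so equality is attained; pulling back through the bijection, $\mathbf{Y}=\mathbf{V}\mathbf{Z}$ is exactly the matrix whose columns are the $k$ leading right singular vectors of $\mathbf{A}$, as stated. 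Taking square roots completes the proof. There is no genuine obstacle here; the only points that demand a little care are keeping the rectangular shape of $\boldsymbol{\Sigma}$ straight and verifying that $\mathbf{Y}\mapsto\mathbf{V}^{\transpose}\mathbf{Y}$ maps \emph{onto} the feasible set for $\mathbf{Z}$, so that reducing to the $\rho_i$-LP loses no extremal configuration.
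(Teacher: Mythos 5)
Your proof is correct and follows essentially the same route as the paper's: pass through the SVD, use unitary invariance to reduce to $\|\boldsymbol{\Sigma}\mathbf{Z}\|_{\frob}$, and observe that the squared row norms of $\mathbf{Z}=\mathbf{V}^{\transpose}\mathbf{Y}$ (your $\rho_i$, the paper's $z_j$) lie in $[0,1]$ and sum to $k$, so the objective is a convex combination dominated by $\sum_{i=1}^k\sigma_i^2(\mathbf{A})$, attained at the leading right singular vectors. The only cosmetic difference is that you bound $\rho_i\le 1$ via the projector $\mathbf{Z}\mathbf{Z}^{\transpose}$ while the paper uses the Bessel-type bound $\sum_i(\mathbf{v}_j^{\transpose}\mathbf{y}_i)^2\le\|\mathbf{v}_j\|^2$.
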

\begin{proof}
	Let $\mathbf{U}\mathbf{\Sigma}\mathbf{V}^{\transpose}$ be the singular value decomposition of $\mathbf{A}$,
   with
   $\Sigma_{jj} = \sigma_{j}$ being the $j$th largest singular value of $\mathbf{A}$, $j=1, \hdots, d$, where $d \eqdef \min\lbrace {m}, {n} \rbrace$.
	Due to the invariance of the Frobenius norm under unitary multiplication,
   \begin{align}
   		\|\mathbf{A} \mathbf{Y}\|_{\frob}^{2}
		=
		\|\mathbf{U}\mathbf{\Sigma}\mathbf{V}^{\transpose} \mathbf{Y}\|_{\frob}^{2}
		=
		\|\mathbf{\Sigma}\mathbf{V}^{\transpose} \mathbf{Y}\|_{\frob}^{2}.
		\label{frob-norm-unitary}
   \end{align}
   Continuing from~\eqref{frob-norm-unitary},
   \begin{align}
		\|\mathbf{\Sigma}\mathbf{V}^{\transpose} \mathbf{Y}\|_{\frob}^{2}
		&=
		\trace\bigl(\mathbf{Y}^{\transpose}\mathbf{V}\mathbf{\Sigma}^{2}\mathbf{V}^{\transpose} \mathbf{Y}\bigr)
		\nonumber\\
		&=
		\sum_{i=1}^{k} \mathbf{y}_{i}^{\transpose} 
			\left( 
				\sum_{j=1}^{d} \sigma_{j}^{2} \cdot \mathbf{v}_{j} \mathbf{v}_{j}^{\transpose}
			\right)
			\mathbf{y}_{i}
		\nonumber\\
		&=
		\sum_{j=1}^{d} 
			\sigma_{j}^{2} \cdot \sum_{i=1}^{k}
			\left( \mathbf{v}_{j}^{\transpose} \mathbf{y}_{i}\right)^{2}.
		\nonumber
   \end{align}
   Let $z_{j} \eqdef \sum_{i=1}^{k} \bigl( \mathbf{v}_{j}^{\transpose} \mathbf{y}_{i}\bigr)^{2}$, $j=1, \hdots, d$.
   Note that each individual $z_{j}$ satisfies
   \begin{align}
   		 0 \le z_{j} \eqdef \sum_{i=1}^{k} \bigl( \mathbf{v}_{j}^{\transpose} \mathbf{y}_{i}\bigr)^{2}
		\le
		\|\mathbf{v}_{j}\|^{2} = 1,
		\nonumber
   \end{align}
   where the last inequality follows from the fact that the columns of $\mathbf{Y}$ are orthonormal.
   Further,
   \begin{align}
   		\sum_{j=1}^{d} z_{j} 
		&=  
		\sum_{j=1}^{d}\sum_{i=1}^{k} \bigl( \mathbf{v}_{j}^{\transpose} \mathbf{y}_{i}\bigr)^{2}
		=
		\sum_{i=1}^{k}\sum_{j=1}^{d} \bigl( \mathbf{v}_{j}^{\transpose} \mathbf{y}_{i}\bigr)^{2}
		\nonumber\\&=
		\sum_{i=1}^{k}\|\mathbf{y}_{i}\|^{2} = k.
		\nonumber
   \end{align}
   Combining the above, we conclude that
   \begin{align}
		\|\mathbf{A}\mathbf{Y}\|_{\frob}^{2}
		=
		\sum_{j=1}^{d} 
			\sigma_{j}^{2} \cdot z_{j}
		\le \sigma_{1}^{2} + \hdots + \sigma_{k}^{2}.
		\label{frob-prod-ub}
   \end{align}
   Finally, it is straightforward to verify that if $\mathbf{y}_{i} = \mathbf{v}_{i}$, $i=1, \hdots, k$, then~\eqref{frob-prod-ub} holds with equality.
\end{proof}
\begin{lemma}
   \label{lemma:partial-sing-val-sum}
   For any real $m \times n$ matrix $\mathbf{A}$, 
   let $\sigma_{i}(\mathbf{A})$ be the $i$th largest singular value.
   For any $r, k \le \min\lbrace m, n \rbrace$,
   \begin{align}
	  \sum_{i=r+1}^{r + k} \sigma_{i}(\mathbf{A})
	  \le
	  \frac{k}{\sqrt{r+k}} \|\mathbf{A}\|_{\frob}.
	  \nonumber
   \end{align} 
\end{lemma}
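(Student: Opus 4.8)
The plan is to split the bound into an $\ell_2$-to-$\ell_1$ step and a spectral-tail step. First I would invoke Cauchy--Schwarz --- concretely Lemma~\ref{holder-consequence} with $p=q=2$ applied to the $k$ nonnegative numbers $\sigma_{r+1}(\mathbf{A}),\dots,\sigma_{r+k}(\mathbf{A})$ and the all-ones vector --- to obtain
\[
\sum_{i=r+1}^{r+k}\sigma_i(\mathbf{A})
\;\le\;
\sqrt{k}\,\Bigl(\sum_{i=r+1}^{r+k}\sigma_i^2(\mathbf{A})\Bigr)^{1/2}.
\]
It then remains to establish the tail bound $\sum_{i=r+1}^{r+k}\sigma_i^2(\mathbf{A}) \le \tfrac{k}{r+k}\,\|\mathbf{A}\|_{\frob}^2$, after which the claim follows by multiplying the two estimates, since $\sqrt{k}\cdot\sqrt{k/(r+k)} = k/\sqrt{r+k}$.

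For the tail bound I would exploit only the ordering $\sigma_1(\mathbf{A})\ge\sigma_2(\mathbf{A})\ge\cdots\ge 0$ together with $\sum_{i\ge 1}\sigma_i^2(\mathbf{A}) = \|\mathbf{A}\|_{\frob}^2$. Averaging over the first $r$ squared singular values and over the next $k$ gives
\[
\frac{1}{r}\sum_{i=1}^{r}\sigma_i^2(\mathbf{A})
\;\ge\;
\sigma_r^2(\mathbf{A})
\;\ge\;
\sigma_{r+1}^2(\mathbf{A})
\;\ge\;
\frac{1}{k}\sum_{i=r+1}^{r+k}\sigma_i^2(\mathbf{A}),
\]
hence $\sum_{i=1}^{r}\sigma_i^2(\mathbf{A}) \ge \tfrac{r}{k}\sum_{i=r+1}^{r+k}\sigma_i^2(\mathbf{A})$. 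Adding $\sum_{i=r+1}^{r+k}\sigma_i^2(\mathbf{A})$ to both sides and using $\sum_{i=1}^{r+k}\sigma_i^2(\mathbf{A}) \le \|\mathbf{A}\|_{\frob}^2$ yields $\tfrac{r+k}{k}\sum_{i=r+1}^{r+k}\sigma_i^2(\mathbf{A}) \le \|\mathbf{A}\|_{\frob}^2$, which is exactly the desired tail bound. Combining the two steps completes the proof.

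Two small points I would address for rigor. If $r+k > \min\{m,n\}$, the singular values with index exceeding $\min\{m,n\}$ vanish, so one harmlessly treats the list as padded with zeros; the monotonicity chain above is unaffected (the middle inequality still holds, now with some zero summands). I do not expect any genuine obstacle here; the only thing to get right is the \emph{direction} of the averaging step --- one must compare the tail block of $k$ squared singular values against the \emph{head block} of $r$ of them (which is what produces the factor $(r+k)/k$), rather than bounding each $\sigma_{r+i}^2(\mathbf{A})$ by $\|\mathbf{A}\|_{\frob}^2/(r+i)$ directly, as the latter only yields the weaker constant $k/\sqrt{r+1}$.
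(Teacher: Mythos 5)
Your proof is correct and follows essentially the same route as the paper's: Cauchy--Schwarz to pass from $\sum\sigma_i$ to $\bigl(\sum\sigma_i^2\bigr)^{1/2}$, followed by the observation that the $k$ trailing squared singular values average to at most the average over the leading $r+k$, giving the factor $k/(r+k)$. Your explicit head-versus-tail averaging chain just spells out the step the paper states in one line, and your remark about padding with zeros when $r+k>\min\{m,n\}$ is a harmless (and welcome) clarification.
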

\begin{proof}
   By the Cauchy-Schwartz inequality,
   \begin{align}
	  \sum_{i=r+1}^{r + k} \sigma_{i}(\mathbf{A})
	  &=
	  \sum_{i=r+1}^{r + k} \lvert\sigma_{i}(\mathbf{A})\rvert
	  \le
	  \left( \sum_{i=r+1}^{r + k} \sigma_{i}^{2}(\mathbf{A}) \right)^{1/2}
	  \|\mathbf{1}_{k}\|_{2}
	  \nonumber\\&=
	  \sqrt{k} \cdot \left( \sum_{i=r+1}^{r + k} \sigma_{i}^{2}(\mathbf{A}) \right)^{1/2}.
	  \nonumber
   \end{align}
   Note that $\sigma_{r+1}(\mathbf{A}), \hdots, \sigma_{r+k}(\mathbf{A})$ are the $k$ smallest among the $r+k$ largest singular values. 
   Hence,
   \begin{align}
	  \sum_{i=r+1}^{r + k} \sigma_{i}^{2}(\mathbf{A})
	  &\le
	  \frac{k}{r+k}\sum_{i=1}^{r + k} \sigma_{i}^{2}(\mathbf{A})
	  \le
	  \frac{k}{r+k} \sum_{i=1}^{l} \sigma_{i}^{2}(\mathbf{A})
	  \nonumber\\&=
	  \frac{k}{r+k} \|\mathbf{A}\|_{\frob}^{2}.
	  \nonumber
   \end{align}
   Combining the two inequalities, the desired result follows.
\end{proof}
\begin{corollary}
	\label{sigma-bound}
   For any real $m \times n$ matrix $\mathbf{A}$,
   the $r$th largest singular value $\sigma_{\mathrm{r}}(\mathbf{A})$ satisfies
   $\sigma_{\mathrm{r}}(\mathbf{A}) \le  \|\mathbf{A}\|_{\frob} / \sqrt{r}$.
\end{corollary}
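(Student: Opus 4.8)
The plan is to deduce the bound directly from the identity $\|\mathbf{A}\|_{\frob}^{2} = \sum_{i} \sigma_{i}^{2}(\mathbf{A})$, without needing the full strength of Lemma~\ref{lemma:partial-sing-val-sum}. Write $d \eqdef \min\lbrace m, n\rbrace$ and let $\sigma_{1}(\mathbf{A}) \ge \sigma_{2}(\mathbf{A}) \ge \cdots \ge \sigma_{d}(\mathbf{A}) \ge 0$ denote the singular values of $\mathbf{A}$ in nonincreasing order, so that $\sigma_{r}(\mathbf{A})$ is the one named in the statement.

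First I would observe that, because $\sigma_{r}(\mathbf{A})$ is the smallest among the $r$ largest singular values, each of $\sigma_{1}(\mathbf{A}), \dots, \sigma_{r}(\mathbf{A})$ is at least $\sigma_{r}(\mathbf{A})$; hence $r\,\sigma_{r}^{2}(\mathbf{A}) \le \sum_{i=1}^{r}\sigma_{i}^{2}(\mathbf{A})$. Next, since the remaining terms $\sigma_{r+1}^{2}(\mathbf{A}), \dots, \sigma_{d}^{2}(\mathbf{A})$ are nonnegative, the partial sum is bounded by the full sum: $\sum_{i=1}^{r}\sigma_{i}^{2}(\mathbf{A}) \le \sum_{i=1}^{d}\sigma_{i}^{2}(\mathbf{A}) = \|\mathbf{A}\|_{\frob}^{2}$. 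Combining the two inequalities gives $r\,\sigma_{r}^{2}(\mathbf{A}) \le \|\mathbf{A}\|_{\frob}^{2}$, and dividing by $r$ and taking square roots yields $\sigma_{r}(\mathbf{A}) \le \|\mathbf{A}\|_{\frob}/\sqrt{r}$, as claimed.

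Alternatively, one can obtain the statement as an immediate consequence of Lemma~\ref{lemma:partial-sing-val-sum}: applying that lemma with its parameter ``$r$'' set to $0$ and ``$k$'' set to $r$ gives $\sum_{i=1}^{r}\sigma_{i}(\mathbf{A}) \le \sqrt{r}\,\|\mathbf{A}\|_{\frob}$, and since $r\,\sigma_{r}(\mathbf{A}) \le \sum_{i=1}^{r}\sigma_{i}(\mathbf{A})$ by the ordering of the singular values, the bound follows. There is no real obstacle here; the only point requiring a moment's care is the index clash between the corollary's $r$ and the parameters of the auxiliary lemma, which is purely cosmetic.
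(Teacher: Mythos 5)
Your proposal is correct. The paper's own proof is a one-liner---``it follows immediately from Lemma~\ref{lemma:partial-sing-val-sum}''---and your second paragraph reproduces exactly that route (your instantiation with the lemma's parameters set to $0$ and $r$ works, as does the even more direct choice of $r-1$ and $1$, which gives $\sigma_r(\mathbf{A}) \le \|\mathbf{A}\|_{\frob}/\sqrt{r}$ with no extra step). Your first argument is a mild but genuine simplification: by working with squared singular values and the identity $\|\mathbf{A}\|_{\frob}^2 = \sum_i \sigma_i^2(\mathbf{A})$, you get $r\,\sigma_r^2(\mathbf{A}) \le \|\mathbf{A}\|_{\frob}^2$ from monotonicity alone, bypassing the Cauchy--Schwarz step that the paper's auxiliary lemma needs to handle sums of unsquared singular values. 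Both arguments rest on the same averaging idea; the direct one is self-contained and arguably the cleaner way to state this particular corollary, while the paper's phrasing buys nothing extra here beyond reusing a lemma it already proved for other purposes.
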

\begin{proof}
   It follows immediately from Lemma~\ref{lemma:partial-sing-val-sum}.
\end{proof}

First, we define the $\|\cdot\|_{\infty,2}$ norm of a matrix as the $l_{2}$ norm of the column with the maximum $l_{2}$ norm, \textit{i.e.}, for an $r \times k$ matrix $\mathbf{C}$
$$
   \|\mathbf{C}\|_{\infty, 2} = \max_{1 \le i \le k} \|\mathbf{c}_{i}\|_{2}.
$$
Note that
\begin{align}
   \|\mathbf{C}\|_{\frob}^{2}
   &=
   \sum_{i=1}^{k}\|\mathbf{c}_{i}\|_{2}^{2}
   \le
   k \cdot \max_{1 \le i \le k} \|\mathbf{c}_{i}\|_{2}^{2}
   \nonumber\\&=
   k \cdot \left( \max_{1 \le i \le k} \|\mathbf{c}_{i}\|_{2} \right)^{2}
   =
   k \cdot \|\mathbf{C}\|_{\infty, 2}.
\end{align}

\clearpage\newpage

\end{document}